\newcommand{\alg}[1]{\mathfrak{#1}}
\newcommand{\dhr}{\Delta_{\textrm{DHR}}}
\newcommand{\bfm}{\Delta_{\textrm{BF}}}
\newcommand{\id}{\operatorname{id}}
\newcommand{\Ad}{\operatorname{Ad}}
\newcommand{\Hom}{\operatorname{Hom}}
\newcommand{\End}{\operatorname{End}}
\newcommand{\lspan}{\operatorname{span}}
\newcommand{\spc}[1]{\mathscr{#1}}
\newcommand{\mc}[1]{\mathcal{#1}}
\providecommand{\norm}[1]{\ensuremath{\lVert#1\rVert}}
\theoremstyle{plain}
\newtheorem{theorem}{Theorem}
\numberwithin{theorem}{section}
\newtheorem{lemma}[theorem]{Lemma}
\newtheorem{prop}[theorem]{Proposition}
\newtheorem{corollary}[theorem]{Corollary}
\newtheorem{remark}[theorem]{Remark}
\numberwithin{equation}{section}
\newtheorem{definition}[theorem]{Definition}
\newtheorem*{property}{Property B}
\title[Extension of stringlike localised sectors in $d=2+1$]{On the extension of stringlike localised sectors in 2+1 dimensions}
\author{Pieter Naaijkens}
\email{p.naaijkens@math.ru.nl}
\date{\today}
\subjclass[2010]{81T05, 18D10}
\address{Radboud University Nijmegen \\
IMAPP \\
FNWI \\
Heyendaalseweg 135 \\
6525 AJ Nijmegen \\
The Netherlands}
\begin{document}
\begin{abstract}
	In the framework of algebraic quantum field theory, we study the category $\Delta_{\textrm{BF}}^{\mathfrak{A}}$ of stringlike localised representations of a net of observables $\mathcal{O} \mapsto \mathfrak{A}(\mathcal{O})$ in three dimensions. It is shown that compactly localised (DHR) representations give rise to a non-trivial centre of $\Delta_{\textrm{BF}}^{\mathfrak{A}}$ with respect to the braiding. This implies that $\Delta_{\textrm{BF}}^{\mathfrak{A}}$ cannot be modular when non-trivial DHR sectors exist. Modular tensor categories, however, are important for topological quantum computing. For this reason, we discuss a method to remove this obstruction to modularity.

	Indeed, the obstruction can be removed by passing from the observable net $\mathfrak{A}(\mathcal{O})$ to the Doplicher-Roberts field net $\mathfrak{F}(\mathcal{O})$. It is then shown that sectors of $\mathfrak{A}$ can be extended to sectors of the field net that commute with the action of the corresponding symmetry group. Moreover, all such sectors are extensions of sectors of $\mathfrak{A}$. Finally, the category $\Delta_{\textrm{BF}}^{\mathfrak{F}}$ of sectors of $\mathfrak{F}$ is studied by investigating the relation with the categorical crossed product of $\Delta_{\textrm{BF}}^{\mathfrak{A}}$ by the subcategory of DHR representations. Under appropriate conditions, this completely determines the category $\Delta_{\textrm{BF}}^{\mathfrak{F}}$.
\end{abstract}

\maketitle
\thispagestyle{empty}
\section{Introduction}
The study of superselection sectors and particle statistics has been a long-standing subject in algebraic quantum field theory~\cite{MR1405610}. Superselection sectors can be described as representations of a local net $\mathcal{O} \mapsto \alg{A}(\mc{O})$ of observables. The physically relevant representations are selected by a certain selection criterion. A superselection sector, then, is a (unitary) equivalence class of representations satisfying this criterion. These representations can be shown to have the structure of a tensor category resembling the category of representations of a compact group. In this category, one can define a braiding, closely related to the statistics of sectors.

It is well known that for the compactly localised representations first considered by Doplicher, Haag and Roberts, the braiding is in fact symmetric in spacetimes of dimension three or higher~\cite{MR1016869}. However, if one considers the weaker condition of localisation in some ``fattening string'' extending to spacelike infinity, the braiding is non-symmetric for spacetimes of dimension 3 or less~\cite{MR1104414}. Buchholz and Fredenhagen have shown that for massive particle states, this localisation condition holds~\cite{MR660538}.

The category of such stringlike localised representations in three dimensions automatically satisfies most of the axioms of a modular tensor category~\cite{MR1797619,MR1292673}. This class of tensor categories plays a prominent role in the theory of \emph{topological quantum computation}, see e.g.~\cite{MR1943131,MR1910833,MR1951039,MR2200691,PanangadenPaquette}. A good review can be found in~\cite{MR2443722}. This is one of the reasons why modular tensor categories are interesting, providing a reason to investigate if we can obtain modular tensor categories from algebraic quantum field theory. Another part of the motivation is provided by related constructions and results in e.g.~\cite{MR1838752,MR2183964,MR1128146}, where the extension of compactly localised representations in $d=1+1$ is discussed.

First, we give a brief overview of the basics of algebraic quantum field theory (AQFT), also called \emph{local quantum physics}. The leading idea in AQFT is that local algebras of observables encode all relevant information of a given physical theory. For each double cone $\mathcal{O}$ in Minkowski space $\mathbb{M}^3$ there is an associated unital $C^*$-algebra $\alg{A}(\mc{O})$ of observables, which are said to be localised in $\mc{O}$. This assignment of observable algebras should satisfy the following properties:
\begin{enumerate}
	\item \emph{Isotony:} if $\mc{O}_2 \subset \mc{O}_2$ then $\alg{A}(\mc{O}_1) \subset \alg{A}(\mc{O}_2)$. We assume the inclusions are injective unital $*$-homomorphisms.
	\item \emph{Locality:} if $\mc{O}_1$ is spacelike separated from $\mc{O}_2$, then the associated local observable algebras commute.
	\item \emph{Translation covariance:} there is a strongly continuous action $x \mapsto \beta_x$ of the translation group $\mathbb{M}^3$ on the local algebras, such that $\beta_x(\alg{A}(\mc{O})) = \alg{A}(\mc{O}+x)$.
\end{enumerate}
To avoid the trivial case we assume in addition that for each double cone $\mc{O}$ the algebra $\alg{A}(\mc{O})$ contains an element that is not a multiple of the identity. Note that the set of double cones in $\mathbb{M}^3$ is directed by inclusion. The inductive limit of this net in the category of $C^*$-algebras is denoted by $\alg{A}$ and is called the \emph{quasi-local algebra}. By means of a specific faithful irreducible representation $\pi_0: \alg{A} \to \alg{B}(\mc{H}_0)$, typically the vacuum representation, $\alg{A}$ is represented as a net of bounded operators on a Hilbert space $\mc{H}_0$. It is then natural to consider $\pi_0(\alg{A}(\mc{O}))''$ for each $\mc{O}$, where the prime denotes the commutant. This leads to net of \emph{von Neumann algebras}, which we will again denote by $\alg{A}(\mc{O})$. This net turns out to be more convenient to work with, and thus we will from now on assume that $\alg{A}(\mc{O})$ is a von Neumann algebra for each $\mc{O}$. The algebra $\alg{A}$ again will be the \emph{norm} closure of the union of these local (von Neumann) algebras. Note that $\alg{A}$ is not a von Neumann algebra in general.

The vacuum representation $\pi_0$ must satisfy a few additional conditions. It should be covariant under translations, say with a strongly continuous group of unitaries $\mc{U}_0(x)$, $x \in \mathbb{M}^3$. There is a unique (up to a phase) vacuum vector $\Omega$ such that $\mc{U}_0(x) \Omega = \Omega$ for all $x$. Moreover, the \emph{spectrum condition} for the generators of translations should hold: the joint spectrum of the generators of the translations should be contained in the forward lightcone $\overline{V^{+}}$. For details and motivations see e.g.~\cite{MR1768634}. Buchholz and Fredenhagen provide a construction that, given a massive single particle representation, produces a corresponding vacuum representation $\pi_0$ satisfying these criteria~\cite{MR660538}.

A superselection sector is then a unitary equivalence class of representations of $\alg{A}$ satisfying a certain (physically motivated) selection criterion. For example, Buchholz and Fredenhagen were led to consider stringlike localised sectors~\cite{MR660538}. The category of these representations, denoted by $\bfm^{\alg{A}}$, has a very rich structure. An essential ingredient in the analysis of this structure is the axiom of \emph{Haag duality}, which strengthens locality. If $\spc{S}$ is some unbounded region of spacetime, the $C^*$-algebra $\alg{A}(\spc{S})$ is defined by
\[
\alg{A}(\spc{S}) = \overline{\bigcup_{\mc{O} \subset \spc{S}} \alg{A}(\mc{O})}^{\|\cdot\|},
\]
where the closure in norm is taken and the union is taken over all double cones contained in $\spc{S}$. Suppose $\spc{S}$ is any connected causally complete region, that is, $\spc{S} = (\spc{S}')'$, where the prime denotes taking the causal complement. Haag duality then is the condition that
\begin{equation}
  \label{eq:haagd}
  \pi_0(\alg{A}(\spc{S}'))' = \pi_0(\alg{A}(\spc{S}))''.
\end{equation}
Here the prime in $\spc{S}'$ denotes taking the causal complement, whereas the other primes stand for the commutant. We will only need this duality relation in the case where $\spc{S}$ is either a double cone or a spacelike cone. Haag duality has been proven for free fields~\cite{MR0160487}, but to the knowledge of the author no result is known (in $d=2+1$) for interacting fields.

Every representation in $\bfm^{\alg{A}}$ can be described as an endomorphism of some algebra $\alg{A}^{\spc{S}_a}$ containing $\alg{A}$ as a subalgebra. The category $\bfm^{\alg{A}}$ then can be equipped with a tensor product defined by composition of such endomorphisms. As mentioned before, a particularly interesting feature is that it is in fact a braided tensor category. In three dimensions, the DHR sectors, which are localised in \emph{bounded} regions, form a degenerate tensor subcategory of $\bfm^{\alg{A}}$ with respect to the braiding: the braiding with objects from this subcategory reduces to a symmetry. By a result of Rehren, this implies that the category $\bfm^{\alg{A}}$ cannot be modular~\cite{MR1147467,MR1128146}. The basic idea now is to pass to the field net $\alg{F}$, as constructed by Doplicher and Roberts~\cite{MR1062748}. 

The \emph{field net} is a net of algebras that generate the different superselection sectors by acting on the vacuum. It is endowed with an action of a compact group $G$ of symmetries (sometimes called the gauge group). The observables are precisely those elements of the field algebra that are invariant under the action of this symmetry group. At the end of the 1980s, Doplicher and Roberts solved a long-standing problem in algebraic quantum field theory, namely how to construct the group $G$ and the corresponding field net from the observable algebra~\cite{MR1062748}. Their investigations led to a new duality theory for compact groups~\cite{MR1010160}, on which we will elaborate below. It is important to note however that these constructions only work if all sectors have \emph{permutation} statistics. In the \emph{braided} case, instead of a group one expects an object with a (quasi-)Hopf algebra-like structure, see for example~\cite{MR1155286,MR1234107}, or even a more general notion of symmetry~\cite{Kowalzig}.

In the special case where $\mc{A}$ has no fermionic DHR sectors, we can interpret $\mc{O} \mapsto \alg{F}(\mc{O})$ as a new AQFT. Conti, Doplicher and Roberts have shown that the field net does not have any non-trivial representations satisfying the DHR criterion any more~\cite{MR1828981}. The theory $\alg{F}$ is an extension of $\alg{A}$, in the sense that any stringlike localised representation of $\alg{A}$ can be extended to a representation of $\alg{F}$ with the same localisation properties. This extension factors through the categorical crossed product $\bfm^{\alg{A}} \rtimes \dhr^{\alg{A}}$ of~\cite{MR1749250}. Under certain conditions, this crossed product is in fact equivalent, in the categorical sense, to the category $\bfm^{\alg{F}}$. This makes it possible to understand the latter completely in terms of the original theory $\mc{O} \mapsto \alg{A}(\mc{O})$. To summarise, the obstruction for modularity is removed by passing from a theory $\alg{A}$ to a new theory $\alg{F}$ that extends $\alg{A}$ in a systematic way.

Although some constructions in this paper are motivated by results in $d=1+1$, there are also some notable differences with the case $d=2+1$ considered in the present work. In $d=2+1$, passing from a net $\alg{F}$ to the fixpoint theory $\alg{A} = \alg{F}^G$ with respect to the action of some group $G$ introduces DHR sectors, which are automatically degenerate in $d=2+1$. In $d=1+1$, DHR sectors also appear when passing to the fixpoint net. In this case, however, they are never degenerate, at least not if the symmetry group $G$ is finite and the theory is ``completely rational''~\cite{MR1838752}. In that situation there appear automatically ``twisted'' sectors which prevent degeneracy of the new DHR sectors in the fixpoint theory~\cite{MR2183964}.

\bigskip
The paper is organised as follows. In Section~\ref{sec:bf}, the basic structure of stringlike localised sectors in three dimensions is recalled. The next section is concerned with the construction of the field net $\alg{F}$, and it is shown that this can be interpreted as a new AQFT without DHR sectors. Section~\ref{sec:extend} then discusses how stringlike localised sectors of our original theory $\alg{A}$ can be extended to the new theory $\alg{F}$. Section~\ref{sec:restrict} deals with the reverse problem of restricting sectors that are invariant under the action of the symmetry group, using results from the theory of non-abelian cohomology. In the last part of the paper, it is investigated how these results are related to the purely mathematical theory of crossed products of braided tensor categories by symmetric subcategories. This gives a better understanding of the sectors of the new theory in terms of those of the old theory. In particular, conditions are given under which \emph{all} sectors of $\alg{F}$ are related to the sectors of $\alg{A}$. In the last section, the main results are summarised and some open problems are indicated. Some terminology regarding category theory and algebraic quantum field theory, which will be used throughout the article, is recollected in an appendix.

\section{Stringlike localised sectors}
\label{sec:bf}
In algebraic quantum field theory a superselection criterion identifies the physically relevant representations of the observable algebra. Usually one selects those representations $\pi$ that cannot be distinguished from the vacuum representation $\pi_0$ in the spacelike complement of some causally complete region. The selection criterion used by Doplicher, Haag and Roberts (DHR) requires that the relevant representations $\pi$ satisfy, for each double cone $\mathcal{O}$, 
\begin{equation}
	\label{eq:selectdhr}
  \pi \upharpoonright \alg{A}(\mc{O}') \cong \pi_0\upharpoonright\alg{A}(\mc{O}').
\end{equation}
That is, $\pi$ is unitarily equivalent to the vacuum representation when restricted to observables in the causal complement of an arbitrary double cone. The structure of the DHR superselection sectors is well understood, see e.g.~\cite{MR1405610,halvapp} for reviews. A DHR representation is of the form $\pi \cong \pi_0 \circ \rho$, where $\rho$ is an endomorphism\footnote{All (endo)morphisms and representations are assumed to be unital and to preserve the $^*$-operation, unless stated otherwise.} of $\alg{A}$ that acts trivially on $\alg{A}(\mc{O}')$ for some $\mc{O}$. Such an endomorphism is said to be \emph{localised} in $\mc{O}$. Furthermore, $\rho$ is \emph{transportable}, in the sense that for any double cone $\widehat{\mc{O}}$ there is a morphism $\widehat{\rho}$ localised in $\widehat{\mc{O}}$, unitarily equivalent to $\rho$. Localised transportable endomorphisms can be regarded as objects of a braided tensor category. 

However, the criterion~\eqref{eq:selectdhr} is too narrow for many physical applications. For example, consider the case of an electrically charged particle. Then, by Gauss' theorem, it is possible to measure the electric  flux through a surface at arbitrary large distance. This implies that the presence of an electric charge can be detected at arbitrarily large distances, i.e., there is no double cone $\mc{O}$ such that the state cannot be distinguished from the vacuum in the spacelike complement of this $\mc{O}$. See~\cite{MR667767} for a discussion of states in QED. This is why Buchholz and Fredenhagen consider a more general selection criterion~\cite{MR660538}, namely
\begin{equation}
  \label{eq:selectbf}
  \pi\upharpoonright\alg{A}(\spc{C}') \cong \pi_0\upharpoonright\alg{A}(\spc{C}'),
\end{equation}
for each \emph{spacelike cone} $\spc{C}$ in the following sense: 
\begin{definition}
	A \emph{spacelike cone} is a set $\spc{C} = x + \bigcup_{\lambda > 0} \lambda \cdot \mc{O}$, for some double cone $\mc{O}$ not containing the origin, and $x \in \mathbb{M}^d$. Moreover, we demand that $\spc{C}$ is causally complete\footnote{Buchholz and Fredenhagen do not demand that $\spc{C}$ is causally complete~\cite{MR660538}. However, in view of our definition of Haag duality, it is more natural to consider only causally complete spacelike cones. See the Appendix to~\cite{MR1062748} for an alternative, but equivalent, definition.}, i.e., $\spc{C} = \spc{C}''$.
\end{definition}
Such a spacelike cone can be visualised as a semi-infinite string that becomes thicker and thicker when moving towards spacelike infinity. Since again this criterion means that such representations cannot be distinguished from the vacuum in the spacelike complement of a spacelike cone, such representations are called \emph{localisable in cones}. We will call the equivalence class of such a representation a \emph{BF sector}, and call a representative a \emph{BF representation}.

Buchholz and Fredenhagen show that in a relativistic quantum field theory \emph{massive single-particle representations} always have such localisation properties. Roughly speaking, a \emph{massive representation} is a representation that is covariant under translation (covariance under the full Poincar\'e group is not required). Moreover, the joint spectrum of the generators of the translations is bounded away from zero and contains an isolated mass shell, separated by a gap from the rest of the spectrum.

There are several methods to study the superselection structure of charges localised in spacelike cones (also called ``topological charges''). Recall that we identified $\pi_0(\alg{A})$ with $\alg{A}$. Contrary to the case of DHR sectors, BF sectors cannot be described in terms of endomorphisms of the quasi-local algebra $\alg{A}$. Instead, the representations map cone algebras $\alg{A}(\spc{C})$ to weak closures of the algebra, that is, $\eta(\alg{A}(\spc{C})) \subset \alg{A}(\spc{C})''$ if $\eta$ is localised in a spacelike cone $\widehat{\spc{C}} \subset \spc{C}$. For double cones $\mc{O}$ there is the inclusion $\alg{A}(\mc{O})'' \subset \alg{A}$ (recall that the local algebras are assumed to be von Neumann algebras), but for spacelike cones in general the weak closure $\alg{A}(\spc{C})''$ is not contained in $\alg{A}$. This implies that BF representations do not map $\alg{A}$ into $\alg{A}$, as is the case in the DHR situation, but into some larger algebra. This situation is rather inconvenient, but fortunately this problem can be solved by introducing an auxiliary algebra~\cite{MR660538}. The BF representations can be extended to proper endomorphisms of this auxiliary algebra. At the end of this section we comment on some other approaches.

To motivate the introduction of the auxiliary algebra, consider a BF representation $\pi$ and spacelike cone $\spc{C}$. By the selection criterion~\eqref{eq:selectbf} there is a unitary $V$ such that $\pi_0(A) = V \pi(A) V^*$ for all $A \in \alg{A}(\spc{C}')$. Consider the equivalent representation
\[
\eta(A) = V \pi(A) V^{*}, \quad A \in \alg{A}.
\]
It follows that $\eta(A) = A$ for all $A \in \alg{A}(\spc{C}')$. By localisation and locality it follows that $\eta(AB) = \eta(A)B = B \eta(A)$ for all $A \in \alg{A}(\widehat{\spc{C}})$ and $B \in \alg{A}(\widehat{\spc{C}}')$ where $\widehat{\spc{C}} \supset \spc{C}$ is a spacelike cone. Therefore, invoking Haag duality~\eqref{eq:haagd} for spacelike cones we have $\eta(\alg{A}(\widehat{\spc{C}})) \subset \alg{A}(\widehat{\spc{C}})''$. 

\begin{definition}
  A representation $\eta$ of $\alg{A}$ is a \emph{BF representation localised in $\spc{C}$} if it satisfies the selection criterion~\eqref{eq:selectbf} and $\eta(A) = A$ for all $A \in \alg{A}(\spc{C}')$. This is denoted by $\eta \in \bfm^{\alg{A}}(\spc{C})$.
\end{definition}
From now on, fix a spacelike cone $\spc{C}$. We will consider the category $\bfm^{\alg{A}}(\spc{C})$ of BF representations localised in $\spc{C}$ and intertwiners\footnote{Recall that for two representations $\eta_1$ and $\eta_2$ of an algebra $\alg{A}$, an intertwiner $T$ from $\eta_1$ to $\eta_2$ is an operator such that for all $A \in \alg{A}$, $T \eta_1(A) = \eta_2(A) T$.} as morphisms. Note that the objects of the category are still transportable, i.e., if $\eta \in \bfm^{\alg{A}}(\spc{C})$ and if $\widehat{\spc{C}}$ is an arbitrary spacelike cone, there is a unitary equivalent representation (that might not be an object of $\bfm^{\alg{A}}(\spc{C})$) that is localised in $\widehat{\spc{C}}$. This restriction to a fixed spacelike cone is for technical reasons only. As will be demonstrated below, for two spacelike cones $\spc{C}_1$ and $\spc{C}_2$, the corresponding categories $\bfm^{\alg{A}}(\spc{C}_i)$ are equivalent as braided tensor categories. In the remainder of this section, the structure of this category is described. The reader unfamiliar with these constructions is advised to keep in mind the category of finite-dimensional unitary representations of a compact group, which shares many of its features with the category of BF representations. There is, however, one notable difference: the representation category of a compact group is always \emph{symmetric}, whereas the category of BF representations in $d=2+1$ is interesting precisely because it is \emph{braided}, but in general not symmetric.

We now come to the construction of the auxiliary algebra. One starts by choosing an auxiliary spacelike cone $\spc{S}_a$. This can be interpreted as a ``forbidden'' direction. From now on this auxiliary cone will be fixed. It should be noted that the results will not depend on the specific choice of $\spc{S}_a$. After fixing $\spc{S}_a$ we can consider the family of algebras $\alg{A}( (\spc{S}_a + x)')''$, for $x \in \mathbb{M}^3$. This set is partially ordered by $x \leq y \Leftrightarrow \spc{S}_a + x \supset \spc{S}_a + y$ and is directed, i.e., each pair of elements in this poset has an upper bound. Hence it is possible to consider the $C^*$-inductive limit (here the norm closure of the union of algebras)
\[
\alg{A}^{\spc{S}_a} = \overline{\bigcup_{x \in \mathbb{M}^3} \alg{A}( (\spc{S}_a + x)')''}^{\|\cdot\|} \subset \mc{B}(\mc{H}_0).
\]
Clearly for every $x \in \mathbb{M}^3$, we have $\alg{A}^{\spc{S}_a} = \alg{A}^{\spc{S}_a + x}$. The point is then that BF representations can be extended to endomorphisms of the \emph{auxiliary} algebra.

After the introduction of this auxiliary algebra, the structure of the superselection sectors can be studied with essentially the same methods as in the case of compactly localised (DHR) sectors, see e.g.~\cite{MR1405610,halvapp}. For the convenience of the reader and to establish our notation, the main features and constructions are outlined below. The results are phrased in terms of tensor $C^*$-categories. See~\cite{MR1010160,MR1444286,mmappendix} for an overview of the relevant notions. 
\begin{lemma}
	\label{lem:extend}
	Let $\eta$ be a BF representation. Then $\eta$ has a unique extension $\eta^{\spc{S}_a}$ to $\alg{A}^{\spc{S}_a}$ that agrees with $\eta$ on $\alg{A}$ and is weakly continuous on $\alg{A}( (\spc{S}_a+x)')''$ for each $x \in \mathbb{M}^3$. If $\eta$ is localised in $\spc{C} \subset (\spc{S}_a + x)'$ for some $x \in \mathbb{M}^3$, then $\eta^{\spc{S}_a}$ is an endomorphism of $\alg{A}^{\spc{S}_a}$.
	In the latter case we have $\eta_1^{\spc{S}_a} \circ \eta_2^{\spc{S}_a} = \eta_2^{\spc{S}_a} \circ \eta_1^{\spc{S}_a}$ if the localisation regions of $\eta_1$ and $\eta_2$ are spacelike separated.
\end{lemma}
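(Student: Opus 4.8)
The plan is to construct the extension separately on each von Neumann algebra $\alg{A}((\spc{S}_a+x)')''$ and then glue. First I would fix $x$ and exploit transportability: choose a BF representation $\eta_x$ unitarily equivalent to $\eta$ but localised in the spacelike cone $\spc{S}_a + x$ itself, so that $\eta_x(A) = A$ for all $A \in \alg{A}((\spc{S}_a+x)')$. Let $U_x \in \mc{B}(\mc{H}_0)$ be a unitary intertwiner with $\eta = \Ad(U_x)\circ\eta_x$ on $\alg{A}$; such a $U_x$ exists because $\eta$ and $\eta_x$ lie in the same sector. On the $C^*$-subalgebra $\alg{A}((\spc{S}_a+x)')$ we then have $\eta(A) = U_x A U_x^*$, so I would \emph{define} $\eta^{\spc{S}_a}(A) := U_x A U_x^*$ for $A \in \alg{A}((\spc{S}_a+x)')''$. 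This map agrees with $\eta$ on $\alg{A}((\spc{S}_a+x)')$ and, being the restriction of the inner automorphism $\Ad(U_x)$, is manifestly weakly continuous, which is exactly the required property.

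Uniqueness on each piece is immediate: $\alg{A}((\spc{S}_a+x)')$ is weakly dense in its bicommutant, so any two weakly continuous maps agreeing there coincide. The same uniqueness shows that the definitions for different $x$ are compatible on the overlaps $\alg{A}((\spc{S}_a+x)')'' \subseteq \alg{A}((\spc{S}_a+y)')''$, and in particular that the construction is independent of the auxiliary choices $(\eta_x, U_x)$. Gluing yields a well-defined $*$-homomorphism on $\bigcup_x \alg{A}((\spc{S}_a+x)')''$; since each piece is isometric it is norm continuous and extends uniquely to $\alg{A}^{\spc{S}_a}$.

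For the endomorphism property, assume $\eta$ is localised in $\spc{C} \subseteq (\spc{S}_a+x_0)'$. The core step is that $\eta^{\spc{S}_a}$ maps $\mc{M} := \alg{A}((\spc{S}_a+w)')''$ into itself whenever $\spc{C} \subseteq (\spc{S}_a+w)'$. Indeed, for such $w$ the cone $\spc{S}_a+w$ is spacelike to $\spc{C}$, so $\eta^{\spc{S}_a}$ fixes $\alg{A}(\spc{S}_a+w)$ pointwise; by locality this algebra commutes with $\mc{M}$, and multiplicativity gives, for $A \in \mc{M}$ and $B \in \alg{A}(\spc{S}_a+w)$, $\eta^{\spc{S}_a}(A)B = \eta^{\spc{S}_a}(AB) = \eta^{\spc{S}_a}(BA) = B\,\eta^{\spc{S}_a}(A)$. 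Thus $\eta^{\spc{S}_a}(A) \in \alg{A}(\spc{S}_a+w)'$, and Haag duality~\eqref{eq:haagd} for the causally complete region $(\spc{S}_a+w)'$ identifies this commutant with $\mc{M}$. To treat an arbitrary $\alg{A}((\spc{S}_a+y)')''$ I would find $w$ with $\spc{S}_a+w \subseteq (\spc{S}_a+y)\cap\spc{C}'$; then $(\spc{S}_a+y)' \subseteq (\spc{S}_a+w)'$ and $\spc{C} \subseteq (\spc{S}_a+w)'$, so the image lies in $\mc{M} \subseteq \alg{A}^{\spc{S}_a}$. I expect the existence of such a $w$ — that a translate of $\spc{S}_a$ can be pushed into the intersection of $\spc{S}_a+y$ with the causal complement of $\spc{C}$ — to be the main obstacle; it is precisely here that the hypothesis $\spc{C}\subseteq(\spc{S}_a+x_0)'$ (that $\spc{C}$ avoids the forbidden direction) enters, and it reduces to a purely geometric fact about spacelike cones in $\mathbb{M}^3$.

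Finally, for spacelike separated $\eta_1,\eta_2$ (localised in $\spc{C}_1,\spc{C}_2$ with $\spc{C}_2 \subseteq \spc{C}_1'$) both $\eta_1^{\spc{S}_a}\circ\eta_2^{\spc{S}_a}$ and $\eta_2^{\spc{S}_a}\circ\eta_1^{\spc{S}_a}$ are endomorphisms of $\alg{A}^{\spc{S}_a}$, so by weak continuity and density it suffices to compare them on local generators $A \in \alg{A}(\mc{O})$. For $\mc{O}$ spacelike to $\spc{C}_2$ one has $\eta_2(A)=A$, reducing both sides to $\eta_1^{\spc{S}_a}(A)$; using transportability and locality to bring any $\mc{O}$ into such a position, together with the fact that $\eta_2^{\spc{S}_a}(A)$ remains localised spacelike to $\spc{C}_1$, yields agreement in general. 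This is the standard commutation argument for spacelike separated localised endomorphisms and requires only locality together with the localisation properties already established.
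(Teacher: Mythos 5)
Your proposal is correct and takes essentially the same route as the paper's (sketch) proof: both define $\eta^{\spc{S}_a}$ piecewise as $\Ad V$ on each $\alg{A}((\spc{S}_a+x)')''$ via the selection criterion/transportability, get uniqueness and gluing from weak continuity plus weak density and directedness, and reduce the final commutation statement to $A \in \alg{A}$ followed by an appeal to weak continuity. The only inessential variation is the endomorphism property, where you run the commutation-plus-cone-duality argument (exactly the computation the paper gives in its motivating paragraph before the definition of BF representations, with the ``geometric fact'' you flag as the main obstacle being nothing more than the directedness of the translates of $\spc{S}_a$ already used to define $\alg{A}^{\spc{S}_a}$), whereas the paper instead notes that the implementing unitaries $V$ can be chosen inside the auxiliary algebra by Haag duality.
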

\begin{proof}
  We give a sketch of the proof; for the full proof see Lemma 4.1 and Proposition 4.3 of~\cite{MR660538}. By the superselection criterion it is possible to find a unitary $V$ in $\mc{B}(\mc{H}_0)$ such that $\eta(A) = V A V^*$ for $A \in \alg{A}( (\spc{S}_a + x)')$. This representation can be extended uniquely to the weak closure $\alg{A}( (\spc{S}_a+x)')''$. Obviously, this extension is weakly continuous. This leads to an extension $\eta^{\spc{S}_a}$ of $\eta$. By Haag duality the localisation of $\eta$ implies, in particular, that the unitaries $V$ can be chosen in the auxiliary algebra, so that $\eta^{\spc{S}_a}$ is an endomorphism of this auxiliary algebra.

The final statement of the lemma can be checked for $A \in \alg{A}$. We then invoke weak continuity to arrive at the desired conclusion.
\end{proof}

With this result, the analysis of the structure of the BF representations proceeds analogously to the DHR case: one just extends the representations to $\alg{A}^{\spc{S}_a}$ as appropriate. In particular, it is possible to compose endomorphisms, which can be interpreted as composition of charges.

\begin{definition}
  \label{def:tproduct}
  Let $\eta_i \in \bfm^{\alg{A}}(\spc{C})$ ($i=1,2$), with $\spc{C}$ spacelike to $\spc{S}_a + x$ for some $x$. Define a tensor product on $\bfm^{\alg{A}}(\spc{C}))$ by 
\[
	\eta_1 \otimes \eta_2 = \eta_1^{\spc{S}_a} \circ \eta_2,
\]
and if $T_i \in \Hom_\alg{A}(\eta_i, \sigma_i)$ for $i=1,2$, by
\[
T_1 \otimes T_2 = T_1 \eta_1^{\spc{S}_a}(T_2) = \sigma_1^{\spc{S}_a}(T_2) T_1.
\]
\end{definition}
It can be shown that $\eta_1 \otimes \eta_2 \in \bfm^{\alg{A}}(\spc{C})$ and that $\eta_1 \otimes \eta_2$ is independent of the specific choice of auxiliary cone. Moreover if $\eta_i \cong \widehat{\eta}_i$, then $\eta_1 \otimes \eta_2 \cong \widehat{\eta}_1 \otimes \widehat{\eta}_2$. See Section 4 of~\cite{MR660538} for proofs. 

To proceed, an additional property is necessary, namely Borchers' Property B for spacelike cones.
\begin{property}
Let $E \in \alg{A}(\spc{C}')'$ be a non-zero projection. Then, for any spacelike cone $\widehat{\spc{C}} \supset \overline{\spc{C}}$, where the bar denotes closure in $\mathbb{M}^3$, there is an isometry $W \in \alg{A}(\widehat{\spc{C}}')'$ such that $WW^* = E$.
\end{property}
In fact, this property follows from the spectrum condition and locality~\cite{MR0211705}, or~\cite{MR1147465} for a more recent exposition. Note that the assumption of weak additivity is not necessary, since this is automatically satisfied for algebras of observables localised in spacelike cones. Moreover, if the $\alg{A}(\spc{C})''$ are Type III factors Property B is satisfied automatically and one can even choose $W \in \alg{A}(\spc{C})''$.

\begin{theorem}
	The category $\bfm^{\alg{A}}(\spc{C})$ has subobjects (notation: $\eta_1 \prec \eta_2$), direct sums $\eta_1 \oplus \eta_2$, and can be endowed with a tensor product $\eta_1 \otimes \eta_2$.
\end{theorem}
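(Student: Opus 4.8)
The tensor product has in effect already been constructed in Definition~\ref{def:tproduct}, where $\eta_1 \otimes \eta_2 = \eta_1^{\spc{S}_a} \circ \eta_2$ together with the prescription on intertwiners; that this lands again in $\bfm^{\alg{A}}(\spc{C})$ and is independent of $\spc{S}_a$ is recorded immediately after that definition. So the plan is to concentrate on subobjects and direct sums. Both are the standard $C^*$-categorical constructions familiar from the DHR case, and the single ingredient that makes them work in the present stringlike setting is Property~B together with Haag duality~\eqref{eq:haagd}: these let one realise projections and partitions of unity by isometries that actually lie in the relevant \emph{cone} algebras, which is what guarantees that the resulting endomorphisms are again BF representations localised in a cone rather than mere representations on $\mc{H}_0$.

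For subobjects, fix $\eta \in \bfm^{\alg{A}}(\spc{C})$ and an orthogonal projection $E \in \End(\eta)$. First I would note that $E$ lies in a cone algebra: since $E$ commutes with every $\eta(A)$ and $\eta(A)=A$ for $A \in \alg{A}(\spc{C}')$, one has $E \in \alg{A}(\spc{C}')'$, hence $E \in \alg{A}(\spc{C})''$ by Haag duality. Property~B then yields, for a spacelike cone $\widehat{\spc{C}} \supset \overline{\spc{C}}$, an isometry $W \in \alg{A}(\widehat{\spc{C}})''$ with $WW^* = E$. Setting $\sigma(A) = W^* \eta(A) W$, a short computation using $W^*W = 1$, $WW^* = E$ and the fact that $E$ is a self-intertwiner shows that $\sigma$ is a unital $*$-endomorphism and that $W \in \Hom(\sigma,\eta)$, so that $\sigma \prec \eta$ realises $E$. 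Localisation is checked by locality: for $A \in \alg{A}(\widehat{\spc{C}}')$ one has $\eta(A)=A$ and $W$ commutes with $A$, whence $\sigma(A) = W^*AW = A$; thus $\sigma$ satisfies the selection criterion~\eqref{eq:selectbf} trivially and is localised in $\widehat{\spc{C}}$.

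Direct sums follow the same pattern. Given $\eta_1, \eta_2 \in \bfm^{\alg{A}}(\spc{C})$, I would choose a partition of unity into two non-zero orthogonal projections $E_1, E_2 \in \alg{A}(\spc{C})''$ and apply Property~B to each to obtain isometries $W_1, W_2 \in \alg{A}(\widehat{\spc{C}})''$ with $W_i W_i^* = E_i$ and $W_1 W_1^* + W_2 W_2^* = 1$; orthogonality of the ranges, $W_i^* W_j = \delta_{ij}$, is immediate from $E_1 E_2 = 0$. Then $\eta(A) = W_1 \eta_1(A) W_1^* + W_2 \eta_2(A) W_2^*$ defines a unital $*$-endomorphism with $W_i \in \Hom(\eta_i, \eta)$, and the same locality argument gives $\eta(A) = A$ for $A \in \alg{A}(\widehat{\spc{C}}')$, so $\eta$ is localised in $\widehat{\spc{C}}$. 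The universal property of the biproduct is then a formal consequence of the Cuntz-type relations satisfied by $W_1, W_2$.

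The one genuine subtlety — and the point I expect to require the most care — is that both constructions a priori produce objects localised in the slightly enlarged cone $\widehat{\spc{C}} \supset \overline{\spc{C}}$ rather than in $\spc{C}$ itself, so that strictly speaking they are objects of $\bfm^{\alg{A}}(\widehat{\spc{C}})$. This is repaired by transportability: each such object is unitarily equivalent to one localised in $\spc{C}$, and the implementing unitary $U$ carries the isometries along (e.g.\ $\sigma \prec \eta$ via $W$ becomes $\sigma' \prec \eta$ via $WU^*$), so subobjects and direct sums indeed exist inside $\bfm^{\alg{A}}(\spc{C})$ up to the unitary equivalence under which the category is considered. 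When the cone algebras $\alg{A}(\spc{C})''$ are Type~III factors the situation is cleaner still, since Property~B then permits $W$ and the $W_i$ to be chosen already in $\alg{A}(\spc{C})''$ and no transport is needed.
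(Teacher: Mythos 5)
Your proposal matches the paper's own proof in all essentials: Property~B (with Haag duality) supplies isometries in $\alg{A}(\widehat{\spc{C}})''$ for an enlarged cone $\widehat{\spc{C}} \supset \overline{\spc{C}}$, subobjects are realised as $\sigma(-) = W^*\eta(-)W$ and direct sums as $\eta(-) = W_1\eta_1(-)W_1^* + W_2\eta_2(-)W_2^*$, the resulting objects are transported back into $\spc{C}$ by transportability, and the tensor product is simply the one from Definition~\ref{def:tproduct}. You even make explicit a step the paper leaves implicit (that $E \in \alg{A}(\spc{C}')'$ so Property~B applies, and the Type~III simplification), so the proposal is correct and takes essentially the same route.
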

\begin{proof}
  The first two properties can be derived using Property B. First, consider $\eta \in \bfm^{\alg{A}}(\spc{C})$ and a projection $P \in \End_{\alg{A}}(\eta)$. Consider a spacelike cone $\widehat{\spc{C}} \supset \overline{\spc{C}}$. By Property B there exists an isometry $W \in \alg{A}(\widehat{\spc{C}})''$ such that $P = W W^*$. Define $\sigma(-) = W^* \eta(-) W$. Note that $W \in \Hom_{\alg{A}}(\sigma, \eta)$. By duality and the localisation of $\eta$, it follows that $\sigma$ is localised in $\widehat{\spc{C}}$. Moreover, since $\eta$ is localisable in cones it is easy to exhibit unitary charge transporters of $\sigma$, hence $\sigma \in \bfm^{\alg{A}}(\spc{C})$. By transportability it is possible to find a unitarily equivalent $\widehat{\sigma}$ localised in $\spc{C}$. It follows that $\widehat{\sigma} \prec \eta$.

For the existence of direct sums, consider $\eta_1,\eta_2 \in \bfm^{\alg{A}}$. Using again Property B it is possible to find isometries $V_1, V_2 \in \alg{A}(\widehat{\spc{C}})''$ such that $V_1 V_1^* + V_2 V_2^* = I$ (consider projections $P \neq 0, I$ and $I-P$). Define $\eta(-) = V_1 \eta_1(-) V_1^* + V_2 \eta_2(-) V_2^*$. Then $\eta$ is localised in $\widehat{\spc{C}}$ and localisable in cones. Using the same argument as above, an equivalent $\widehat{\eta}$ localised in $\spc{C}$ can be found. This is the direct sum $\eta = \eta_1 \oplus \eta_2$, unique up to isomorphism. To see this, suppose $\eta'(-) = W_1 \eta_1(-) W_1^* + W_2 \eta_2(-) W_2^*$. Then $U := V_1 W_1^* + V_2 W_2^*$ is a unitary intertwiner from $\eta$ to $\eta'$. Similarly, it is not hard to see that if $\eta \cong \eta'$, then $\eta'$ is a direct sum of $\eta_1$ and $\eta_2$ as well.

The tensor product was already defined in Definition~\ref{def:tproduct}. With these definitions it is straightforward to verify that $\otimes$ defines a bifunctor on the category, and turns $\bfm^{\alg{A}}(\spc{C})$ into a strict monoidal category, with monoidal unit $\iota$, given by the identity endomorphism of $\alg{A}$.
\end{proof}

Now that a tensor product has been defined on the category $\bfm^{\alg{A}}(\spc{C})$, the next step is to look for a \emph{braiding}. The braiding is intimately related to the statistics of a sector. It gives rise to representations of the braid group, or of the symmetric group if the braiding is symmetric, describing the interchange of identical particles. These notions were first studied in the context of algebraic quantum field theory by Doplicher, Haag and Roberts~\cite{MR0297259,MR0334742}. Braid statistics have been studied, for example, in~\cite{MR1016869}. The constructions below are essentially the same as in these original papers, and have merely been adapted to the case at hand.

A convenient technical tool when dealing with BF representations is that of an interpolating sequence of spacelike cones. This can be used, e.g., to show that a certain construction is independent of the specific choice of spacelike cones, or to choose charge transporters in the auxiliary algebra.
\begin{definition}
  Let $\spc{C}_1$ and $\spc{C}_2$ be spacelike cones in $\spc{S}_a'$. An \emph{interpolating sequence} between $\spc{C}_1$ and $\spc{C}_2$, is a set of spacelike cones $\widehat{\spc{C}}_1, \dots \widehat{\spc{C}}_n$, each contained in $(\spc{S}_a+x_i)'$ for some $x_i \in \mathbb{M}^3$, such that $\widehat{\spc{C}}_1 = \spc{C}_1$, $\widehat{\spc{C}}_n = \spc{C}_2$, and for each $i$ we have either $\spc{C}_i \subset \spc{C}_{i+1}$ or $\spc{C}_{i+1} \subset \spc{C}_i$.
\end{definition}

With this definition it is possible to prove the following result:
\begin{lemma}
  \label{lem:chargetr}
  Let $\eta \in \bfm^{\alg{A}}(\spc{C}_1)$. For any spacelike cone $\spc{C}_2 \subset \spc{S}_a'$ there is an equivalent representation $\widehat{\eta} \cong \eta$ localised in $\spc{C}_2$, such that a unitary intertwiner $V$ in $\alg{A}^{\spc{S}_a}$ can be found.
\end{lemma}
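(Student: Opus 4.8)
The plan is to reduce the global transport to a chain of elementary transports between \emph{nested} cones, each of which can be controlled by Haag duality. The reason a naive one-step argument fails is that two spacelike cones $\spc{C}_1$ and $\spc{C}_2$ need not be contained in any common spacelike cone, so one cannot simply enclose both localisation regions in a single $\widehat{\spc{C}}$ and invoke~\eqref{eq:haagd}. This is exactly the situation the notion of an interpolating sequence is designed to handle.

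First I would invoke transportability, which is already available for objects of $\bfm^{\alg{A}}(\spc{C})$: for every spacelike cone there is a unitarily equivalent representation localised in it. What this \emph{a priori} gives is a unitary in $\mc{B}(\mc{H}_0)$, and the content of the lemma is to refine this so that the charge transporter lies in $\alg{A}^{\spc{S}_a}$. To do so, choose an interpolating sequence $\widehat{\spc{C}}_1 = \spc{C}_1, \dots, \widehat{\spc{C}}_n = \spc{C}_2$ in $\spc{S}_a'$, with each $\widehat{\spc{C}}_i \subset (\spc{S}_a + x_i)'$ and consecutive cones comparable under inclusion. Starting from $\eta_1 = \eta$, I would use transportability at each step to produce $\eta_{i+1} \cong \eta_i$ localised in $\widehat{\spc{C}}_{i+1}$, together with a unitary $V_i \in \mc{B}(\mc{H}_0)$ satisfying $\eta_{i+1} = \Ad V_i \circ \eta_i$.

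The key step is to show $V_i \in \alg{A}^{\spc{S}_a}$. Because consecutive cones are comparable, both $\eta_i$ and $\eta_{i+1}$ are localised in the larger one, call it $\widehat{\spc{C}}^{(i)}$: a representation localised in a cone is automatically localised in any larger cone, since the relevant complement algebra only shrinks. Hence for $A \in \alg{A}((\widehat{\spc{C}}^{(i)})')$ we have $\eta_i(A) = \eta_{i+1}(A) = A$, so $V_i A = V_i \eta_i(A) = \eta_{i+1}(A) V_i = A V_i$; that is, $V_i \in \alg{A}((\widehat{\spc{C}}^{(i)})')'$. Haag duality~\eqref{eq:haagd} for the spacelike cone $\widehat{\spc{C}}^{(i)}$ then gives $V_i \in \alg{A}(\widehat{\spc{C}}^{(i)})''$. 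Since $\widehat{\spc{C}}^{(i)} \subset (\spc{S}_a + x_i)'$ for a suitable $x_i$, we obtain $\alg{A}(\widehat{\spc{C}}^{(i)})'' \subset \alg{A}((\spc{S}_a + x_i)')'' \subset \alg{A}^{\spc{S}_a}$, whence $V_i \in \alg{A}^{\spc{S}_a}$.

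Finally I would set $\widehat{\eta} = \eta_n$, localised in $\spc{C}_2$, and $V = V_{n-1}\cdots V_1$. Then $\widehat{\eta} = \Ad V \circ \eta$, so $V \in \Hom_{\alg{A}}(\eta, \widehat{\eta})$ is a unitary intertwiner, and $V$ lies in $\alg{A}^{\spc{S}_a}$ as a finite product of elements of that algebra. The main obstacle is the purely geometric input behind the interpolating sequence: one must check that any two spacelike cones in $\spc{S}_a'$ can be joined by a finite chain of pairwise comparable cones, each avoiding a translate of the auxiliary cone $\spc{S}_a$. This is what legitimises the stepwise application of Haag duality, and it is the price one pays for the weaker, stringlike localisation, where — unlike in the DHR case — a single enveloping localisation region is not available.
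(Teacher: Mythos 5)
Your proof is correct and follows essentially the same route as the paper's: choose an interpolating sequence of spacelike cones, place each elementary transporter $V_i$ in $\alg{A}(\widehat{\spc{C}}^{(i)})'' \subset \alg{A}^{\spc{S}_a}$ by applying Haag duality to the larger of the two comparable cones, and take $V = V_{n-1}\cdots V_1$. You merely spell out the duality step (both representations acting trivially on $\alg{A}((\widehat{\spc{C}}^{(i)})')$, hence $V_i$ in its commutant) and the geometric input behind the interpolating sequence more explicitly than the paper does.
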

\begin{proof}
Choose an interpolating sequence $\widehat{\spc{C}}_i$ between $\spc{C}_1$ and $\spc{C}_2$. Set $\widehat{\eta}_1 = \eta$. We then define a sequence of unitarily equivalent representations $\widehat{\eta}_{i+1} \cong \widehat{\eta}_{i}$, such that $V_i \widehat{\eta}_{i+1} = \widehat{\eta}_i V_i$. Since either $\spc{C}_{i+1} \subset \spc{C}_i$ or $\spc{C}_i \subset \spc{C}_{i+1}$, it follows by Haag duality that either $V_i \in \alg{A}(\spc{C}_i)''$ or $V_i \in \alg{A}(\spc{C}_{i+1})''$, hence $V_i \in \alg{A}^{\spc{S}_a}$. But then $V_{n-1} \cdots V_1$ is a unitary intertwiner between $\widehat{\eta} \equiv \widehat{\eta}_n$, and because $\alg{A}^{\spc{S}_a}$ is an algebra, it follows that $V \equiv V_{n-1} \cdots V_1 \in \alg{A}^{\spc{S}_a}$.
\end{proof}

A \emph{braiding} on the category relates the objects $\eta_1 \otimes \eta_2$ and $\eta_2 \otimes \eta_1$. In this case it is a unitary operator $\varepsilon_{\eta_1,\eta_2}$ that intertwines the representations $\eta_1 \otimes \eta_2$ and $\eta_2 \otimes \eta_1$. A particular example is the \emph{statistics operator} $\varepsilon_{\eta,\eta}$ that describes the statistics of a sector. To define the braiding $\varepsilon_{\eta_1, \eta_2}$ between $\eta_1 \otimes \eta_2$ and $\eta_2 \otimes \eta_1$, with $\eta_i \in \bfm^{\alg{A}}(\spc{C})$, first choose two spacelike cones $\widehat{\spc{C}}_1$ and $\widehat{\spc{C}}_2$. Both spacelike cones should lie in the causal complement of $\spc{S}_a+x$ for some $x$ and should lie spacelike with respect to each other, i.e. $\widehat{\spc{C}}_1 \subset \widehat{\spc{C}}_2'$. By transportability there are BF-representations $\widehat{\eta}_i \cong \eta_i$ localised in $\widehat{\spc{C}}_i$. These morphisms are called \emph{spectator morphisms}. Moreover, by Lemma~\ref{lem:chargetr} the corresponding unitary intertwiners $V_1 \in \Hom_{\alg{A}}(\eta_1, \widehat{\eta}_1)$ and $V_2$ can be chosen to be in $\alg{A}^{\spc{S}_a}$. After these choices have been made, one can define the braiding by 
\[
\varepsilon_{\eta_1, \eta_2} = (V_2 \otimes V_1)^* \circ (V_1 \otimes V_2).
\]
It follows that $\varepsilon_{\eta_1, \eta_2}$ is a unitary in $\Hom_{\alg{A}}(\eta_1 \otimes \eta_2, \eta_2 \otimes \eta_1)$.

A standard argument using interpolating sequences of spacelike cones shows that the definition of $\varepsilon_{\eta_1, \eta_2}$ is independent of the specific choice of intertwiners and localisation regions, up to the relative position of $\spc{C}_1$ and $\spc{C}_2$, in the following sense.
\begin{definition}
  Suppose we have a spacelike cone $\spc{C}$ in the causal complement of $\spc{S}_a$. If we rotate the spatial coordinates counter-clockwise, at some point it will fail to be spacelike to $\spc{S}_a$. Now suppose we have two spacelike separated cones $\spc{C}_1$ and $\spc{C}_2$. We define an \emph{orientation} $\spc{C}_1 < \spc{C}_2$ if and only if we can move $\spc{C}_1$ by translation and rotating counter-clockwise to $\spc{S}_a$ while remaining in the spacelike complement of $\spc{C}_2$. Note that for any two spacelike separated cones, there is always precisely one cone for which this is possible.
\end{definition}
We will always choose $\widehat{\spc{C}}_2 < \widehat{\spc{C}}_1$ to define the braiding $\varepsilon_{\eta_1, \eta_2}$. One can then show that $\varepsilon_{\eta_1, \eta_2}$ is \emph{natural}, in the categorical sense, in both the first and second variable. Moreover, $\varepsilon_{\eta_1,\eta_2}$ satisfies the braid relations. The verification becomes straightforward if one chooses the spacelike cones $\widehat{\spc{C}}_i$ in the definition in a convenient way, so as to be able to make use of the localisation properties of the endomorphisms. See~\cite{halvapp} for the way this works in the DHR case.

\begin{theorem}
 The category $\bfm^{\alg{A}}(\spc{C})$ is a strict \emph{braided tensor category}, where the braiding is given by $\varepsilon_{\eta_1, \eta_2}$.
\end{theorem}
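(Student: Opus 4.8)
The category $\bfm^{\alg{A}}(\spc{C})$ has already been shown to be strict monoidal with unit $\iota$, and $\varepsilon_{\eta_1,\eta_2}$ has been exhibited as a unitary in $\Hom_{\alg{A}}(\eta_1 \otimes \eta_2, \eta_2 \otimes \eta_1)$. To upgrade this to a braided tensor category it remains to verify three things: that $\varepsilon_{\eta_1,\eta_2}$ is independent of the auxiliary choices (so that it is a well-defined family), that it is natural in each argument, and that it satisfies the two hexagon identities, which in the strict setting read
\[
\varepsilon_{\eta_1, \eta_2 \otimes \eta_3} = (\id_{\eta_2} \otimes \varepsilon_{\eta_1, \eta_3}) \circ (\varepsilon_{\eta_1, \eta_2} \otimes \id_{\eta_3}),
\]
\[
\varepsilon_{\eta_1 \otimes \eta_2, \eta_3} = (\varepsilon_{\eta_1, \eta_3} \otimes \id_{\eta_2}) \circ (\id_{\eta_1} \otimes \varepsilon_{\eta_2, \eta_3}).
\]
Invertibility is then automatic from unitarity.

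First I would establish well-definedness. Given two admissible choices of spectator morphisms $\widehat{\eta}_i$ and transporters $V_i$, I would connect the two sets of localisation regions by an interpolating sequence and use Lemma~\ref{lem:chargetr} to keep all intertwiners inside $\alg{A}^{\spc{S}_a}$. The resulting ``difference'' intertwiners are localised in regions spacelike to the spectator cones, so by locality and the commutation statement of Lemma~\ref{lem:extend} they drop out, leaving $\varepsilon_{\eta_1,\eta_2}$ unchanged as long as the orientation $\widehat{\spc{C}}_2 < \widehat{\spc{C}}_1$ is preserved.

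For naturality, given $T_i \in \Hom_{\alg{A}}(\eta_i, \sigma_i)$ I would choose a single pair of well-separated spectator cones $\widehat{\spc{C}}_1, \widehat{\spc{C}}_2$ and spectator morphisms for all four objects localised there, so that the transporters for the $\sigma_i$ factor through those for the $\eta_i$. Expanding $\varepsilon_{\sigma_1,\sigma_2}(T_1 \otimes T_2)$ and $(T_2 \otimes T_1)\varepsilon_{\eta_1,\eta_2}$ by means of the formula $T_1 \otimes T_2 = T_1 \eta_1^{\spc{S}_a}(T_2)$, both reduce to the same product once the far-separated factors are commuted past one another using locality; the crucial freedom is that the spectator data can be pushed arbitrarily far from the common localisation region $\spc{C}$.

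The hexagon identities are the technical heart. My plan is to pick the three spectator cones $\widehat{\spc{C}}_1, \widehat{\spc{C}}_2, \widehat{\spc{C}}_3$ mutually spacelike and totally ordered for the orientation relation $<$, with all transporters in $\alg{A}^{\spc{S}_a}$. Unwinding the tensor products via $\eta^{\spc{S}_a}$ and the definition of $\varepsilon$, each side becomes a product of the transporters $V_i$ and their images under the spectator endomorphisms; since endomorphisms localised in spacelike-separated cones commute by Lemma~\ref{lem:extend}, and intertwiners commute with observables localised elsewhere by locality, both sides collapse to the same expression. The main obstacle, and the reason the category is genuinely braided rather than symmetric in $d=2+1$, is the bookkeeping of the orientation convention: one must check that the counter-clockwise ordering of the three cones can be arranged coherently so that both hexagons hold simultaneously, and that $\varepsilon_{\eta_1,\eta_2}$ and $\varepsilon_{\eta_2,\eta_1}^{-1}$ do not in general coincide. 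This geometric input---that spacelike cones in three dimensions carry a well-defined cyclic orientation, which degenerates in four or more dimensions---is what must be tracked carefully throughout.
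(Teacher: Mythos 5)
Your proposal is correct and follows essentially the same route as the paper, which itself only sketches this verification: well-definedness via interpolating sequences of spacelike cones, naturality in both variables, and the braid (hexagon) relations checked by choosing the spectator cones $\widehat{\spc{C}}_i$ conveniently so that localisation and the commutation statement of Lemma~\ref{lem:extend} make the computations collapse, with the orientation convention $\widehat{\spc{C}}_2 < \widehat{\spc{C}}_1$ tracked throughout (the paper defers the detailed DHR-style calculations to~\cite{halvapp}). Your explicit attention to why the two hexagons can be arranged coherently and why $\varepsilon_{\eta_1,\eta_2} \neq \varepsilon_{\eta_2,\eta_1}^{-1}$ in general is exactly the geometric point the paper emphasises.
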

The appearance of braid (but not symmetric) statistics is due to the fact that in 2+1 dimensions the manifold of spacelike directions is not simply connected, unlike the situation in higher dimensions. See Section 2 of~\cite{MR2472029} for a clarification of this point.

Finally, there is the categorical notion of a \emph{conjugate object}. In this setting, conjugates can be interpreted as ``anti-particles'', and are closely related to the statistics of a sector. To each BF representation $\eta$ a dimension $d(\eta)$ and phase $\omega_\eta$ are associated. For bosons (resp. fermions) the phase is $+1$ (resp. $-1$), but in $\bfm^{\alg{A}}(\spc{C})$ these are not the only possibilities (for $d=2+1$). There are several ways to introduce these parameters. The traditional way is to introduce a left inverse~\cite{MR0297259,MR0334742}. Longo discovered a connection between the Jones index of an inclusion of factors and the dimension~\cite{MR1027496,MR1059320}. Finally, one can define the dimension, and twist (or phase), in a general categorical setting~\cite{MR1444286}, see also~\cite{mmappendix}.

The dimension $d(\eta)$ takes values in $[1,\infty]$. If $d(\eta) < \infty$, one says that $\eta$ has \emph{finite statistics}. Restricted to objects of finite dimension, the dimension function satisfies the following identies:
\[
d(\overline{\eta}) = d(\eta), \quad d(\eta_1 \otimes \eta_2) = d(\eta_1) d(\eta_2), \quad d(\eta_1 \oplus \eta_2) = d(\eta_1) + d(\eta_2).
\]
Here $\overline{\eta}$ is a conjugate representation of $\eta$ (see the Appendix). From now on, we will consider only categories where all objects have finite dimension, i.e., we leave out any sectors with infinite statistics the observable net may admit. Objects with finite dimension are precisely those for which there is a conjugate (or ``anti-particle''). To avoid cumbersome notation, the category of all BF representations with \emph{finite} statistical dimension will also be denoted by $\bfm^{\alg{A}}(\spc{C})$.

Under weak additional assumptions, Guido and Longo showed that the DHR sectors with finite statistics are Poincar\'e covariant with positive energy~\cite{MR1181069}, in particular they are covariant under translations as well. Hence under their assumptions, the set of finite DHR sectors coincides with the set of Poincar\'e covariant finite sectors with positive energy. Moreover, Buchholz and Fredenhagen show that massive irreducible single particle representations automatically have finite statistics~\cite{MR660538}. They also show that all representations of interest for particle physics are indeed described by (direct sums of) representations with finite statistics. One may therefore argue that restricting to sectors of finite dimension is not too restrictive from the point of view of physics. Finally, we would like to mention that Mund recently proved a version of the spin-statistics theorem for massive particles obeying braid group statistics~\cite{MR2472029}.

The restriction to sectors with finite statistics implies that the category $\bfm^{\alg{A}}(\spc{C})$ is semi-simple, i.e. that every representation can be decomposed into a direct sum of irreducibles. Indeed, let $\eta \in \bfm^{\alg{A}}(\spc{C})$. If $\eta$ is not irreducible there is a non-trivial projection $E \in \End_{\alg{A}}(\eta)$. By the existence of subobjects, one has $\eta = \eta_1 \oplus \eta_2$ for some $\eta_1, \eta_2 \in \bfm^{\alg{A}}(\spc{C})$. Semi-simplicity now follows, since $d(\eta) = d(\eta_1) + d(\eta_2)$ and the dimension function $d$ takes values in $[1,\infty)$, since we restricted to objects of finite dimension.

The results so far can be summarised by the following theorem.
\begin{theorem}
  The category $\bfm^{\alg{A}}(\spc{C})$ is a braided tensor $C^*$-category. That is it has duals (or conjugates), direct sums, subobjects, a braiding and a positive $*$-operation. The Hom-sets are Banach spaces, such that $\| T\circ S\| \leq \| S \| \|T\|$ and $\| S^* \circ S \| = \|S\|^2$ for all morphisms $S,T$ (whenever the composition is defined). Moreover, the tensor unit $\iota$ is irreducible.
\end{theorem}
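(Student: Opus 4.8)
The plan is to observe that this theorem is essentially a summary: almost every structural ingredient has already been constructed in the preceding results, so the bulk of the proof consists in \emph{collecting} those facts and then verifying the handful of $C^*$-categorical axioms that have not yet been spelled out. Concretely, I would first invoke the earlier theorems: the tensor product of Definition~\ref{def:tproduct} together with the braiding $\varepsilon_{\eta_1,\eta_2}$ makes $\bfm^{\alg{A}}(\spc{C})$ a strict braided tensor category, while the applications of Property~B give direct sums and subobjects. The existence of duals (conjugates) for objects of finite statistical dimension is exactly the content of the finite-dimension assumption made above, so here I would cite the categorical construction recalled in the Appendix.

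The genuinely new content is the analytic $C^*$-structure, and here everything is inherited from $\mc{B}(\mc{H}_0)$. First I would note that each $\Hom_{\alg{A}}(\eta_1,\eta_2)$ is by definition the set of $T \in \mc{B}(\mc{H}_0)$ satisfying $T\eta_1(A) = \eta_2(A)T$ for all $A \in \alg{A}$. Since each such condition is norm-closed and linear in $T$, the Hom-set is a norm-closed linear subspace of $\mc{B}(\mc{H}_0)$, hence a Banach space in the operator norm. The inequality $\|T \circ S\| \le \|S\|\,\|T\|$ is just submultiplicativity of the operator norm, and $\|S^* \circ S\| = \|S\|^2$ is the $C^*$-identity in $\mc{B}(\mc{H}_0)$.

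Next I would verify that the adjoint furnishes a positive $*$-operation. If $T \in \Hom_{\alg{A}}(\eta_1,\eta_2)$, then taking adjoints in $T\eta_1(A) = \eta_2(A)T$ and using that the $\eta_i$ are $*$-representations gives $\eta_1(A^*)T^* = T^*\eta_2(A^*)$ for all $A$, so $T^* \in \Hom_{\alg{A}}(\eta_2,\eta_1)$; thus $T \mapsto T^*$ is a contravariant, conjugate-linear, involutive functor compatible with composition via the reversal of adjoints. Positivity, namely $S^* \circ S \ge 0$ in $\End_{\alg{A}}(\eta_1)$ with equality only for $S=0$, is immediate from positivity in the ambient $C^*$-algebra $\mc{B}(\mc{H}_0)$.

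Finally I would treat irreducibility of the tensor unit. The unit $\iota$ is the identity endomorphism of $\alg{A}$, so $\End_{\alg{A}}(\iota) = \{T \in \mc{B}(\mc{H}_0) : TA = AT \text{ for all } A \in \alg{A}\} = \alg{A}'$, the commutant of the (identified) vacuum representation. Since $\pi_0$ is irreducible, $\alg{A}' = \mathbb{C}I$, and hence $\End_{\alg{A}}(\iota) = \mathbb{C}I$, i.e.\ $\iota$ is irreducible. There is no serious obstacle in any of this; the only steps requiring genuine attention are the appeal to the finite-dimension hypothesis to guarantee conjugates (for which the non-trivial input is the earlier identification of finite-dimensional objects with those admitting a conjugate) and the bookkeeping ensuring that the already-established monoidal and braided structures are compatible with the $*$-operation, so that $\bfm^{\alg{A}}(\spc{C})$ is a genuine braided tensor $C^*$-category rather than merely a braided tensor category carrying an unrelated involution.
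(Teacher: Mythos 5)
Your proposal is correct and follows essentially the same route as the paper, which offers no separate proof: it presents this theorem as a summary of the preceding results and merely remarks that the $*$-operation and norm are inherited from the observable algebra (i.e.\ from $\mc{B}(\mc{H}_0)$). You simply spell out the routine verifications the paper leaves implicit --- norm-closedness of the Hom-sets, adjoints of intertwiners, the $C^*$-identity, and $\End_{\alg{A}}(\iota) = \pi_0(\alg{A})' = \mathbb{C}I$ via irreducibility of the vacuum representation --- all of which are accurate.
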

It then follows automatically that the Hom-sets are finite-dimensional vector spaces~\cite{MR1444286}. In the case of interest here, the $*$-operation and norm are inherited from the observable algebra.

One question that remains to be answered is to which extent the category $\bfm^{\alg{A}}(\spc{C})$ depends on the choice of $\spc{C}$. It turns out that in fact for any two choices $\spc{C}_1, \spc{C}_2$ the resulting categories are equivalent as tensor categories, c.f.~\cite[Theorem 4.11]{MR1062748}.
\begin{prop}
  \label{prop:equiv}
  Let $\spc{C}_1$ and $\spc{C}_2$ be two spacelike cones. Then the categories $\bfm^{\alg{A}}(\spc{C}_1)$ and $\bfm^{\alg{A}}(\spc{C}_2)$ are equivalent as braided tensor categories.
\end{prop}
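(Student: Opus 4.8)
The plan is to build an explicit braided tensor equivalence $F\colon \bfm^{\alg{A}}(\spc{C}_1) \to \bfm^{\alg{A}}(\spc{C}_2)$ out of unitary charge transporters, following the strategy of~\cite[Theorem 4.11]{MR1062748}. For each object $\eta \in \bfm^{\alg{A}}(\spc{C}_1)$ I would use transportability together with Lemma~\ref{lem:chargetr} to choose a unitary intertwiner $V_\eta \in \alg{A}^{\spc{S}_a}$ such that $F(\eta) := \Ad(V_\eta) \circ \eta$ is localised in $\spc{C}_2$; for the tensor unit $\iota$, which is localised in every cone, I would take $V_\iota = I$ so that $F(\iota) = \iota$ on the nose. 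On morphisms I set $F(T) = V_\sigma T V_\eta^*$ for $T \in \Hom_{\alg{A}}(\eta,\sigma)$. A direct check with the intertwining relations shows $F(T) \in \Hom_{\alg{A}}(F(\eta), F(\sigma))$, that $F$ is a functor with $F(T^*) = F(T)^*$, and that $F$ is full and faithful because the $V_\eta$ are unitary. Essential surjectivity follows since any object of $\bfm^{\alg{A}}(\spc{C}_2)$ is, by transportability, equivalent to one localised in $\spc{C}_1$; hence $F$ is a $*$-equivalence of categories. (Here I would fix $\spc{S}_a$ so that both $\spc{C}_1,\spc{C}_2$ lie spacelike to a translate of it, which is harmless since nothing depends on $\spc{S}_a$.)

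Next I would equip $F$ with its tensor structure. Since $F(\eta_1) = \Ad(V_{\eta_1}) \circ \eta_1$ with $V_{\eta_1} \in \alg{A}^{\spc{S}_a}$, uniqueness of the weakly continuous extension in Lemma~\ref{lem:extend} gives $F(\eta_1)^{\spc{S}_a} = \Ad(V_{\eta_1}) \circ \eta_1^{\spc{S}_a}$. A short computation with Definition~\ref{def:tproduct} then identifies $F(\eta_1) \otimes F(\eta_2) = \Ad(W) \circ (\eta_1 \otimes \eta_2)$ with $W := V_{\eta_1}\,\eta_1^{\spc{S}_a}(V_{\eta_2})$. I would define the comparison morphism
\[
\mu_{\eta_1,\eta_2} := V_{\eta_1 \otimes \eta_2}\, \eta_1^{\spc{S}_a}(V_{\eta_2})^*\, V_{\eta_1}^* \in \alg{A}^{\spc{S}_a},
\]
and verify that it is a unitary in $\Hom_{\alg{A}}(F(\eta_1)\otimes F(\eta_2), F(\eta_1 \otimes \eta_2))$. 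Naturality of $\mu$ in both variables and the associativity and unit coherence diagrams are then routine consequences of the formula for $\otimes$ on morphisms, the interchange law, and the choice $V_\iota = I$.

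The substantive step is the braided-functor compatibility
\[
F(\varepsilon_{\eta_1,\eta_2}) \circ \mu_{\eta_1,\eta_2} = \mu_{\eta_2,\eta_1} \circ \varepsilon_{F(\eta_1),F(\eta_2)}.
\]
The observation that makes this tractable is that the braiding is independent of the choice of spectator morphisms provided the orientation convention $\widehat{\spc{C}}_2 < \widehat{\spc{C}}_1$ is respected. I would therefore compute both $\varepsilon_{\eta_1,\eta_2}$ and $\varepsilon_{F(\eta_1),F(\eta_2)}$ using the \emph{same} spacelike-separated spectator cones $\widehat{\spc{C}}_1, \widehat{\spc{C}}_2$ and spectator morphisms $\widehat{\eta}_i$: the transporters for $\eta_i$ are $U_i \in \alg{A}^{\spc{S}_a}$, while for $F(\eta_i)$ they are $U_i V_{\eta_i}^* \in \alg{A}^{\spc{S}_a}$. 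Both braidings are then assembled from the same spectator data plus the transporters $V_{\eta_i}$, and the identity above collapses to an algebraic relation in $\alg{A}^{\spc{S}_a}$ coming from the definition of $\otimes$ on morphisms.

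I expect the main obstacle to be precisely the orientation bookkeeping inside this last step: one must confirm that transporting from $\spc{C}_1$ and from $\spc{C}_2$ to a \emph{common} pair of spectator cones yields the same relative orientation, so that $F$ genuinely preserves, rather than reverses, the braiding. This is exactly where it is essential that the spectator cones and morphisms be chosen once and for all, independently of whether we start in $\bfm^{\alg{A}}(\spc{C}_1)$ or in $\bfm^{\alg{A}}(\spc{C}_2)$; with that common choice the orientation is manifestly unchanged, and the coherence identity follows.
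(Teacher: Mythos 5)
Your core construction is sound in the regime where it applies, and it is a genuinely different route from the paper's. With a single auxiliary cone $\spc{S}_a$ admitting translates spacelike to both $\spc{C}_1$ and $\spc{C}_2$, the transported functor $F = \Ad(V_\eta)\circ\eta$ with coherence unitaries $\mu_{\eta_1,\eta_2} = V_{\eta_1\otimes\eta_2}(V_{\eta_1}\otimes V_{\eta_2})^*$ does work: since $F(\eta_i)^{\spc{S}_a} = \Ad(V_{\eta_i})\circ\eta_i^{\spc{S}_a}$ by uniqueness in Lemma~\ref{lem:extend}, computing both braidings from the \emph{same} spectator data yields $\varepsilon_{F(\eta_1),F(\eta_2)} = (V_{\eta_2}\otimes V_{\eta_1})\,\varepsilon_{\eta_1,\eta_2}\,(V_{\eta_1}\otimes V_{\eta_2})^*$, and the compatibility square collapses exactly as you predict. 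Moreover, the orientation bookkeeping you flag as the main obstacle is in fact a non-issue: the relation $\widehat{\spc{C}}_2 < \widehat{\spc{C}}_1$ is defined relative to $\spc{S}_a$ alone, not to the localisation cones of the objects, so a common choice of spectator cones automatically respects the convention for both categories. By contrast, the paper avoids all coherence data: it first treats the nested case $\spc{C}_1 \subset \spc{C}_2$, where the inclusion $\bfm^{\alg{A}}(\spc{C}_1) \subset \bfm^{\alg{A}}(\spc{C}_2)$ is a \emph{strict} braided monoidal functor, fully faithful because the Hom-sets are literally the same intertwiner spaces, and essentially surjective by transportability; your route buys an explicit one-shot equivalence between cones in general position at the cost of verifying naturality and coherence of $\mu$.

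The genuine gap is the parenthetical reduction: the claim that one can ``fix $\spc{S}_a$ so that both $\spc{C}_1,\spc{C}_2$ lie spacelike to a translate of it'' is false for general pairs of cones in $\mathbb{M}^3$. A spacelike cone is convex, so its arc of spatial directions spans less than $\pi$; but two such arcs can leave gaps that are arbitrarily narrow, while spacelike separation from translates of a third cone forces a gap bounded below by the temporal openings of $\spc{C}_1,\spc{C}_2$: for points $(t_i,\vec{x}_i)$ with $|t_i| \leq c_i|\vec{x}_i|$, $|\vec{x}_i| = r$ and angular separation $\theta$, spacelikeness requires $2(1-\cos\theta) > (c_1+c_2)^2$, a translation-invariant asymptotic constraint. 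Hence there are pairs of cones for which \emph{no} auxiliary cone works for both; the braided tensor structures on $\bfm^{\alg{A}}(\spc{C}_1)$ and $\bfm^{\alg{A}}(\spc{C}_2)$ are then necessarily defined with respect to different auxiliary cones, and your functor cannot even be written down. This is precisely the case the paper's proof handles with interpolating sequences: one chains $\spc{C}_1$ to $\spc{C}_2$ through intermediate cones with consecutive nestings and composes the resulting zig-zag of equivalences, re-choosing the auxiliary cone along the way. Note also that your appeal to ``nothing depends on $\spc{S}_a$'' is doing real work at exactly this point: independence of the whole braided structure --- including the orientation convention, which is phrased in terms of $\spc{S}_a$ --- under a change of auxiliary cone is itself a claim requiring an interpolation argument, not something that can simply be cited to dissolve the remaining cases.
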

\begin{proof}
  We give a sketch of the proof; the details are left to the reader. One first proves the result in the case $\spc{C}_1 \subset \spc{C}_2$. This gives rise to a full and faithful inclusion of categories $\bfm^{\alg{A}}(\spc{C}_1) \subset \bfm^{\alg{A}}(\spc{C}_2)$. Clearly this inclusion is braided. In addition, the inclusion is essentially surjective, since for each representation localised in $\spc{C}_2$ one can find a unitary equivalent representation localised in $\spc{C}_1$. Hence, the inclusion is in fact an equivalence of categories, hence an equivalence of braided tensor categories~\cite{MR0338002}.
 
  To prove the full result, one uses an argument with interpolating sequences of spacelike cones.
\end{proof}

Thus the BF representations form a braided tensor category. However, if there are DHR localised sectors, the braiding has a ``trivial'' part. Indeed, the DHR sectors form a symmetric subcategory of $\bfm^{\alg{A}}(\spc{C})$. But more importantly, the DHR sectors are \emph{degenerate} objects with respect to the braiding. That is, they have trivial braiding with \emph{any} object of $\bfm^{\alg{A}}(\spc{C})$, in a sense made precise below. In such a situation, one says that the category has a non-trivial centre~\cite{MR1749250}.
\begin{definition}
 The \emph{centre} of a braided category is the full subcategory of degenerate objects. That is, it consists of all objects $\rho$ such that $\varepsilon_{\rho,\eta} \circ \varepsilon_{\eta, \rho} = I_{\eta \otimes \rho}$ for all objects $\eta$.
\end{definition}
A non-trivial centre is an obstruction to modularity, since by a result of Rehren the existence of (non-trivial) degenerate sectors implies that the so-called $S$-matrix (in the sense of Verlinde~\cite{MR954762}) is not invertible~\cite{MR1147467}. To make this situation more precise, we study the properties of the DHR sectors within $\bfm^{\alg{A}}(\spc{C})$.

\begin{definition}
  Let $\spc{S}$ be either a double cone or a spacelike cone. We write $\dhr^{\alg{A}}(\spc{S})$ for the category of DHR localised sectors whose localisation region lies in $\spc{S}$.
\end{definition}
Note that $\rho \in \dhr(\spc{C})$ in particular is also an element of $\bfm(\spc{C})$, so the constructions in the first part of this section go through without change. For example, the tensor product of $\rho_1$ and $\rho_2$ in $\dhr(\spc{C})$ is again in $\dhr(\spc{C})$. Since objects from $\dhr(\spc{C})$ can be localised in \emph{bounded} regions of spacetime, one can say even more about them:
\begin{figure}
	\includegraphics[width=8cm]{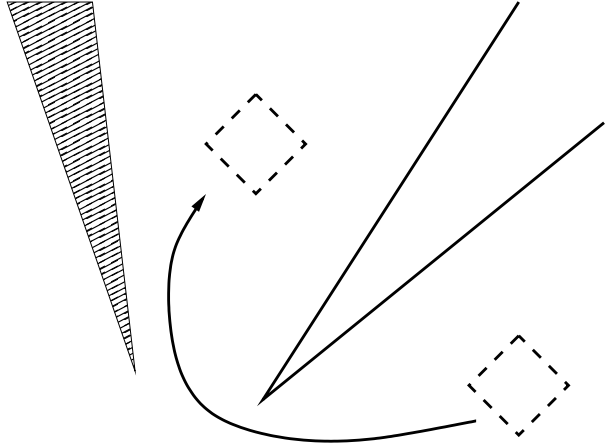}
	\caption{This figure shows why the braiding is degenerate for compactly localised endomorphisms. The compactly localised (dashed lines) endomorphism can move from one side of the spacelike cone to the other, keeping it in the causal complement of the auxiliary cone (shaded region) and spacelike cone $\spc{C}$ (solid lines) at all times.}
	\label{fig:braiddeg}
\end{figure}
\begin{lemma}
  Let $\eta \in \bfm^{\alg{A}}(\spc{C})$ and $\rho \in \dhr^{\alg{A}}(\mathcal{O})$ for some double cone $\mc{O} \subset \spc{S}_a'$. Then the DHR sectors are \emph{degenerate} with respect to the braiding, i.e.,
 \[
	\varepsilon_{\rho,\eta} \circ \varepsilon_{\eta, \rho} = I_{\eta \otimes \rho}.
 \]
 \label{lem:trivmon}
\end{lemma}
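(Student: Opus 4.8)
The plan is to compute the two braidings $\varepsilon_{\eta,\rho}$ and $\varepsilon_{\rho,\eta}$ explicitly, using the charge transporters supplied by Lemma~\ref{lem:chargetr}, keeping $\eta$ fixed as its own spectator morphism while transporting $\rho$ to double cones on either side of $\spc{C}$. Concretely, I would pick a double cone $\mc{O}_1$ spacelike to $\spc{C}$ with $\mc{O}_1 < \spc{C}$ and a double cone $\mc{O}_2$ with $\spc{C} < \mc{O}_2$ (both inside $\spc{S}_a'$), together with unitary charge transporters $U_1 \in \Hom_{\alg{A}}(\rho,\rho_1)$ and $U_2 \in \Hom_{\alg{A}}(\rho,\rho_2)$, where $\rho_i$ is a DHR endomorphism localised in $\mc{O}_i$. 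Taking $\eta$ as the spectator for itself, Definition~\ref{def:tproduct} together with the orientation convention $\widehat{\spc{C}}_2 < \widehat{\spc{C}}_1$ yields $\varepsilon_{\eta,\rho} = U_1^*\,\eta^{\spc{S}_a}(U_1)$ and $\varepsilon_{\rho,\eta} = \eta^{\spc{S}_a}(U_2)^*\,U_2$, since in each case one of the tensor factors of morphisms collapses to a factor of $\id$.

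Multiplying these, the monodromy becomes $\varepsilon_{\rho,\eta}\circ\varepsilon_{\eta,\rho} = \eta^{\spc{S}_a}(U_2)^*\,(U_2 U_1^*)\,\eta^{\spc{S}_a}(U_1)$. Setting $W := U_2 U_1^* \in \Hom_{\alg{A}}(\rho_1,\rho_2)$, a single unitary charge transporter carrying $\rho_1$ to $\rho_2$, one has $U_1 = W^* U_2$, so $\eta^{\spc{S}_a}(U_1) = \eta^{\spc{S}_a}(W^*)\,\eta^{\spc{S}_a}(U_2)$ and the expression collapses to $\eta^{\spc{S}_a}(U_2)^*\,W\,\eta^{\spc{S}_a}(W^*)\,\eta^{\spc{S}_a}(U_2)$. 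Hence it suffices to show that $W$ can be chosen so that $\eta^{\spc{S}_a}(W) = W$: in that case the three middle factors reduce to $W W^* = I$ and the monodromy equals $\eta^{\spc{S}_a}(U_2)^*\eta^{\spc{S}_a}(U_2) = I_{\eta\otimes\rho}$.

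The whole statement therefore rests on choosing the transporter $W$ inside $\alg{A}(\spc{C}')''$, the algebra on which $\eta$ (and, by weak continuity, its extension $\eta^{\spc{S}_a}$) acts trivially by localisation. This is exactly the geometric content illustrated in Figure~\ref{fig:braiddeg}: because $\rho_1$ and $\rho_2$ are localised in \emph{bounded} double cones while $\spc{C}$ extends to spacelike infinity in a single direction, the common causal complement of $\spc{C}$ and $\spc{S}_a$ stays connected, so $\mc{O}_1$ can be dragged around the apex side of $\spc{C}$ to $\mc{O}_2$ through a family of double cones that remain spacelike to both $\spc{C}$ and $\spc{S}_a$. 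I would make this precise with an interpolating sequence of such double cones and Haag duality for double cones, building $W$ as a product of transporters each lying in a local algebra $\alg{A}(\widehat{\mc{O}})'' \subset \alg{A}(\spc{C}')''$; since $\alg{A}(\spc{C}')$ is an algebra, the product again lies in $\alg{A}(\spc{C}')''$, and weak continuity of $\eta^{\spc{S}_a}$ on $\alg{A}((\spc{S}_a+x)')''$ then gives $\eta^{\spc{S}_a}(W) = W$.

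The main obstacle is precisely this connectedness step, and it is where the dimension and the \emph{bounded} localisation of $\rho$ enter essentially. For two \emph{spacelike} cones the analogous path would have to pass around a second string reaching to infinity, which is obstructed; this is exactly why the braiding is genuinely non-symmetric in $\bfm^{\alg{A}}(\spc{C})$, whereas a compact region can always be moved to the far side of a single cone. Care is needed to verify that the interpolating double cones can simultaneously be kept spacelike to the auxiliary cone $\spc{S}_a$, so that the transporters lie in $\alg{A}^{\spc{S}_a}$ and $\eta^{\spc{S}_a}$ is weakly continuous where it is evaluated, and to confirm that the orientation convention really places $\mc{O}_1$ and $\mc{O}_2$ on opposite sides. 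Both points are manageable, but they are best settled by a careful figure-level argument about the geometry of spacelike cones rather than by a formula.
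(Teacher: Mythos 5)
Your proof is correct and is essentially the paper's own argument made explicit: the paper's proof sketches exactly this idea via Figure~\ref{fig:braiddeg}, arguing that because $\rho$ is compactly localised the spectator cone can be ``flipped'' to the other side of $\spc{C}$, so that $\varepsilon_{\rho,\eta}$ is independent of the orientation convention and hence equals $\varepsilon_{\eta,\rho}^{-1}$. Your unitary $W = U_2 U_1^* \in \alg{A}(\spc{C}')$ with $\eta^{\spc{S}_a}(W) = W$ is precisely the operational content of that cone-flipping step, and both renditions rest on the same figure-level geometric fact (that a double cone can be dragged around the apex of $\spc{C}$ inside $\spc{C}'$), which the paper likewise does not spell out in detail.
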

\begin{proof}
The basic idea is depicted in Figure~\ref{fig:braiddeg}. Because $\rho$ is localised in a bounded region, there is more freedom in the choice of localisation cones of the spectator morphisms. In particular, it is possible to ``flip'' the cones, that is, if $\hat{\rho}$ is localised in some spacelike cone $\widehat{\spc{C}}$, it is possible to find a spacelike cone $\widetilde{\spc{C}}$ pointing in the opposite direction, such that $\hat{\rho}$ is localised in $\widetilde{\spc{C}}$. Using this, it is not difficult to see that the braiding $\varepsilon_{\rho,\eta}$ does not depend on the orientation of the spacelike cones of the spectator morphisms. It follows that $\varepsilon_{\rho,\eta} = \varepsilon_{\eta,\rho}^{-1}$, which proves the result.
\end{proof}

To conclude this section we briefly comment on other methods to describe the superselection structure of charges localised in spacelike cones. Doplicher and Roberts take a different approach in~\cite{MR1062748}, which does not need the auxiliary algebra. This method, however, works only in spacetimes of dimension at least 4 and would need adaptation to the $d=2+1$ case we are interested in. 

In the approach of both Buchholz \& Fredenhagen and of Doplicher \& Roberts, only representations localised in a \emph{fixed} spacelike cone $\spc{C}$ can be considered. A related approach by Fr\"ohlich and Gabbiani~\cite{MR1104414}, which also uses the auxiliary algebra, does not require one to fix a spacelike cone. Instead, they consider two coordinate patches, and show that it is possible to pass from one to the other in a ``smooth'' way.

Finally, it is possible to use the so-called \emph{universal algebra}, introduced by Fredenhagen~\cite{MR1147469}, see also~\cite{Mund2009}. This has the advantage that we do not have to choose an auxiliary cone. On the other hand, there are drawbacks, for example the universal algebra is not simple and the vacuum representation is not faithful~\cite{MR1199171}. In the end, each method gives the same result, so the choice of method only matters for the technical details.

\section{The field net}
\label{sec:field}
In this section we consider the field net of the observable algebras \emph{with respect to the DHR sectors}. In other words, the field operators by construction only generate the DHR sectors. This is possible since the DHR sectors have permutation statistics in 2+1 dimensions. At the end of this section we discuss an alternative, more abstract construction of the field net, that turns out to be helpful in the applications we have in mind.

For the convenience of the reader we first recall the definition of a \emph{field net}~\cite{MR1062748}. We specialise to the case of interest here: that of a complete, normal field net without fermionic sectors.
\begin{definition}
	\label{def:fieldnet}
  Let $(\pi_0, \mc{H}_0)$ be a vacuum representation of the net $\mc{O} \mapsto \alg{A}(\mc{O})$. A \emph{complete normal field net} $(\pi, G, \alg{F})$ is a representation $(\pi, \mc{H})$ of $\alg{A}$ and a net $\mc{O} \mapsto \alg{F}(\mc{O})$ of von Neumann algebras acting on $\mc{H}$, such that
\begin{enumerate}
	\item $\mc{H}_0 \subset \mc{H}$;
	\item $\pi_0$ is a subrepresentation of $\pi$;
	\item there is a compact group $G$ of unitaries on $\mc{H}$ leaving $\mc{H}_0$ pointwise fixed, inducing an action $\alpha_g = \Ad g$;
	\item for each $g \in G$, $\alpha_g$ is an automorphism of $\alg{F}(\mc{O})$ with fixed-point algebra $\pi(\alg{A}(\mc{O}))$;
	\item the inductive limit $\alg{F}$ of the local algebras $\alg{F}(\mc{O})$ is irreducible;
	\item the Hilbert space $\mc{H}_0$ is cyclic for $\alg{F}(\mc{O})$;
	\item \label{it:commut}if $\mc{O}_1$ and $\mc{O}_2$ are spacelike separated double cones, $\alg{F}(\mc{O}_1)$ and $\alg{F}(\mc{O}_2)$ commute; 
	\item every irreducible DHR representation with finite statistics is included as a subrepresentation of $\pi$.
  \end{enumerate}
\end{definition}
In the presence of fermionic sectors, item~\eqref{it:commut} has to be modified to graded commutativity. Doplicher and Roberts show that such a field net exists and is unique up to a suitable notion of equivalence. The main point for us is that (at least in the purely bosonic case) this field net can be interpreted as an algebraic quantum field theory in its own right. The proof of this fact will be given below, after some preparatory results on harmonic analysis on the field net.
\begin{definition}
	Let $\xi$ be a finite-dimensional continuous unitary representation of a group $G$ as in Definition~\ref{def:fieldnet}. A set of operators $X_1, \dots X_d$, where $d = \dim \xi$, is said to be a \emph{multiplet transforming according to $\xi$} if
\[
	\alpha_g(X_i) = \sum_{j=1}^d u_{ji}^{\xi}(g) X_j,
\]
where $u_{ji}^{\xi}(g)$ are the matrix coefficients of $\xi$. An operator $X$ is said to \emph{transform irreducibly according to $\xi$}, or to be an \emph{irreducible tensor}, if it is part of a multiplet transforming according to an irreducible representation $\xi$.
\end{definition}
Irreducible tensors can be obtained by averaging over the symmetry group $G$, and their span is weakly dense in the field algebra, see e.g.~\cite[Section 2]{MR0325053}.

Recall that for each irreducible DHR endomorphism $\rho$ there is a Hilbert space $H_\rho$ in the field net transforming according to some irrep $\xi$ of $G$. That is, $H_\rho$ is a closed linear subspace of $\alg{F}$ such that $\psi_1^* \psi_2 \in \mathbb{C} I$ for all $\psi_1, \psi_2 \in H_\rho$. The space $H_\rho$ is precisely the set of operators $\psi$ in $\alg{F}$ such that $\psi A = \rho(A) \psi$ for all $A \in \alg{A}$, and $\alpha\upharpoonright_{H_\rho} = \xi$. Moreover, there is a basis of $H_\rho$ that is a multiplet transforming according to $\xi$. Irreducible tensors may then be decomposed into a $G$-invariant part and an operator in $H_\rho$, in the following sense:
\begin{lemma}
	\label{lem:irrdecompose}
	Let $\alg{B} \subset \mc{B}(\mc{H})$ be a $^*$-algebra, such that $\alg{F}(\mc{O}) \subset \alg{B}$ for some double cone $\mc{O}$. Suppose that $X$ transforms irreducibly under the action of $G$, that is, is contained in a finite dimensional Hilbert space transforming according to an irrep of $G$. Then there is a $B \in \alg{B} \cap G'$ and a $\psi \in H_\rho \subset \alg{F}(\mc{O})$ such that
\[
	X = B \psi,
\]
where $\psi$ transforms according to the same irreducible representation as $X$.
\end{lemma}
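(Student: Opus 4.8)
The plan is to build $B$ explicitly as a ``contraction'' of the multiplet of $X$ against a matching multiplet sitting inside $\alg{F}(\mc{O})$, and to read off $\psi$ as a single member of the latter. Write the irrep according to which $X$ transforms as $\xi$, and let $X_1, \dots, X_d$ be a multiplet containing $X$, say $X = X_1$, so that $\alpha_g(X_i) = \sum_j u_{ji}^\xi(g) X_j$. Since the field net is complete, every irrep of $G$ arises as the statistics of some irreducible DHR sector; choosing such a sector $\rho$ with statistics $\xi$ and transporting it into $\mc{O}$, the Doplicher--Roberts construction supplies a Hilbert space $H_\rho \subset \alg{F}(\mc{O})$ transforming according to $\xi$, with an orthonormal multiplet basis $\psi_1, \dots, \psi_d$ satisfying $\psi_i^* \psi_j = \delta_{ij} I$ and $\alpha_g(\psi_i) = \sum_j u_{ji}^\xi(g)\psi_j$. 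Securing this $\rho$ with $H_\rho$ localised in the \emph{same} $\mc{O}$ and carrying the \emph{same} $\xi$ is the one genuinely structural input; it rests on completeness of the field net, the correspondence between DHR sectors and irreps of $G$ from~\cite{MR1062748}, and transportability of $\rho$.

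With both multiplets in hand, set
\[
  B = \sum_{i=1}^d X_i \psi_i^*.
\]
I would first check $G$-invariance. Expanding $\alpha_g(B) = \sum_i \alpha_g(X_i)\,\alpha_g(\psi_i^*)$ and collecting terms, the coefficient of $X_j\psi_k^*$ is $\sum_i u_{ji}^\xi(g)\,\overline{u_{ki}^\xi(g)}$, which equals $\delta_{jk}$ because $\xi$ is unitary; hence $\alpha_g(B) = \sum_j X_j\psi_j^* = B$ for all $g$, i.e. $B \in G'$. Next, using $\psi_i^*\psi_1 = \delta_{i1} I$ one obtains $B\psi_1 = X_1 = X$, so $\psi := \psi_1 \in H_\rho$ does the job and transforms according to $\xi$, exactly as required. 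This part is a routine computation once the two multiplets are available.

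It remains to place $B$ inside $\alg{B}$. Each $\psi_i^*$ lies in $\alg{F}(\mc{O}) \subset \alg{B}$, so it suffices to know that the entire multiplet $X_1, \dots, X_d$ lies in $\alg{B}$. Here the hypotheses must be used with care, and this is where I expect the real work to be: I would invoke $G$-invariance of $\alg{B}$ (automatic for the algebras occurring in the applications, since the gauge action preserves localisation) to get $\alpha_g(X) \in \alg{B}$ for all $g$, and then recover the partners $X_k$ from $X = X_1$ by a \emph{finite} linear combination of the $\alpha_g(X)$. Indeed, since $\xi$ is irreducible the vectors $(u_{j1}^\xi(g))_j$ span $\mathbb{C}^d$ as $g$ ranges over $G$, so one may select $g_1, \dots, g_d$ whose coefficient vectors form an invertible matrix and solve for each $X_k$ as a finite combination of $\alpha_{g_1}(X), \dots, \alpha_{g_d}(X)$; this keeps the $X_k$ in $\alg{B}$ without passing to any weak closure, giving $B \in \alg{B} \cap G'$. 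I would flag the $G$-invariance of $\alg{B}$ as a standing assumption to be recorded or verified in each application, since without it the multiplet of $X$ need not lie in $\alg{B}$ and the construction of $B$ collapses.
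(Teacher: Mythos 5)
Your proof is correct and takes essentially the same route as the paper's: complete $X$ to a multiplet, use completeness of the field net to obtain a multiplet basis $\psi_1,\dots,\psi_d$ of a Hilbert space $H_\rho \subset \alg{F}(\mc{O})$ transforming according to the same irrep $\xi$, set $B = \sum_{i=1}^d X_i\psi_i^*$, and verify $\alpha_g(B)=B$ from unitarity of $\xi$ and $B\psi_1 = X$. Your final paragraph---recovering the multiplet partners $X_k$ inside $\alg{B}$ as finite linear combinations of $\alpha_{g_1}(X),\dots,\alpha_{g_d}(X)$ under a $G$-invariance assumption on $\alg{B}$---addresses a point the paper's proof leaves implicit (and which does hold in its applications, e.g.\ $\alg{B} = \alg{F}(\spc{C})'$ is globally $\alpha_g$-invariant), so this is a legitimate sharpening rather than a divergence.
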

This decomposition is not unique, but depends on the specific choice of $H_\rho$.
\begin{proof}
	Complete $X$ to a multiplet $X_1, \dots X_d$. Without loss of generality, assume $X = X_1$. Let $\xi$ denote the representation according to which $X$ transforms. Since the field net has full spectrum, there is a Hilbert space $H_\rho$ in $\alg{F}(\mc{O})$, such that $H_\rho$ transforms according to $\xi$. Note that the equivalence class of $\rho$ corresponds to the class of the representation $\xi$. If $u_{ji}^\xi$ are the matrix coefficients describing the transformation of the multiplet, it is possible to choose an orthonormal basis $\psi_i$ of $H_\rho$ such that $\alpha_g(\psi_i) = \sum_{j=1}^d u_{ji}^{\xi}(g) \psi_j$. Now define
\[
	B = \sum_{i=1}^d X_i \psi_i^*.
\]
Since $\xi$ is a unitary representation, it follows that $\alpha_g(B) = B$, i.e. $B \in \alg{B} \cap G'$. Moreover, taking $\psi = \psi_1$, it follows that $B \psi = X_1 = X$.
\end{proof}

Now that we have the field net $\alg{F}$ at hand, it is possible to construct an auxiliary algebra with respect to $\alg{F}$, analogous to the one defined in terms of the algebra of observables $\alg{A}$. Hence we define
\[
\alg{F}^{\spc{S}_a} = \overline{\bigcup_{x \in \mathbb{M}^3} (\alg{F}( (\spc{S}_a + x)'))''}^{\|\cdot\|},
\]
where the closure in norm is taken. 

Since the observable net embeds into the field net, one expects the auxiliary algebra of the observable net to embed into the auxiliary algebra of the field net. The next lemma demonstrates that this is indeed the case.
\begin{lemma} Let $(\pi, G, \alg{F})$ be a complete normal field net for $(\alg{A}, \omega_0)$. Then the representation $(\pi, \mc{H})$ of $\alg{A}$ can be uniquely extended to a faithful representation $\pi^{\spc{S}_a}: \alg{A}^{\spc{S}_a} \to \mc{B}(\mc{H})$ that is weakly continuous on $\alg{A}( (\spc{S}_a+x)')''$.
\end{lemma}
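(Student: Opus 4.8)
The plan is to build the extension locally, on each of the von Neumann algebras $\mc{M}_x := \alg{A}((\spc{S}_a+x)')''$, then to glue these pieces and finally to pass to the norm closure $\alg{A}^{\spc{S}_a} = \overline{\bigcup_x \mc{M}_x}$ by continuity. Uniqueness should be disposed of first, as it is immediate and organises everything else: since $\alg{A}((\spc{S}_a+x)')$ is weakly dense in $\mc{M}_x$, the Kaplansky density theorem shows that any two maps that agree with $\pi$ there and are weakly continuous on $\mc{M}_x$ must agree on the unit ball, hence on all of $\mc{M}_x$; as $\bigcup_x \mc{M}_x$ is norm dense in $\alg{A}^{\spc{S}_a}$, a norm-continuous extension is then pinned down everywhere. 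Thus the entire content of the lemma is the \emph{existence} of a weakly continuous extension on each $\mc{M}_x$, together with faithfulness.

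For existence I would exploit the structure of the field net. As a representation of $\alg{A}$ on $\mc{H}$, the representation $\pi$ decomposes as a direct sum $\pi \cong \bigoplus_i \pi_0\circ\rho_i$ of DHR representations, the irreducible finite-statistics sectors occurring with multiplicity equal to the dimension of the associated $G$-irrep; this is the defining feature of a complete normal field net, cf.~\cite{MR1062748}. Each $\rho_i$ is a DHR, hence in particular a BF, representation, so Lemma~\ref{lem:extend} furnishes an extension $\rho_i^{\spc{S}_a}$ that is weakly continuous on every $\mc{M}_x$; since the vacuum representation is the identity on $\alg{A}$ under our identification $\pi_0(\alg{A}) = \alg{A} \subset \mc{B}(\mc{H}_0)$, this $\rho_i^{\spc{S}_a}$ is exactly a weakly continuous extension of the summand $\pi_0\circ\rho_i$. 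Letting $U \colon \mc{H} \to \bigoplus_i \mc{H}_0$ implement the decomposition, I would then set $\pi^{\spc{S}_a}(A) := U^*\bigl(\bigoplus_i \rho_i^{\spc{S}_a}(A)\bigr)U$ on $\mc{M}_x$. The step requiring genuine care — and the main obstacle — is that a possibly infinite direct sum of normal maps is again normal: weak continuity need only be tested on bounded nets, where a uniform norm bound on the images together with dominated convergence allows the weak limit to be interchanged with the infinite sum of matrix elements. Conjugation by the fixed unitary $U$ then preserves weak continuity.

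It remains to glue the local extensions and to verify faithfulness. Compatibility across different $x$ is a consequence of the uniqueness above: for $\mc{M}_x \subset \mc{M}_y$ the restriction to $\mc{M}_x$ of the extension on $\mc{M}_y$ is a weakly continuous extension of $\pi$ on $\mc{M}_x$, hence equals the one already constructed. This makes $\pi^{\spc{S}_a}$ a well-defined $*$-homomorphism on the directed union $\bigcup_x \mc{M}_x$, which extends to $\alg{A}^{\spc{S}_a}$ by contractivity of $*$-homomorphisms. For faithfulness I would use that $\pi_0$ is a subrepresentation of $\pi$: the projection onto $\mc{H}_0$ commutes with $\pi(\alg{A}((\spc{S}_a+x)'))$ and therefore, by weak continuity, with $\pi^{\spc{S}_a}(\mc{M}_x)$, so $\mc{H}_0$ reduces $\pi^{\spc{S}_a}$ and the compression to $\mc{H}_0$ is the weakly continuous extension of the identity inclusion, i.e.\ the identity on $\mc{M}_x$. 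Hence $\pi^{\spc{S}_a}(A) = 0$ forces $A = 0$ for $A \in \mc{M}_x$; a faithful $*$-homomorphism of $C^*$-algebras is isometric, so $\pi^{\spc{S}_a}$ is isometric on each $\mc{M}_x$, thus on their union, and consequently faithful on the norm closure $\alg{A}^{\spc{S}_a}$.
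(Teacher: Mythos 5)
Your argument is correct, and its existence part coincides with the paper's own proof: both decompose $\pi$, as a representation of $\alg{A}$, into the direct sum $\bigoplus_{\xi \in \widehat{G}} d_\xi \pi_\xi$ of DHR representations, extend each summand by Lemma~\ref{lem:extend}, and define $\pi^{\spc{S}_a}$ as the direct sum of the extensions. The differences lie in what you prove explicitly and in the faithfulness step. The paper's proof is a sketch that leaves uniqueness, the gluing over the directed family of algebras $\alg{A}((\spc{S}_a+x)')''$, and the normality of the countable direct sum entirely implicit; you supply all three (weak density plus Kaplansky for uniqueness on each local von Neumann algebra, compatibility of the local extensions via that uniqueness, and the uniform-bound/tail-estimate argument showing that an infinite direct sum of uniformly bounded normal maps is again weakly continuous on bounded nets --- a genuine point, since $\widehat{G}$ may be infinite). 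For faithfulness the paper defers to the construction of a left inverse $\varphi$ of $\pi^{\spc{S}_a}$ ``as in~\cite{MR660538}'', where the left inverse arises from weak limit points of spacelike-transported intertwiners; you instead exploit that $\pi_0$ is a subrepresentation of $\pi$ (item (ii) of Definition~\ref{def:fieldnet}), so the projection $E_0$ onto $\mc{H}_0$ commutes with $\pi^{\spc{S}_a}(\alg{A}((\spc{S}_a+x)')'')$ by weak continuity, and the compression to $\mc{H}_0$ is forced by uniqueness to be the identity on $\alg{A}((\spc{S}_a+x)')''$. This is in effect an explicit left inverse, obtained without the Buchholz--Fredenhagen limiting construction, and it buys a shorter, self-contained faithfulness proof tailored to the field-net situation, at the cost of using the specific structure of $\pi$ rather than the general machinery; combined with the standard fact that an injective $^*$-homomorphism is isometric, it correctly yields faithfulness on the norm closure $\alg{A}^{\spc{S}_a}$.
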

\begin{proof}
  Write $\widehat{G}$ for the set of equivalence classes of irreducible representations of the group $G$. The representation $(\pi, \mc{H})$, viewed as a representation of $\alg{A}$, is a direct sum $\oplus_{\xi \in \hat{G}} d_\xi \pi_\xi$, where each $\pi_\xi$ is a DHR representation~\cite{MR1062748}. We will extend each $\pi_\xi$ to a representation $\pi^{\spc{S}_a}_\xi$ of $\alg{A}^{\spc{S}_a}$, and set $\pi^{\spc{S}_a} = \oplus_{\xi \in \hat{G}} d_\xi \pi_\xi^{\spc{S}_a}$. So consider such a representation $\pi_\xi$. By Lemma~\ref{lem:extend}, $\pi_\xi$ has a unique weakly continuous extension. In fact, since $\pi_\xi$ is localised in a bounded region, it follows in particular that $\pi^{\spc{S}_a}_\xi$ is an endomorphism of $\alg{A}^{\spc{S}_a}$, viewed as a subalgebra of $\alg{B}(\mc{H})$.
	
	To see that $\pi^{\spc{S}_a}$ is faithful, construct a left inverse $\varphi$ of $\pi^{\spc{S}_a}$, as in~\cite{MR660538}.
\end{proof}
This result makes it possible to identify $\alg{A}^{\spc{S}_a}$ with the subalgebra $\pi^{\spc{S}_a}(\alg{A}^{\spc{S}_a})$ of $\mc{B}(\mc{H})$. When there is no risk of confusion, we will sometimes identify $A \in \alg{A}^{\spc{S}_a}$ with its image $\pi^{\spc{S}_a}(A)$.

It is fruitful to investigate the relationship between the auxiliary algebra and the action of the symmetry group. Just as the observable net consists of precisely those operators that are fixed by the $G$-action on the field net, the same is true for the auxiliary algebras.
\begin{lemma}
  Let $(\pi, \mc{H}, \alg{F}, G)$ be a normal field net. Then:
  \begin{enumerate}
	\item For each spacelike cone, $\alg{F}(\spc{C})' \cap G' = \pi(\alg{A}(\spc{C}'))''$.
	\item The fixpoint algebra is given by $\left(\alg{F}^{\spc{S}_a}\right)^G = \pi^{\spc{S}_a}(\alg{A}^{\spc{S}_a})$.
  \end{enumerate}
  \label{lem:fieldfix}
\end{lemma}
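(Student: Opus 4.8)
The plan is to run both statements through the $G$-invariant conditional expectation
\[
m(F) = \int_G \alpha_g(F)\, dg, \qquad F \in \mc{B}(\mc{H}),
\]
which projects onto $\mc{B}(\mc{H})^G = G'$, is norm-contractive and weakly continuous (each $\alpha_g = \Ad g$ is normal and $G$ is compact), and which by the defining fixpoint property of the field net (Definition~\ref{def:fieldnet}) satisfies $m(\alg{F}(\mc{O})) = \pi(\alg{A}(\mc{O}))$ on double cones. The key intermediate fact I would isolate first is that for any region $\spc{S}$ (a spacelike cone, or the causal complement of one) the $G$-fixpoints of the weakly closed field algebra recover the observables,
\[
\alg{F}(\spc{S})'' \cap G' = \pi(\alg{A}(\spc{S}))''.
\]
The inclusion $\supseteq$ is immediate, since $\pi(\alg{A}(\spc{S})) \subset \alg{F}(\spc{S})$ is pointwise $G$-fixed and the left-hand side is weakly closed. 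For $\subseteq$ I would write $\alg{F}(\spc{S})'' \cap G' = m(\alg{F}(\spc{S})'')$, note $m(\alg{F}(\spc{S})) = \pi(\alg{A}(\spc{S}))$ (the double-cone identity extended by taking unions over $\mc{O} \subset \spc{S}$ and using norm-continuity of $m$), and invoke normality of $m$ to get $m(\alg{F}(\spc{S})'') \subseteq \overline{m(\alg{F}(\spc{S}))}^{w} = \pi(\alg{A}(\spc{S}))''$.

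For part~(i), the inclusion $\pi(\alg{A}(\spc{C}'))'' \subseteq \alg{F}(\spc{C})' \cap G'$ is the easy direction: locality gives $\alg{F}(\spc{C}') \subseteq \alg{F}(\spc{C})'$ because $\spc{C}$ and $\spc{C}'$ are spacelike separated, while $\pi(\alg{A}(\spc{C}'))$ is $G$-invariant, so the weak closure lands in $\alg{F}(\spc{C})' \cap G'$. For the reverse inclusion I would invoke Haag duality for the field net in the bosonic case, that is $\alg{F}(\spc{C})' = \alg{F}(\spc{C}')''$ on $\mc{H}$ (the representation on $\mc{H}$ is the vacuum representation of $\alg{F}$, as $\mc{H}_0$ is cyclic for the field net). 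Intersecting with $G'$ and applying the intermediate fact with $\spc{S} = \spc{C}'$ yields
\[
\alg{F}(\spc{C})' \cap G' = \alg{F}(\spc{C}')'' \cap G' = \pi(\alg{A}(\spc{C}'))'',
\]
as required.

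Part~(ii) should then follow without further duality input. Each $\alg{F}((\spc{S}_a + x)')''$ is globally $G$-invariant, hence so is $\alg{F}^{\spc{S}_a}$, and $(\alg{F}^{\spc{S}_a})^G = m(\alg{F}^{\spc{S}_a})$. Since $m$ is norm-continuous it commutes with the norm closure of the directed union, and the intermediate fact applied to $\spc{S} = (\spc{S}_a + x)'$ gives
\[
m\big(\alg{F}((\spc{S}_a + x)')''\big) = \pi(\alg{A}((\spc{S}_a + x)'))'' = \pi^{\spc{S}_a}(\alg{A}((\spc{S}_a + x)')''),
\]
the last identity by weak continuity of $\pi^{\spc{S}_a}$ on $\alg{A}((\spc{S}_a+x)')''$. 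Taking the union over $x \in \mathbb{M}^3$ and the norm closure, and using that $\pi^{\spc{S}_a}$ is faithful (hence isometric, so it commutes with the norm closure), produces $(\alg{F}^{\spc{S}_a})^G = \pi^{\spc{S}_a}(\alg{A}^{\spc{S}_a})$.

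The main obstacle is the reverse inclusion in part~(i): everything else is bookkeeping with $m$, but $\alg{F}(\spc{C})' = \alg{F}(\spc{C}')''$ is genuine Haag duality for the field net at the level of spacelike cones, a Doplicher--Roberts type statement rather than a formal consequence. I would either cite it from the construction of $\alg{F}$ (e.g.~\cite{MR1062748,MR1828981}) or, if it is unavailable in this generality, reprove it by reducing to the vacuum sector: an element $T \in \alg{F}(\spc{C})' \cap G'$ is block-diagonal along the $G$-isotypic decomposition of $\mc{H}$, its vacuum block lies in $\pi_0(\alg{A}(\spc{C}))' = \pi_0(\alg{A}(\spc{C}'))''$ by Haag duality~\eqref{eq:haagd}, and one transports this conclusion across the remaining sectors using the charged field multiplets of Lemma~\ref{lem:irrdecompose}. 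A secondary point to watch is that every interchange of $m$ with a closure is justified by the right continuity property: normality of $m$ for the weak closures appearing in the intermediate fact, and norm-continuity of $m$ (and isometry of $\pi^{\spc{S}_a}$) for the norm closures defining $\alg{A}^{\spc{S}_a}$ and $\alg{F}^{\spc{S}_a}$.
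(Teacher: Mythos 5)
Your plan is sound in outline, and your treatment of part~(ii) is essentially the paper's own proof: the paper uses the same conditional expectation $\mathcal{E}(F)=\int_G \alpha_g(F)\,dg$, its norm- and weak-continuity, and part~(i) applied to the regions $(\spc{S}_a+x)'$, together with normality of $\pi^{\spc{S}_a}$ on the weak closures $\alg{A}((\spc{S}_a+x)')''$. Your ``intermediate fact'' $\alg{F}(\spc{S})''\cap G' = \pi(\alg{A}(\spc{S}))''$ is also correct as stated and is duality-free. The genuine problem is your primary route for the hard inclusion in part~(i). Haag duality for \emph{spacelike cones} for the field net, $\alg{F}(\spc{C})' = \alg{F}(\spc{C}')''$, is not available at this point: in this paper it is precisely the content of Theorem~\ref{thm:trivdhr}, whose proof \emph{uses} Lemma~\ref{lem:fieldfix}(i) (together with Lemma~\ref{lem:irrdecompose}) to establish it. Nor can it be cited from the construction of $\alg{F}$: Doplicher--Roberts~\cite{MR1062748} yield (twisted) duality for \emph{double cones} only, and cone duality for $\alg{F}$ in $d=2+1$ is exactly what the present paper must prove. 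So invoking it here is circular, and your fallback is not a secondary option but the only admissible argument.

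Fortunately, the fallback you sketch is in substance the paper's proof, with the correct duality input at the right place: since $T \in G'$, the subspace $\mc{H}_0$ is $T$-invariant; cyclicity of $\mc{H}_0$ for $\alg{F}(\spc{C})$ (item~(vi) of Definition~\ref{def:fieldnet}) implies $T$ is determined by $T|_{\mc{H}_0}$ (commute $T$ through $\alg{F}(\spc{C})$ applied to $\mc{H}_0$); and cutting down by the projection $E_0$ onto $\mc{H}_0$ gives $T|_{\mc{H}_0} \in \pi(\alg{A}(\spc{C}))'E_0 = \pi_0(\alg{A}(\spc{C}))' = \pi_0(\alg{A}(\spc{C}'))''$, where the only duality used is \emph{vacuum} cone duality~\eqref{eq:haagd} for $\alg{A}$, which is an assumption of the paper. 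Two remarks on your version of this step. First, Lemma~\ref{lem:irrdecompose} is not actually needed to ``transport across the sectors'': $T$ is already $G$-invariant, so no isotypic decomposition of $T$ is required; determination via cyclicity suffices (the multiplet argument is how Theorem~\ref{thm:trivdhr} later exploits the lemma for a general $F \in \alg{F}(\spc{C})'$, not how one proves the lemma itself). Second, both you and the paper gloss a small lifting step: from $T|_{\mc{H}_0} \in \pi_0(\alg{A}(\spc{C}'))''$ one must produce $\widetilde{T} \in \pi(\alg{A}(\spc{C}'))''$ with $\widetilde{T}E_0 = T|_{\mc{H}_0}$ and then conclude $T = \widetilde{T}$ by the uniqueness from cyclicity; surjectivity of the cut-down map $X \mapsto XE_0$ onto $\pi_0(\alg{A}(\spc{C}'))''$ holds because it is a normal homomorphism of von Neumann algebras (hence has weakly closed range containing $\pi_0(\alg{A}(\spc{C}'))$), but it deserves a sentence.
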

\begin{proof}
(i)  First of all, since $\pi(\alg{A})'' = G'$ and $\alg{A}(\spc{C}')$ is a subalgebra of $\alg{A}$, it is obvious that $\pi(\alg{A}(\spc{C}'))'' \subseteq G'$. From relative locality, $\pi(\alg{A}(\spc{C}')) \subseteq \alg{F}(\spc{C})'$. By taking double commutants, $\pi(\alg{A}(\spc{C}'))'' \subseteq \alg{F}(\spc{C})'$.

Note that for each double cone $\mc{O}$, $\mc{H}_0$ is cyclic for $\alg{F}(\mc{O})$, hence also for $\alg{F}(\spc{C})$. This implies that an element $T \in \alg{F}(\spc{C})' \cap G'$ is uniquely determined by its restriction to $\mc{H}_0$. Furthermore, $\mc{H}_0$ is an invariant subspace for $T$, since $T \in G'$. We have $\alg{F}(\spc{C})' \cap G' \subseteq \pi(\alg{A}(\spc{C}))'$, so if $E_0$ denotes the projection onto $\mc{H}_0 \subset \mc{H}$, it follows that
\[
	T |_{\mc{H}_0} \in \pi(\alg{A}(\spc{C}))' E_0 = \pi_0(\alg{A}(\spc{C}))' = \pi_0(\alg{A}(\spc{C}'))''.
\]
The last step follows by Haag duality for spacelike cones in the vacuum representation.

(ii) Note that $\alpha_g$ extends to $\mc{B}(\mc{H})$, where $\mc{H}$ is the Hilbert space on which $\alg{F}$ acts irreducibly. Using the Haar measure of $G$, one can define a conditional expectation $\mathcal{E}: \alg{F} \to \alg{A}$ by
\[
\mathcal{E}(A) = \int_G \alpha_g(A) dg.
\]
It then follows that
\[
\mc{E}\left(\alg{F}^{\spc{S}_a}\right) = 
\overline{\mc{E}\left(\bigcup_{x \in \mathbb{M}^3} \alg{F}(\spc{S}_a+x)'\right)}^{\norm{\cdot}} = 
  \overline{\bigcup_{x \in \mathbb{M}^3} \mathcal{E}(\alg{F}(\spc{S}_a+x)')}^{\norm{\cdot}}.  
\]
where we used that $\mathcal{E}$ is weak- and norm-continuous~\cite{MR0258394}. Now by part (i) it follows that $\mc{E}(\alg{F}(\spc{S}_a+x)') = \pi^{\spc{S}_a}( \alg{A}(\spc{S}_a+x)')''$, see also~\cite[Lemma 3.2]{MR0258394}. Therefore,
\[
\mc{E}(\alg{F}^{\spc{S}_a})= \overline{\bigcup_x \pi^{\spc{S}_a}( \alg{A} (\spc{S}_a+x)' )''}^{\norm{\cdot}} = 
  \pi^{\spc{S}_a}(\alg{A}^{\spc{S}_a}),
  \]
which proves the claim.
\end{proof}

With the aid of these lemmas it is possible to prove the main result of this section: without fermionic sectors, the field net can be interpreted as an AQFT in its own right, but one without non-trivial DHR sectors. 
\begin{theorem}
	\label{thm:trivdhr}
	Assume that  $\mc{O} \mapsto \alg{A}(\mc{O})$ satisfies the following conditions:
	\begin{enumerate}
		\item there are at most countably many DHR sectors;
		\item there are no fermionic DHR sectors;
		\item each DHR sector with finite statistics is covariant under translations satisfying the spectrum condition.
	\end{enumerate}
	Then the field net $\mc{O} \mapsto \alg{F}(\mc{O})$ satisfies the axioms of an algebraic QFT, i.e. it is a local, translation covariant net satisfying Haag duality and the spectrum condition, hence it also has Property B for spacelike cones. The complete normal field net admits only the trivial DHR representation. 
\end{theorem}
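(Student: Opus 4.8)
The plan is to check the AQFT axioms for $\alg{F}$ one at a time, treating Haag duality and the triviality of the DHR sectors as the substantive points and the rest as bookkeeping. Locality is immediate: in the purely bosonic case, which is exactly what hypothesis (ii) guarantees, item~\eqref{it:commut} of Definition~\ref{def:fieldnet} already states that $\alg{F}(\mc{O}_1)$ and $\alg{F}(\mc{O}_2)$ commute for spacelike separated double cones. For translation covariance I would use hypothesis (iii): writing $\pi \cong \bigoplus_{\xi \in \widehat{G}} d_\xi \pi_\xi$ as a direct sum of DHR representations, each $\pi_\xi$ is translation covariant with positive energy, so it carries a strongly continuous unitary implementation whose generators have spectrum in $\overline{V^{+}}$. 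Assembling these on $\mc{H} = \bigoplus_\xi d_\xi \mc{H}_\xi$ --- here the countability in hypothesis (i) keeps the direct sum and its continuity properties under control --- yields a strongly continuous unitary representation $x \mapsto U(x)$ extending $\mc{U}_0$; one verifies that $U(x)$ commutes with $G$ and that $\Ad U(x)$ carries $\alg{F}(\mc{O})$ onto $\alg{F}(\mc{O}+x)$. The spectrum condition on $\mc{H}$ is then inherited directly from the positive-energy property of the summands, and once locality and the spectrum condition are available, Property~B for spacelike cones follows automatically from~\cite{MR0211705}, exactly as recorded after the statement of Property~B above.

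The heart of the proof is Haag duality for $\alg{F}$, and here I would run a $G$-decomposition argument built on Lemmas~\ref{lem:irrdecompose} and~\ref{lem:fieldfix}. Fix a spacelike cone $\spc{S}$; by causal completeness and taking commutants it suffices to prove the non-trivial inclusion $\alg{F}(\spc{S})' \subseteq \alg{F}(\spc{S}')''$, the reverse inclusion being locality. Given $T \in \alg{F}(\spc{S})'$, I would decompose it into its $G$-isotypic components $T = \sum_\xi T_\xi$; since $G$ preserves the localisation structure, each multiplet generated from $T_\xi$ again lies in $\alg{F}(\spc{S})'$. Now apply Lemma~\ref{lem:irrdecompose}, choosing the charged fields $\psi$ in a Hilbert space $H_\rho \subset \alg{F}(\mc{O})$ with $\mc{O} \subset \spc{S}'$ a double cone in the causal complement, so that $\psi \in \alg{F}(\spc{S}')$. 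This writes $T_\xi = B_\xi \psi$ with $B_\xi = \sum_i (T_\xi)_i \psi_i^*$. The factor $B_\xi$ is $G$-invariant, and because the multiplet members lie in $\alg{F}(\spc{S})'$ while the $\psi_i \in \alg{F}(\spc{S}')$ commute with $\alg{F}(\spc{S})$ by (bosonic) locality, one finds $B_\xi \in \alg{F}(\spc{S})' \cap G'$. By Lemma~\ref{lem:fieldfix}(i) this relative commutant is exactly $\pi(\alg{A}(\spc{S}'))'' \subseteq \alg{F}(\spc{S}')''$, whence $T_\xi = B_\xi \psi \in \alg{F}(\spc{S}')''$, and summing gives $T \in \alg{F}(\spc{S}')''$. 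The same argument with double cones in place of spacelike cones yields Haag duality for double cones. The crucial role of hypothesis (ii) is precisely that, with no fermionic sectors, locality is honest commutativity rather than graded commutativity, so the $\psi_i$ commute with $\alg{F}(\spc{S})$ and the Doplicher--Roberts twist is trivial; otherwise the argument would only close up to a twisted duality relation.

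For the final assertion that the complete normal field net admits only the trivial DHR representation, I would appeal to the Conti--Doplicher--Roberts analysis~\cite{MR1828981}. Having just established that $\alg{F}$ is itself a local, translation covariant, Haag-dual net satisfying the spectrum condition, one may carry out the DHR superselection analysis inside $\alg{F}$. Every DHR endomorphism of $\alg{F}$ restricts to a transportable localised endomorphism of the fixpoint net $\alg{A} = \alg{F}^G$, whose sector is already realised inside $\alg{F}$ as one of the charged Hilbert spaces $H_\rho$ by completeness of the field net (item (viii) of Definition~\ref{def:fieldnet}). Since every such charge has thus been ``filled in'' by genuine field operators, no non-trivial transportable localised endomorphism of $\alg{F}$ can survive, and the DHR category of $\alg{F}$ collapses to its trivial object.

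The step I expect to be the genuine obstacle is Haag duality, and within it the adaptation to the spacelike-cone, $d=2+1$ setting rather than the double-cone, $d \ge 4$ setting in which the Doplicher--Roberts duality theory is usually phrased. Two points require care. First, a general $T \in \alg{F}(\spc{S})'$ is not a finite sum of irreducible tensors, so the isotypic decomposition and the subsequent reassembly $T = \sum_\xi B_\xi \psi$ must be controlled as weak limits, using that the irreducible tensors are weakly dense in $\alg{F}$ and that there are only countably many sectors (hypothesis (i)). Second, one must confirm that the charges $\rho$ can genuinely be localised in double cones inside $\spc{S}'$ and that Lemma~\ref{lem:fieldfix}(i), proven via Haag duality for spacelike cones in the vacuum sector, delivers the relative commutant in precisely the form needed. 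By contrast, covariance, the spectrum condition, and Property~B are essentially routine once the decomposition machinery is in place.
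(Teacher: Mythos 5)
Your proposal is correct and follows essentially the same route as the paper: locality from item~(vii) of Definition~\ref{def:fieldnet} in the bosonic case, translation covariance and the spectrum condition assembled sector-wise (which the paper delegates to Section~6 of~\cite{MR1062748}), Property~B from the spectrum condition and locality, cone duality via exactly the paper's combination of Lemma~\ref{lem:irrdecompose} and Lemma~\ref{lem:fieldfix}(i) together with weak density of irreducible tensors, and triviality of the DHR sectors by appeal to~\cite{MR1828981}. The extra care you flag about the isotypic decomposition of a general $T \in \alg{F}(\spc{C})'$ being a weak limit rather than a finite sum is a correct unpacking of the paper's terse ``dense subset'' step, so nothing needs fixing.
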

\begin{proof}
	Isotony follows, since the field net is, in particular, a net. Since we assumed the absence of fermionic sectors, twisted duality for the field net reduces to Haag duality for double cones. Thus only the questions of translation covariance and duality for spacelike cones remain. The covariance properties follow from the results in Section 6 of~\cite{MR1062748}, and the assumption that we only have translation covariant sectors. In fact, one can show in this case that the representation $\pi$ of $\alg{F}$ is translation covariant. The generators of translations again satisfy the spectrum condition and the vacuum vector $\Omega$ is invariant under the action of the translation group~\cite[Section 6]{MR1062748}. By the same reasoning as before, Property B follows.
  
  To prove duality for spacelike cones, consider such a cone $\spc{C}$. First, note that by locality $\alg{F}(\spc{C}')'' \subset \alg{F}(\spc{C})'$. Let $F \in \alg{F}(\spc{C})'$ transform irreducibly under the action of $G$. But then by Lemma~\ref{lem:irrdecompose}, $F = B \psi$, where $B \in \alg{F}(\spc{C})' \cap G'$ and $\psi \in H_\rho$. Applying Lemma~\ref{lem:fieldfix} gives $B \in \pi(\alg{A}(\spc{C}'))''$ and, since $H_\rho \subset \alg{F}(\spc{C}')$, one obtains $F \in \alg{F}(\spc{C}')''$. The irreducible tensors form a dense subset, which allows us to conclude $\alg{F}(\spc{C}')'' = \alg{F}(\spc{C})'$. Taking commutants then proves Haag duality.

For the last assertion, note that the observable net is embedded in the field net. More precisely, we have an inclusion of subsystems $\alg{A} \subset \alg{F}$. By~\cite[Theorem 4.7]{MR1828981}, every DHR representation of the field net $\alg{F}$ with finite statistics is a direct sum of representations with finite statistics. Moreover, these sectors are labelled by the equivalence classes of irreducible representations of a compact group $L$, such that $\alg{F}(\alg{A})^L = \alg{B}$ (see also~\cite[Theorem 4.1]{MR1828981}). But in this case, $\alg{B} = \alg{F}(\alg{A}) = \alg{F}$, hence $L$ is the trivial group and the only irreducible DHR sector is the vacuum sector.
\end{proof}
Let us briefly comment on the assumptions of Theorem~\ref{thm:trivdhr}. The first condition is a technical one, needed for the results in~\cite{MR1828981} and Corollary~\ref{cor:restrict} below. By construction of the field net, DHR sectors are in 1-1 correspondence with irreps of $G$, hence $\widehat{G}$, the set of irreps of $G$, is also countable. The second condition implies that the field net satisfies ordinary locality, as opposed to twisted locality. The final condition is needed to lift the translation covariance of $\alg{A}$ to the field net. As mentioned before, by weak additional assumptions on $\alg{A}$, it follows automatically that every DHR sector with finite statistics is translation covariant. Therefore, the conditions appear not to be unreasonably restrictive. From now on, we will assume that $\alg{A}$ satisfies all assumptions in the theorem.

Roughly speaking, Doplicher and Roberts construct the field net as a crossed product of the observable algebras by a semigroup of endomorphisms. As mentioned before, this construction is intimately related to the theory of representations of compact groups. It is therefore not surprising that an alternative construction, based on results on the category of representations of compact groups, exists. Indeed, based on an unpublished manuscript of Roberts and on Deligne's embedding theorem~\cite{MR1106898}, Halvorson and M\"uger describe such a construction~\cite{halvapp,mmappendix}, which is of a more algebraic nature compared to the original analytic approach. Since the algebraic formulation is easier to work with in the present case, the rest of this section will be used to outline the main features of this approach and to fix the notation.

The results in Section~\ref{sec:bf} state that the DHR representations form a symmetric tensor ($C^*$)-category. By Deligne's embedding theorem, this gives rise to a faithful symmetric tensor $^*$-functor $E: \dhr^{\alg{A}} \to \mc{SH}_f$, the category of finite-dimensional (super) Hilbert spaces. The embedding theorem also gives a compact supergroup $(G,k)$ of natural monoidal transformations of $E$, and an equivalence of categories such that $\dhr^{\alg{A}}$ is equivalent to $\operatorname{Rep}_f(G,k)$. All monoidal categories and functors are assumed to be strict, unless noted otherwise.  The ``super'' structure gives a $\mathbb{Z}_2$-grading on the Hilbert spaces, corresponding to the action of a central element $k \in G$ such that $k^2 = e$. Since we assumed that all DHR sectors are bosonic, we can forget about the super structure. The group $G$ from the embedding theorem will be the symmetry group.

The embedding functor $E$ associates to each DHR endomorphism $\rho$ a Hilbert space $E(\rho)$. Using this embedding functor $E$, we first construct a field algebra $\alg{F}_0$. We cite the definition:
\begin{definition}
  The field algebra $\alg{F}_0$ consists of triples $(A, \rho, \psi)$, where $A \in \alg{A}$, $\rho \in \dhr$, and $\psi \in E(\rho)$, modulo the equivalence relation
\[
(AT, \rho, \psi) \equiv (A, \rho', E(T)\psi),
\]
for $T$ an intertwiner from $\rho$ to $\rho'$. For $\lambda \in \mathbb{C}$, we have $E(\lambda \id_\rho) = \lambda \id_{E(\rho)}$, hence $(\lambda A, \rho, \psi) = (A, \rho, \lambda\psi)$.
\end{definition}
In particular, it follows that any element with $\psi = 0$, is the zero element of the algebra. One then proceeds by defining a complex-linear structure on this algebra, a multiplication, as well as an involutive $^*$-operation. The multiplication is defined by $(A_1, \rho_1, \psi_1)(A_2, \rho_2, \psi_2) = (A_1 \rho_1(A_2), \rho_1 \otimes \rho_2, \psi_1 \otimes \psi_2)$.

The definition of the $*$-operation is a bit more involved. First, if $H$ and $H'$ are two Hilbert spaces and $S: H \otimes H' \to \mathbb{C}$ is a bounded linear map, one can define an anti-linear map $\mc{J}S: H \to H'$. This map is defined by setting $\langle (\mc{J}S)\psi, \psi'\rangle = S(\psi \otimes \psi')$ for all $\psi \in H, \psi' \in H'$. The brackets denote the inner product on $H'$. If $\rho$ is a DHR endomorphism, choose a conjugate (see the Appendix) $(\overline{\rho}, R, \overline{R})$. The $*$-operation is then defined by $(A, \rho, \psi)^* = (R^* \overline{\rho}(A)^*, \overline{\rho}, \mc{J}E(\overline{R}^*)\psi)$. For a verification that this is well-defined and indeed defines a $*$-algebra, see~\cite{halvapp}. 

Note that this construction is purely algebraic, for instance, there is no norm defined on $\alg{F}_0$. The algebra $\alg{A}$ can be identified with the subalgebra $\{(A, \iota, 1): A \in \alg{A}\}$ of $\alg{F}_0$, and $E(\rho)$ can be identified with the subspace $\{(I, \rho, \psi) : \psi \in E(\rho)\}$.\footnote{These Hilbert spaces $E(\rho)$ play the same role as the Hilbert spaces $H_\rho$ in~\cite{MR1062748}.}

The compact group $G$ associated with the embedding functor $E$ gives rise to an action on $\alg{F}_0$. Recall that the elements of $G$ are monoidal natural transformations of the functor $E$. If $g \in G$, write $g_\rho$ for the component at $\rho$. The action of $G$ on $\alg{F}_0$ is then defined by
\[
	\alpha_g(A, \rho, \psi) = (A, \rho, g_\rho \psi), \quad\quad A \in \alg{A},\quad \psi \in E(\rho).
\]
This is in fact a group isomorphism $g \mapsto \alpha_g$ into $\operatorname{Aut}_\alg{A}(\alg{F}_0)$, the group of automorphisms of $\alg{F}_0$ that leave $\alg{A}$ pointwise fixed. Finally, for a double cone $\mc{O}$, it is possible to define the local $^*$-subalgebra $\alg{F}_0(\mc{O})$ of $\alg{F}_0$, consisting of elements $(A, \rho, \psi)$, with $A \in \alg{A}(\mc{O})$, $\psi \in E(\rho)$, and $\rho$ localized in $\mc{O}$. 

To construct the field net, a faithful, $G$-invariant positive linear projection (in fact, a \emph{conditional expectation}) $m: \alg{F}_0 \to \alg{A}$ is defined. If $\omega_0$ is the vacuum state of $\alg{A}$, the GNS construction on the state $\omega_0 \circ m$ is used to create a representation $(\pi, \mc{H})$ of $\alg{F}_0$. The local algebras are then defined by $\alg{F}(\mc{O}) = \pi(\alg{F}_0(\mc{O}))''$. As usual, the algebra $\alg{F}$ is defined to be the norm closure of the union of all local algebras. Since $m$ is $G$-invariant, the action of $\alpha_g$ is implemented on $\mc{H}$ by unitaries $U(g)$. In other words, $\pi(\alpha_{g}(F)) = U(g) \pi(F) U(g)^*$ for $g\in G$ and $F \in \alg{F}_0$. This action can be extended to $\alg{F}$ in an obvious way. With these definitions, $(\pi, G, \alg{F})$ is a complete normal field net for $(\alg{A}, \omega_0)$ with local commutation relations. In fact, any complete normal field net for $\alg{A}$ is equivalent to the field net constructed here.

The final technical lemma concerns field operators. In the field net there are field operators, which can be interpreted as operators creating the DHR charges from the vacuum state. That is, for a DHR endomorphism $\rho$ there are $\Psi \in \alg{F}$ such that $\rho(A) \Psi = \Psi A$, with $A \in \alg{A}$. It is convenient in calculations to know how this works on the auxiliary algebras. 
\begin{lemma}
Let $\rho$ be an endomorphism of $\alg{A}$ localised in a double cone $\mc{O}$, and take $\psi \in E(\rho)$. Then 
\begin{equation}
  \pi^{\spc{S}_a}( \rho^{\spc{S}_a}(A)) \pi(I, \rho, \psi) = \pi(I, \rho, \psi) \pi^{\spc{S}_a}(A),
  \label{eq:vecmul}
\end{equation}
for all $A \in \alg{A}^{\spc{S}_a}$.
\label{lem:vecmul}
\end{lemma}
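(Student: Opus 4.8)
The plan is to reduce everything to the purely algebraic relation already present in $\alg{F}_0$ and then propagate it to the auxiliary algebra by continuity. Writing $\Psi = \pi(I,\rho,\psi)$ for brevity and identifying $\alg{A}$ with its copy $\{(A,\iota,1)\}$ inside $\alg{F}_0$, the first step is to verify~\eqref{eq:vecmul} for $A \in \alg{A}$. The multiplication rule of $\alg{F}_0$ gives $(I,\rho,\psi)(A,\iota,1) = (\rho(A),\rho,\psi) = (\rho(A),\iota,1)(I,\rho,\psi)$, since $\iota$ is the tensor unit and $1 \in E(\iota) = \mathbb{C}$ is its unit vector. Applying the homomorphism $\pi$, and recalling that $\pi^{\spc{S}_a}$ and $\rho^{\spc{S}_a}$ restrict to $\pi$ and $\rho$ on $\alg{A}$, this is exactly $\pi(\rho(A))\Psi = \Psi\pi(A)$, i.e.~\eqref{eq:vecmul} on $\alg{A}$.

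The second, and main, step is to pass from $\alg{A}$ to the weak closures $\alg{A}((\spc{S}_a+x)')''$ that build up $\alg{A}^{\spc{S}_a}$. The two sides of~\eqref{eq:vecmul} define linear maps $L(A) = \pi^{\spc{S}_a}(\rho^{\spc{S}_a}(A))\Psi$ and $R(A) = \Psi\pi^{\spc{S}_a}(A)$ on $\alg{A}^{\spc{S}_a}$, agreeing on $\alg{A}$ by the first step. To compare them on a given $\alg{A}((\spc{S}_a+x)')''$, I would fix $x$ so that $\mc{O} \subset (\spc{S}_a+x)'$ and choose, via Kaplansky's density theorem, a bounded net $A_\lambda \in \alg{A}((\spc{S}_a+x)')$ converging weakly to $A$. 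Since $\rho$ is localised in $\mc{O} \subset (\spc{S}_a+x)'$, Haag duality gives $\rho(\alg{A}((\spc{S}_a+x)')) \subset \alg{A}((\spc{S}_a+x)')''$, and weak continuity of $\rho^{\spc{S}_a}$ then shows that $\rho^{\spc{S}_a}$ maps $\alg{A}((\spc{S}_a+x)')''$ into itself, precisely a region on which $\pi^{\spc{S}_a}$ is weakly continuous. Using in addition that left and right multiplication by the fixed bounded operator $\Psi$ are weakly continuous, both $L(A_\lambda) \to L(A)$ and $R(A_\lambda) \to R(A)$ weakly; as $L(A_\lambda) = R(A_\lambda)$ by the first step and weak limits are unique, $L(A) = R(A)$ throughout $\alg{A}((\spc{S}_a+x)')''$.

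The last step extends this to all of $\alg{A}^{\spc{S}_a}$ by norm density. The maps $L$ and $R$ are both norm continuous, since $\pi^{\spc{S}_a}$ and $\rho^{\spc{S}_a}$ are contractive and multiplication by $\Psi$ is bounded. Because $\mc{O}$ is bounded, the translates with $\mc{O} \subset (\spc{S}_a+x)'$ are cofinal in the directed set indexing the auxiliary algebra, so $\bigcup_{x:\,\mc{O}\subset(\spc{S}_a+x)'} \alg{A}((\spc{S}_a+x)')''$ is norm dense in $\alg{A}^{\spc{S}_a}$. Hence $L = R$ everywhere, which is~\eqref{eq:vecmul}.

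The step I expect to be the main obstacle is the weak-continuity bookkeeping in the middle paragraph: one must ensure that $\rho^{\spc{S}_a}$ sends the building block $\alg{A}((\spc{S}_a+x)')''$ back into a domain on which $\pi^{\spc{S}_a}$ is weakly continuous, which is exactly what forces the geometric restriction $\mc{O}\subset(\spc{S}_a+x)'$ and the accompanying cofinality argument. By contrast, the algebraic identity in $\alg{F}_0$ and the final norm-density extension are routine.
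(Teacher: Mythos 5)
Your proposal is correct and follows essentially the same route as the paper's proof: verify the identity algebraically for $A \in \alg{A}$ via the multiplication rule in $\alg{F}_0$, then pass to $\alg{A}((\spc{S}_a+x)')''$ by a weakly convergent net, using weak continuity of the extensions and weak continuity of left and right multiplication by the fixed operator $\pi(I,\rho,\psi)$. You additionally make explicit some bookkeeping the paper leaves implicit --- the geometric restriction $\mc{O} \subset (\spc{S}_a+x)'$ guaranteeing that $\rho^{\spc{S}_a}$ lands in a domain of weak continuity of $\pi^{\spc{S}_a}$, the cofinality of such $x$, and the final norm-density step --- all of which is sound.
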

\begin{proof}
  Note that for $A \in \alg{A}$, the equality holds basically by construction of the field net. Now suppose $A \in \alg{A}( (\spc{S}_a+x)')''$. Then there is a net (in the sense of topology) $A_\lambda \to A$ in $\alg{A}( (\spc{S}_a+x)')$ that converges weakly to $A$. Equation~\eqref{eq:vecmul} holds for $A_\lambda$ by the previous remark. The result now follows by weak continuity of the extensions and of separate weak continuity of multiplication.
\end{proof}

\section{Extension to the field net}
\label{sec:extend}
Our next goal is to understand the BF-superselection structure of $\alg{F}$, including the way it is related to that of $\alg{A}$. Now that we have established how the auxiliary algebra is included in the field net, a natural question is how BF representations of $\alg{A}$ can be extended to BF representations of $\alg{F}$. This section is devoted to this problem. At the end of the section we comment on alternative approaches.

If $\widehat{\eta} \in \bfm^{\alg{F}}(\spc{C})$ is an extension of $\eta \in \bfm^{\alg{A}}(\spc{C})$, it follows that 
\[
\alpha_g \circ \widehat{\eta}(A) = \alpha_g \circ \eta(A) = \eta(A) = \widehat{\eta}\circ \alpha_g(A)
\]
for all $A \in \alg{A}$. The next theorem gives a characterisation of extensions such that $\alpha_g \circ \widehat{\eta}(F) = \widehat{\eta}\circ \alpha_g(F)$ for all $F \in \pi(\alg{F}_0)$. Such extensions are in 1-1 correspondence with certain families of unitaries $W_\rho(\eta)$ in $\alg{A}^{\spc{S}_a}$. A proof of this result for extensions of automorphisms was given in~\cite[Thm. 8.2]{MR1005608}. Later, the result of Doplicher and Roberts was adapted to endomorphisms~\cite{MR1721563}. The explicit description of the field net allows us to verify this construction, without invoking e.g.~universality properties as in the original proof.

The first step is to show that we can define an extension on the subalgebra $\pi(\alg{F}_0)$ of $\alg{F}$. We will then extend this to the algebra $\alg{F}$.
\begin{prop}
  Let $\eta$ be a representation of $\alg{A}$. Then representations $\widehat{\eta}$ of $\pi(\alg{F}_0)$ that extend $\eta$ and commute with $\alpha_g$ are in one-to-one correspondence with mappings $(\rho,\eta) \mapsto W_\rho(\eta)$ from $\dhr^{\alg{A}} \times \bfm^{\alg{A}}(\spc{C})$ to unitaries in $\alg{A}^{\spc{S}_a}$ satisfying
  \begin{align}
	W_\rho(\eta) &\in \Hom_\alg{A}(\rho \otimes \eta, \eta \otimes \rho), \label{eq:exthom}; \\
	W_{\rho'}(\eta)(T \otimes I_\eta) &= (I_\eta \otimes T) W_{\rho}(\eta), \quad T \in \Hom_\alg{A}(\rho,\rho'),  \label{eq:extnat}; \\
	W_{\rho \otimes \rho'}(\eta) &= (W_\rho(\eta) \otimes I_{\rho'})(I_\rho \otimes W_{\rho'}(\eta)), \label{eq:extnatrho};\\
	W_\rho(\eta \otimes \eta') &= (I_\eta \otimes W_\rho(\eta'))(W_\rho(\eta) \otimes I_{\eta'}) \label{eq:extnateta}.
  \end{align}
The extension is determined by
\begin{equation}
  \label{eq:extdefine}
  \widehat{\eta}(\pi(A, \rho, \psi)) = \pi^{\spc{S}_a}( \eta^{\spc{S}_a}(A) W_\rho(\eta))\pi(I, \rho, \psi).
\end{equation}
Moreover, if $S \in \Hom_{\alg{A}}(\eta, \eta')$ satisfies $S W_{\rho}(\eta) = W_\rho(\eta') \rho^{\spc{S}_a}(S)$ for all $\rho \in \dhr^{\alg{A}}$ (that is, $W_\rho(\eta)$ is natural in $\eta$), then $\pi^{\spc{S}_a}(S) \in \Hom_{\alg{F}_0}(\hat{\eta}, \hat{\eta}')$.
  \label{prop:extend}
\end{prop}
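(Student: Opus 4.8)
The proposition asserts a bijection between $\alpha$-commuting extensions $\widehat\eta$ of $\eta$ to $\pi(\alg{F}_0)$ and families of unitaries $W_\rho(\eta)$ satisfying~\eqref{eq:exthom}--\eqref{eq:extnateta}, with the extension given explicitly by~\eqref{eq:extdefine}, plus a naturality statement. I would organise the proof into three parts: (A) given a valid family $\{W_\rho(\eta)\}$, show that~\eqref{eq:extdefine} yields a well-defined $*$-homomorphism of $\pi(\alg{F}_0)$ extending $\eta$ and commuting with $\alpha_g$; (B) given such an extension $\widehat\eta$, recover the $W_\rho(\eta)$ and show it satisfies the four relations; (C) verify these two assignments are mutually inverse, and finally prove the ``Moreover'' naturality clause.

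\textbf{Part (A): sufficiency.} First I would check that~\eqref{eq:extdefine} respects the defining equivalence relation $(AT,\rho,\psi)\equiv(A,\rho',E(T)\psi)$ on $\alg{F}_0$: substituting both sides into~\eqref{eq:extdefine} and using that $T\in\Hom_\alg{A}(\rho,\rho')$ together with the naturality relation~\eqref{eq:extnat} forces equality, with $\pi(I,\rho',E(T)\psi)$ absorbing the intertwiner $T$ exactly as~\eqref{eq:extnat} prescribes. Next, well-definedness and multiplicativity: using the product rule $(A_1,\rho_1,\psi_1)(A_2,\rho_2,\psi_2)=(A_1\rho_1(A_2),\rho_1\otimes\rho_2,\psi_1\otimes\psi_2)$, I would expand $\widehat\eta$ of the product and match it against $\widehat\eta(\cdot)\widehat\eta(\cdot)$; the cocycle relation~\eqref{eq:extnatrho} for $W_{\rho_1\otimes\rho_2}$ is precisely what makes the two sides agree, with Lemma~\ref{lem:vecmul} governing how $\pi(I,\rho,\psi)$ commutes past $\pi^{\spc{S}_a}$-elements. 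That $\widehat\eta$ restricts to $\eta$ on $\alg{A}=\{(A,\iota,1)\}$ is immediate since $W_\iota(\eta)=I$ (which follows from~\eqref{eq:extnatrho} with $\rho'=\iota$). Commutation with $\alpha_g$ follows because $\alpha_g$ acts only on the Hilbert-space slot, $\alpha_g(A,\rho,\psi)=(A,\rho,g_\rho\psi)$, and $g_\rho$ commutes with the $\pi^{\spc{S}_a}$-factor in~\eqref{eq:extdefine}. The $*$-property requires checking compatibility of~\eqref{eq:extdefine} with the conjugate data $(\overline\rho,R,\overline R)$; this is where relation~\eqref{eq:exthom} (that $W_\rho(\eta)$ is a genuine intertwiner $\rho\otimes\eta\to\eta\otimes\rho$) is essential, so that conjugation of $W_\rho(\eta)$ reproduces $W_{\overline\rho}(\eta)$ via the standard conjugate-equation manipulation.

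\textbf{Part (B): necessity, and the inverse.} Conversely, given an $\alpha$-commuting extension $\widehat\eta$, I define $W_\rho(\eta):=\pi^{\spc{S}_a}(\eta^{\spc{S}_a})^{-1}$-adjusted comparison of $\widehat\eta$ on the generators $\pi(I,\rho,\psi)$. Concretely, since $\widehat\eta$ extends $\eta$ and commutes with $\alpha_g$, its value on $\pi(I,\rho,\psi)$ must transform under $G$ according to the same irrep as $\psi$; by the harmonic-analysis decomposition (Lemma~\ref{lem:irrdecompose}) $\widehat\eta(\pi(I,\rho,\psi))$ factors as a $G$-invariant operator (hence in $\alg{A}^{\spc{S}_a}$ by Lemma~\ref{lem:fieldfix}) times a charged field operator, and extracting that $G$-invariant operator defines $W_\rho(\eta)$. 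The four relations~\eqref{eq:exthom}--\eqref{eq:extnateta} are then forced by, respectively, the intertwining property of $\widehat\eta$ as a homomorphism, its respect for the equivalence relation, its multiplicativity on products of charged fields, and its compatibility with the tensor structure $\eta\otimes\eta'=\eta^{\spc{S}_a}\circ\eta'$. That this $W_\rho(\eta)$ is unitary follows from $\widehat\eta$ being a $*$-homomorphism applied to the isometry/coisometry relations among elements of $H_\rho\cong E(\rho)$. The two constructions are visibly inverse, since~\eqref{eq:extdefine} both defines the forward map and is recovered by the extraction in the backward map.

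\textbf{Part (C): naturality, and the main obstacle.} For the ``Moreover'' clause, given $S\in\Hom_\alg{A}(\eta,\eta')$ with $SW_\rho(\eta)=W_\rho(\eta')\rho^{\spc{S}_a}(S)$, I would verify $\pi^{\spc{S}_a}(S)\widehat\eta(F)=\widehat{\eta'}(F)\pi^{\spc{S}_a}(S)$ on generators $F=\pi(A,\rho,\psi)$ by substituting~\eqref{eq:extdefine} for both $\widehat\eta$ and $\widehat{\eta'}$; the hypothesis on $S$ is exactly the identity needed to move $\pi^{\spc{S}_a}(S)$ through the $W_\rho$-factor, while $S\in\Hom_\alg{A}(\eta,\eta')$ and Lemma~\ref{lem:vecmul} handle the $\eta^{\spc{S}_a}(A)$-factor and the charged field $\pi(I,\rho,\psi)$. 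I expect the \emph{main obstacle} to be Part (A)'s verification of the $*$-property together with ensuring every step respects weak continuity on the auxiliary-algebra pieces: the $*$-operation on $\alg{F}_0$ is defined through the map $\mc{J}$ and the conjugate solution $(\overline\rho,R,\overline R)$, so one must show the family relation~\eqref{eq:exthom} interacts correctly with conjugation, i.e. that $W_{\overline\rho}(\eta)$ determined by~\eqref{eq:extnatrho}--\eqref{eq:extnat} is the correct adjoint partner. This is a bookkeeping-heavy but standard conjugate-equation computation; the conceptual content is entirely carried by the four relations, so once those are shown to be both necessary and sufficient the bijection follows.
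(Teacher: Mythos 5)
Your proposal is correct and follows essentially the same route as the paper: the extracted $W_\rho(\eta)$ via the harmonic-analysis decomposition (Lemma~\ref{lem:irrdecompose}) is exactly the paper's explicit formula $W_\rho(\eta)=\sum_i \widehat{\eta}(\Psi_i)\Psi_i^*$ with membership in $\alg{A}^{\spc{S}_a}$ secured by Lemma~\ref{lem:fieldfix}(ii), and your Part (A) matches the paper's verification of well-definedness via~\eqref{eq:extnat}, multiplicativity via~\eqref{eq:extnatrho} and Lemma~\ref{lem:vecmul}, $G$-equivariance via Lemma~\ref{lem:fieldfix}(ii), and the $*$-property via the conjugate-equation computation you correctly single out as the most delicate step. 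The naturality clause is likewise proved by the same direct computation on generators.
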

\begin{proof} To avoid cumbersome notation, $\pi^{\spc{S}_a}(\alg{A}^{\spc{S}_a})$ will be identified with $\alg{A}^{\spc{S}_a}$ in the proof.
  First, assume $\widehat{\eta}$ is a representation of $\alg{F}$ that commutes with the $G$-action. Lemma~\ref{lem:fieldfix} implies that $\widehat{\eta}$ restricts to a representation of $\alg{A}^{\spc{S}_a}$, which we will denote by $\eta$. For $\rho \in \dhr^{\alg{A}}$, write $\Psi_i = \pi(I, \rho, \psi_i)$, where $\psi_i$ is an orthonormal basis of $E(\rho)$. Define
\[
  W_\rho(\eta) = \sum_{i=1}^d \widehat{\eta}(\Psi_i) \Psi_i^*.
\]
This definition is independent of the chosen basis of $E(\rho)$. The $\Psi_i$ generate a Hilbert space with support $I$,~\cite[Proposition 270]{halvapp}, from which it follows that $W_\rho(\eta)$ is unitary. The Hilbert space $E(\rho)$ transforms according to some irreducible representation. Since $\widehat{\eta}$ commutes with the $G$-action, it is easy to verify that $\alpha_g(W_\rho(\eta)) = W_\rho(\eta)$. By Lemma~\ref{lem:fieldfix}(ii), $W_\rho(\eta)$ is a unitary in $\alg{A}^{\spc{S}_a}$. Note that $W_\iota(\eta) = I$, since $\eta$ is unital. Also note that for $\psi \in E(\rho)$, it follows that $W_\rho(\eta) \pi(I, \rho, \psi) = \widehat{\eta}(\pi(I,\rho,\psi))$. Because~\eqref{eq:extdefine} is in particular a $*$-endomorphism (see below for a verification) and $\alg{F}_0$ is generated by elements of this form, we see that $\widehat{\eta}$ can indeed be defined as in~\eqref{eq:extdefine}.

It remains to verify properties~\eqref{eq:exthom}--\eqref{eq:extnateta}. The verification of these properties is quite straightforward. We give a proof of~\eqref{eq:extnat} and leave the rest to the reader. So, let $T \in \Hom_\alg{A}(\rho, \rho')$. Note that $T \in \alg{A}$ by Haag duality for double cones. Then
\begin{align*}
	\sum_i \widehat{\eta}(\pi(T,\rho,\psi_i)) \pi(I, \rho, \psi_i)^* &= \sum_i \widehat{\eta}(\pi(I, \rho', E(T) \psi_i)) \pi(I, \rho, \psi_i)^* \\
  &= \sum_i \pi^{\spc{S}_a}(W_{\rho'}(\eta)) \pi(I, \rho', E(T) \psi_i) \pi(I, \rho, \psi_i)^* \\
  &= \pi^{\spc{S}_a}(W_{\rho'}(\eta)) \pi(T, \iota, 1).
\end{align*}
This is equation~\eqref{eq:extnat}. In the second line equation~\eqref{eq:extdefine} has been used.

As for the converse, we have to show that equation~\eqref{eq:extdefine} indeed defines a $*$-representation of $\pi(\alg{F}_0)$ that extends $\eta$. For $(A, \rho, \psi) \in \alg{F}_0$, define $\hat{\eta}( \pi (A, \rho, \psi) )$ as in equation~\eqref{eq:extdefine}. Note that~\eqref{eq:extnatrho} together with the unitarity of $W_{\iota}(\eta)$ imply that $W_{\iota}(\eta) = I$. Considering the embedding of $\alg{A}$ into $\alg{F}_0$ (by $A \mapsto (A, \iota, 1)$), it follows that $\hat{\eta}(\pi(A, \iota, 1)) = \pi^{\spc{S}_a}(\eta(A))$. This shows that we can view $\hat{\eta}$ as an extension of $\eta$.
  
To check that $\widehat{\eta}$ is well-defined, suppose $(AT, \rho, \psi) = (A, \rho', E(T) \psi)$, with $T$ intertwining $\rho$ and $\rho'$. A simple computation, using $\pi^{\spc{S}_a}(T) = \pi(T)$, and the fact that $\pi$ is well-defined, shows that well-definedness of $\hat{\eta}$ boils down to the identity
\[
\eta(A) W_{\rho'}(\eta) T = \eta(AT) W_{\rho}(\eta),
\]
which in turn is easily verified using the properties of $W_\rho(\eta)$.

In order to show that $\widehat{\eta}$ is multiplicative, consider $F = (A, \rho, \psi)$ and $F' = (A', \rho', \psi')$ as elements of $\alg{F}_0$. Then:
\begin{equation}
  \begin{split}
	\hat{\eta}(\pi(F)\pi(F')) &= \hat{\eta}(\pi(A \rho(A'), \rho \otimes \rho', \psi \otimes \psi')) \\
	&= \pi^{\spc{S}_a}(\eta(A \rho(A')) W_{\rho\otimes\rho'}(\eta) \pi(I, \rho \otimes \rho', \psi \otimes \psi').
  \end{split}
  \label{eq:extmult}
\end{equation}
On the other hand,
\[
\hat{\eta}(\pi(F)) \hat{\eta}(\pi(F')) = \pi^{\spc{S}_a}(\eta(A)W_{\rho}(\eta))\pi(I, \rho, \psi) \pi^{\spc{S}_a} (\eta(A') W_{\rho'}(\eta)) \pi(I, \rho', \psi').
\]
An application of Lemma~\ref{lem:vecmul} reduces the right hand side to
\[
\pi^{\spc{S}_a}(\eta(A) W_{\rho}(\eta) \rho^{\spc{S}_a}(\eta(A') W_{\rho'}(\eta))) \pi(I, \rho \otimes \rho', \psi \otimes \psi').
\]
Then one should note that $W_{\rho}(\eta)$ intertwines $\rho^{\spc{S}_a}\circ \eta$ and $\eta^{\spc{S}_a} \circ \rho$, and use the fact that $\rho$ is an endomorphism of $\alg{A}$, so that $\eta^{\spc{S}_a}(\rho(A')) = \eta(\rho(A'))$. By using~\eqref{eq:extnatrho}, one then obtains equation~\eqref{eq:extmult}, so $\widehat{\eta}$ preserves multiplication.

To check that $\hat{\eta}$ is a $^*$-homomorphism, we have to show $\widehat{\eta}(\pi(F)^*) = \widehat{\eta}(\pi(F))^*$. Since $\hat{\eta}$ preserves multiplication, it is enough to show this for $(A, \iota, 1)$ and $(I, \rho, \psi) \in \alg{F}_0$. The first case is easy:
\[
\hat{\eta}(\pi(A, \iota, 1)^*) = \hat{\eta}(\pi(A^*, \iota, 1)) = \pi^{\spc{S}_a}(\eta(A^*)) \pi(I, \iota, 1) = \pi^{\spc{S}_a}(\eta(A))^*,
\]
since $\eta$ and $\pi^{\spc{S}_a}$ are $^*$-homomorphisms. To check the remaining case, let $(\overline{\rho}, R, \overline{R})$ be a conjugate. Then, $R^* \in \Hom_{\alg{A}}(\overline{\rho}\otimes\rho, \iota)$, so we have
\begin{equation}
	\begin{split}
  \eta(R^*) W_{\overline{\rho}}(\eta) = &W_{\iota}(\eta) R^* W_{\overline{\rho} \otimes \rho}(\eta)^* W_{\overline{\rho}}(\eta) = R^* (W_{\overline{\rho}}(\eta)^* W_{\overline{\rho} \otimes \rho}(\eta))^* \\ &= R^* \overline{\rho}^{\spc{S}_a}(W_{\rho}(\eta)^*),
	\end{split}
  \label{eq:extconj}
\end{equation}
where the properties of $W_\rho(\eta)$ have been used in each step. Recall the anti-linear map $\mc{J}$ used in the definition of the $*$-operation on $\alg{F}_0$. Then, by definition of $\widehat{\eta}$,
\[
\begin{split}
	\hat{\eta}(\pi(I, \rho, \psi)^*) = \hat{\eta}(\pi(R^*, \overline{\rho}, (&\mc{J}E(\overline{R}^*))\psi)) \\
	= \pi^{\spc{S}_a}(&\eta(R^*) W_{\overline{\rho}}(\eta)) \pi(I, \overline{\rho}, (\mc{J}E(\overline{R}^*)) \psi).
\end{split}
\]
Substitute equation~\eqref{eq:extconj} and apply Lemma~\ref{lem:vecmul}. Together with the fact that $\pi^{\spc{S}_a}$ agrees with $\pi$ on $\alg{A}$, this gives
\begin{align*}
  \widehat{\eta}(\pi(I, \rho, \psi)^*) &= \pi^{\spc{S}_a}(R^* \overline{\rho}^{\spc{S_a}}(W_{\rho}(\eta)^*) \pi(I, \overline{\rho}, (\mc{J}E(\overline{R}^*)) \psi)  \\
  &= \pi(R^*, \iota, 1) \pi(I, \overline{\rho}, (\mc{J}E(\overline{R}^*)) \psi) \pi^{\spc{S}_a}(W_{\rho}(\eta)^*) \\
  &= \pi(I, \rho, \psi)^* \pi^{\spc{S}_a}(W_{\rho}(\eta))^* \\
  &= \widehat{\eta}(\pi(1, \rho, \psi))^*,
\end{align*}
which concludes the proof that $\widehat{\eta}$ is a representation. 

To prove that $\widehat{\eta}$ commutes with the $G$-action, consider $(A, \rho, \psi) \in \alg{F}_0$, and let $g \in G$. Then
\[
\widehat{\eta}(\alpha_g \pi(A, \rho, \psi)) = \widehat{\eta}(\pi(A, \rho, g_\rho \psi)) = \pi^{\spc{S}_a}(\eta(A) W_\rho(\eta) \pi(I, \rho, g_\rho \psi).
\]
On the other hand, $\alpha_g$ is implemented by $U(g)$, so we have
\begin{align*}
  \alpha_g \circ \widehat{\eta}(\pi(A,\rho,g_\rho)) &= U(g) \pi^{\spc{S}_a}(\eta(A) W_{\rho}(\eta)) \pi(I, \rho, \psi) U(g)^* \\
  &= U(g) \pi^{\spc{S}_a}(\eta(A) W_{\rho}(\eta)) U(g)^* \pi(I,\rho,g_\rho\psi).
\end{align*}
From this it follows that if $\pi^{\spc{S}_a}(\eta(A) W_{\rho}(\eta))$ is $G$-invariant, then $\widehat{\eta}$ commutes with the action of $G$. Since $\eta(A) W_{\rho}(\eta) \in \alg{A}^{\spc{S}_a}$ this is nothing but Lemma~\ref{lem:fieldfix}(ii).

Finally, let $S \in \Hom_\alg{A}(\eta, \eta')$ be an intertwiner, and $F = (A, \rho, \psi) \in \alg{F}_0$. Then
\begin{align*}
  \pi^{\spc{S}_a}(S) \widehat{\eta}(\pi(F)) &= \pi^{\spc{S}_a} (S \eta(A) W_{\rho}(\eta)) \pi(1, \rho \psi) \\
  &= \pi^{\spc{S}_a}(\eta'(A) S W_{\rho}(\eta)) \pi(I, \rho, \psi) \\
  &= \pi^{\spc{S}_a}(\eta'(A) W_{\rho}(\eta') \rho^{\spc{S}_a}(S)) \pi(I, \rho, \psi) \\
  &= \widehat{\eta}'(\pi(F)) \pi^{\spc{S}_a}(S),
\end{align*}
where in the last line Lemma~\ref{lem:vecmul} has been used. Hence we see that $\pi^{\spc{S}_a}(S) \in \Hom_{\alg{F}_0}(\widehat{\eta}, \widehat{\eta}')$, completing the proof.
\end{proof}
It should be noted that conditions~\eqref{eq:exthom}--\eqref{eq:extnateta} are very similar to the conditions on a braiding, in particular the braiding $\varepsilon_{\rho,\eta}$ satisfies these conditions. The only difference is that $W_{\rho}(\eta)$ need only be defined for $\rho$ a DHR endomorphism and $\eta$ a BF endomorphism. 

The construction above gives an extension of representations of $\alg{A}$ to $\alg{F}$. To verify if these extensions are BF representations one should look at the localisation properties of the extension. The next lemma gives a necessary and sufficient condition for the extension of a localised representation to be cone localised again.
\begin{lemma}
  \label{lem:extlocal}
  Consider the notation and assumptions of Proposition~\ref{prop:extend}. If $\eta$ is localised in $\spc{C}$, its extension $\widehat{\eta}$ is localised in $\spc{C}$ if and only if $W_\rho(\eta) = I$ for each $\rho \in \dhr^{\alg{A}}$ localised spacelike to $\spc{C}$. Here, $\widehat{\eta}$ is called localised in $\spc{C}$ if it acts trivially on all $F \in \pi(\alg{F}_0(\mc{O}))$ for $\mc{O} \subset \spc{C}'$.
\end{lemma}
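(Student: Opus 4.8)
The plan is to reduce the localisation condition to an explicit statement about the unitaries $W_\rho(\eta)$ by evaluating the extension formula~\eqref{eq:extdefine} on generators of $\pi(\alg{F}_0(\mc{O}))$ with $\mc{O} \subset \spc{C}'$. Since $\widehat{\eta}$ is a homomorphism and $\alg{F}_0(\mc{O})$ is spanned by triples $(A, \rho, \psi)$ with $A \in \alg{A}(\mc{O})$, $\rho$ localised in $\mc{O}$, and $\psi \in E(\rho)$, I would only need to analyse $\widehat{\eta}(\pi(A, \rho, \psi))$ for such elements. For these, $\rho$ is automatically localised spacelike to $\spc{C}$, and since $\eta$ is localised in $\spc{C}$ and $A \in \alg{A}(\mc{O}) \subset \alg{A}(\spc{C}')$, Lemma~\ref{lem:extend} gives $\eta^{\spc{S}_a}(A) = \eta(A) = A$. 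Substituting this into~\eqref{eq:extdefine} and writing $\pi(A, \rho, \psi) = \pi^{\spc{S}_a}(A)\,\pi(I, \rho, \psi)$, the requirement $\widehat{\eta}(\pi(A, \rho, \psi)) = \pi(A, \rho, \psi)$ collapses to $\pi^{\spc{S}_a}(A)\,\pi^{\spc{S}_a}(W_\rho(\eta))\,\pi(I, \rho, \psi) = \pi^{\spc{S}_a}(A)\,\pi(I, \rho, \psi)$. Taking $A = I$, and conversely left-multiplying by $\pi^{\spc{S}_a}(A)$, I would conclude that $\widehat{\eta}$ is localised in $\spc{C}$ precisely when $\pi^{\spc{S}_a}(W_\rho(\eta))\,\pi(I, \rho, \psi) = \pi(I, \rho, \psi)$ for every $\rho \in \dhr^{\alg{A}}$ localised spacelike to $\spc{C}$ and every $\psi \in E(\rho)$.

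With this reformulation the ``if'' direction is immediate: if $W_\rho(\eta) = I$ for every such $\rho$, then $\pi^{\spc{S}_a}(W_\rho(\eta)) = I$ and the displayed identity holds trivially, so $\widehat{\eta}$ fixes every generator and hence all of $\pi(\alg{F}_0(\mc{O}))$.

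For the ``only if'' direction I would fix $\rho$ localised in some $\mc{O} \subset \spc{C}'$, choose an orthonormal basis $\psi_i$ of $E(\rho)$, and set $\Psi_i = \pi(I, \rho, \psi_i)$. Using the identity $\widehat{\eta}(\Psi_i) = \pi^{\spc{S}_a}(W_\rho(\eta))\Psi_i$ already established in the proof of Proposition~\ref{prop:extend}, localisation of $\widehat{\eta}$ gives $\pi^{\spc{S}_a}(W_\rho(\eta))\Psi_i = \Psi_i$ for each $i$. The crucial input is that the $\Psi_i$ span a Hilbert space in $\alg{F}$ with support $I$, i.e.\ $\sum_i \Psi_i \Psi_i^* = I$ (\cite[Proposition 270]{halvapp}); multiplying each relation on the right by $\Psi_i^*$ and summing then yields $\pi^{\spc{S}_a}(W_\rho(\eta)) = I$, and faithfulness of $\pi^{\spc{S}_a}$ gives $W_\rho(\eta) = I$. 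I expect this inversion to be the only genuinely substantive step: everything else is bookkeeping with the extension formula, whereas passing from the relation tested against each individual basis vector $\psi_i$ back to the operator identity $W_\rho(\eta) = I$ relies essentially on the support-$I$ property of the field operators together with faithfulness of $\pi^{\spc{S}_a}$. It would also be worth checking that ``$\rho$ localised spacelike to $\spc{C}$'' matches exactly the endomorphisms occurring in $\alg{F}_0(\mc{O})$ for $\mc{O} \subset \spc{C}'$, so that the two sides of the equivalence range over the same family.
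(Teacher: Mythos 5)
Your proposal is correct and follows essentially the same route as the paper's own proof: the ``if'' direction by checking the extension formula~\eqref{eq:extdefine} on generators $(A,\rho,\psi)$ of $\alg{F}_0(\mc{O})$ using localisation of $\eta$, and the ``only if'' direction by testing against an orthonormal basis $\Psi_i = \pi(I,\rho,\psi_i)$, multiplying by $\Psi_i^*$ and summing via the support-$I$ property. Your version merely makes explicit two steps the paper leaves implicit, namely the factorisation $\pi(A,\rho,\psi) = \pi^{\spc{S}_a}(A)\,\pi(I,\rho,\psi)$ and the appeal to faithfulness of $\pi^{\spc{S}_a}$ to pass from $\pi^{\spc{S}_a}(W_\rho(\eta)) = I$ to $W_\rho(\eta) = I$.
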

\begin{proof}
  The localisation properties follow from the localisation of $\eta$. If $F \in \alg{F}_0(\mc{O})$ for some double cone $\mc{O} \subset \spc{C}'$, it is of the form $F = (A, \rho, \psi)$, with $A \in \alg{A}(\mc{O})$ and $\rho$ localised in $\mc{O}$. But $\eta$ acts trivially on such $A$, and $W_{\rho}(\eta) = I$. Hence $\widehat{\eta}(\pi(A, \rho, \psi)) = \pi(A, \rho, \psi)$.

  For the converse, suppose that $\rho \in \dhr^{\alg{A}}$ is localised spacelike to $\spc{C}$. Choose an orthonormal basis $\psi_i$ of $E(\rho)$. Then $\pi(I, \rho, \psi_i) \in \pi(\alg{F}_0(\mc{O}))$ for $\mc{O} \subset \spc{C}'$. Hence
\[
\widehat{\eta}(\pi(I, \rho, \psi_i)) = \pi^{\spc{S}_a}(W_\rho(\eta)) \pi(I, \rho, \psi_i) = \pi(I, \rho, \psi_i).
\]
We multiply on the right by $\pi(I, \rho, \psi_i)^*$ and sum over $i$. Since $E(\rho)$ has support $I$, it follows that $\pi^{\spc{S}_a}(W_\rho(\eta))$ is the identity. 
\end{proof}

As a consequence of these results, we can canonically extend BF representations of $\alg{A}$ to BF representations of $\alg{F}$. This way of extending representations was first pointed out by Rehren~\cite{MR1128146}, where the author sketches a proof in the case of compactly localised sectors.
\begin{theorem}
  \label{thm:unique}
  Every BF representation $\eta$ of $\alg{A}$ can be extended to a BF representation of $\alg{F}$ that commutes with the $G$-action. This extension is unique.
\end{theorem}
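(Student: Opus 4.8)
The plan is to build the extension through the correspondence of Proposition~\ref{prop:extend}, taking the distinguished family of unitaries to be the braiding. For existence, I would set $W_\rho(\eta) = \varepsilon_{\rho,\eta}$ for every $\rho \in \dhr^{\alg{A}}$. As already noted after Proposition~\ref{prop:extend}, the braiding satisfies the structural relations~\eqref{eq:exthom}--\eqref{eq:extnateta}: \eqref{eq:exthom} is the intertwining property of $\varepsilon$, \eqref{eq:extnat} is its naturality in the first variable, and \eqref{eq:extnatrho}--\eqref{eq:extnateta} are the two hexagon identities, all of which hold in particular when the first argument is a DHR object and the second a BF object. Hence~\eqref{eq:extdefine} with this choice of $W_\rho(\eta)$ produces a $G$-invariant $*$-representation $\widehat{\eta}$ of $\pi(\alg{F}_0)$ extending $\eta$.

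Next I would verify that $\widehat{\eta}$ is cone-localised. By Lemma~\ref{lem:extlocal} it is enough to check that $\varepsilon_{\rho,\eta} = I$ whenever $\rho$ is localised in a double cone spacelike to $\spc{C}$. For such $\rho$ one may compute the braiding using $\eta$ itself as spectator morphism for $\eta$ and $\rho$ itself as spectator morphism for $\rho$, so that both charge transporters are trivial and $\varepsilon_{\rho,\eta}=I$ for the admissible orientation; the degeneracy of DHR sectors, Lemma~\ref{lem:trivmon}, gives $\varepsilon_{\rho,\eta} = \varepsilon_{\eta,\rho}^{-1}$, so the value is orientation-independent and equals $I$ in every case. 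Thus $\widehat{\eta}$ is localised in $\spc{C}$ on $\pi(\alg{F}_0)$. It then remains to pass to the field net itself: since $\alg{F}(\mc{O}) = \pi(\alg{F}_0(\mc{O}))''$ and, by Theorem~\ref{thm:trivdhr}, $\alg{F}$ is an AQFT satisfying Haag duality and Property B for spacelike cones, the extension mechanism of Lemma~\ref{lem:extend} applies. Concretely, $\widehat{\eta}$ acts trivially on $\pi(\alg{F}_0(\spc{C}'))$ and is implemented by unitaries on the spacelike complement, so it extends weakly continuously to the local von Neumann algebras, and hence to all of $\alg{F}$, giving a BF representation localised in $\spc{C}$ that still commutes with $\alpha_g$.

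For uniqueness I would invoke Proposition~\ref{prop:extend} once more: any two $G$-invariant BF extensions $\widehat{\eta}$, $\widehat{\eta}'$ of $\eta$ correspond to families $W_\rho(\eta)$ and $W'_\rho(\eta)$, and Lemma~\ref{lem:extlocal} forces both to equal $I$ on every $\rho$ localised spacelike to $\spc{C}$. The crucial point is that the localisation constraint together with naturality~\eqref{eq:extnat} and transportability determines $W_\rho(\eta)$ for \emph{all} $\rho$. Indeed, given an arbitrary DHR object $\rho$, transport it by a unitary $T \in \Hom_\alg{A}(\rho,\rho')$ to some $\rho'$ localised spacelike to $\spc{C}$; then~\eqref{eq:extnat} yields $W_\rho(\eta) = (I_\eta \otimes T)^* W_{\rho'}(\eta)(T \otimes I_\eta) = (I_\eta \otimes T)^*(T \otimes I_\eta)$, which is exactly $\varepsilon_{\rho,\eta}$ (with the spectator choices above) and does not depend on which transport is used. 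The same computation applies to $W'_\rho(\eta)$, so $W_\rho(\eta) = W'_\rho(\eta) = \varepsilon_{\rho,\eta}$ for every $\rho$. Hence the extension to $\pi(\alg{F}_0)$ is unique, and its weakly continuous extension to $\alg{F}$ is then unique as well.

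I expect the main obstacle to be twofold. The first is the careful bookkeeping needed to confirm that $\varepsilon_{\rho,\eta}$ really satisfies~\eqref{eq:exthom}--\eqref{eq:extnateta} in the asymmetric form used here, and that the orientation conventions in the definition of $\varepsilon$ mesh correctly with the degeneracy invoked in the localisation step; this is precisely where one must avoid conflating $\varepsilon_{\rho,\eta}$ with $\varepsilon_{\eta,\rho}$. The second, more technical, hurdle is the passage from the purely algebraic object $\pi(\alg{F}_0)$ to the field net $\alg{F}$ by weak continuity, which is exactly where the AQFT structure of $\alg{F}$ from Theorem~\ref{thm:trivdhr} — Haag duality and Property B for spacelike cones — is genuinely required.
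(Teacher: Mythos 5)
Your proposal is correct and takes essentially the same route as the paper: existence by choosing $W_\rho(\eta)=\varepsilon_{\rho,\eta}$ in Proposition~\ref{prop:extend}, localisation via Lemma~\ref{lem:extlocal} using degeneracy of DHR objects, extension to $\alg{F}$ by the trivial-action/unitary-implementation dichotomy and weak continuity, and uniqueness by forcing $W_{\rho'}(\eta)=I$ for $\rho'$ localised spacelike to $\spc{C}$ and then transporting via~\eqref{eq:extnat} and naturality of the braiding. The one point the paper makes explicit that you leave implicit is that transportability of $\widehat{\eta}$ --- which your unitary-implementation step on double cones not spacelike to $\spc{C}$ presupposes --- follows from the final clause of Proposition~\ref{prop:extend} together with naturality of $\varepsilon_{\rho,\eta}$ in $\eta$.
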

\begin{proof}
 One readily verifies that $W_\rho(\eta) = \varepsilon_{\rho,\eta}$ has the properties required in Proposition~\ref{prop:extend}. Moreover, $W_\rho(\eta) = I$ if $\rho$ is localised spacelike to $\eta$. Hence there is a $^*$-representation $\widehat{\eta}$ of $\pi(\alg{F}_0)$ extending $\eta$. If $\eta$ is localised in $\spc{C}$, Lemma~\ref{lem:extlocal} shows that $\widehat{\eta}$ is localised in the same region. If $\widetilde{\spc{C}}$ is another spacelike cone, by transportability of $\eta$ there is a unitarily equivalent $\eta'$ localised in $\widetilde{\spc{C}}$. By Proposition~\ref{prop:extend}, this lifts to a unitary equivalence of $\widehat{\eta}$ and $\widehat{\eta}'$, since the condition stated on $S$ is nothing but naturality of $\varepsilon_{\rho,\eta}$ in $\eta$. This shows transportability of the extension.

 We now have a representation defined on the algebra $\pi(\alg{F}_0)$. To extend this representation to $\alg{F}$, we first show that it can be extended to the local algebras $\alg{F}(\mc{O}) = \pi(\alg{F}_0(\mc{O}))''$. Consider a double cone $\mc{O}$. If $\mc{O}$ is spacelike to $\spc{C}$, localisation implies $\widehat{\eta}(\pi(F)) = \pi(F)$ for all $\pi(F) \in \pi(\alg{F}_0(\mc{O}))$. In this case it is clear that this extends to the weak closure $\alg{F}(\mc{O})$. Now suppose $\mc{O}$ is not spacelike to $\spc{C}$. Then by the argument above, there is a unitary $V$ such that $\widetilde{\eta}(\pi(F)) = V^* \widehat{\eta}(\pi(F)) V$ which is localised spacelike to $\mc{O}$. In other words, $\widehat{\eta}(\pi(F)) = V \pi(F) V^*$, by localisation of $\widetilde{\eta}$. The right hand side is weakly continuous, hence we can extend $\widehat{\eta}$ to $\alg{F}(\mc{O})$ for every $\mc{O}$. But the argument also shows that $\widehat{\eta}$ is in fact an isometry, since $\| V \pi(F) V^* \| = \| \pi(F) \|$. The union of the local algebras is norm dense in $\alg{F}$, hence by continuity $\widehat{\eta}$ extends uniquely to a representation of $\alg{F}$.
 
  Finally, we show that the extension is unique. Suppose that we have another localised extension that commutes with the action of $G$. Proposition~\ref{prop:extend} then asserts the existence of a family $W_\rho(\eta)$. We show $W_\rho(\eta) = \varepsilon_{\rho,\eta}$. First of all, suppose $\rho \in \dhr^{\alg{A}}$ is localised spacelike to the localisation of $\eta$. Then, by Lemma~\ref{lem:extlocal}, $W_\rho(\eta) = I$. But this is equal to $\varepsilon_{\rho,\eta}$, since $\rho$ is degenerate. Now consider an arbitrary $\rho \in \dhr^{\alg{A}}$. Choose a unitary equivalent $\rho'$ localised spacelike to the localisation of $\eta$, with corresponding unitary $T$. Then,
\[
	(T \otimes I_\eta) = (I_\eta \otimes T)W_\rho(\eta), \quad (T \otimes I_\eta) = (I_\eta \otimes T) \varepsilon_{\rho,\eta},
\]
where the first equation follows from~\eqref{eq:extnat}, and the second follows from naturality with respect to $\rho$ of the braiding. Since $T$ is a unitary, it follows that $W_\rho(\eta) = \varepsilon_{\rho,\eta}$.
\end{proof}
\begin{remark} (i) Localisation properties are used to show that $\widehat{\eta}$ can be extended to a representation of $\alg{F}$. By applying the results of~\cite{MR1005608}, as in~\cite{MR1721563}, it can be proved that in fact \emph{every} extension (whether it is cone localised or not) as in Proposition~\ref{prop:extend} can be defined on the whole of $\alg{F}$.

	(ii) Denote the canonical extension by $\Phi(\eta)$ or $\widehat{\eta}$. It turns out that $\Phi: \eta \mapsto \widehat{\eta}$ is in fact a faithful, but not full, tensor functor. These and other categorical aspects are discussed in Section~\ref{sec:crossed}.
\end{remark}

Let us briefly comment on other approaches to the problem of extending representations. Firstly one could use techniques from the theory of subfactors. For this to work $\alg{A}(\spc{C})'' \subset \alg{F}(\spc{C})''$ needs to be an inclusion of factors. Moreover, the Jones index of this inclusion should be finite. In this case the machinery of $\alpha$-induction and $\sigma$-restriction can be applied~\cite{MR1652746}. In the present situation, however, it is not clear if these requirements are satisfied.

Another approach that can be used in the DHR setting is Roberts' theory of localised cocycles~\cite{MR0471753,MR1147461}, see also~\cite{MR1828981}. It is not immediately clear, however, if this can be modified to apply to case of BF sectors. For one, the set of all double cones is directed, unlike the set of all spacelike cones.

\section{Non-abelian cohomology and restriction to the observable algebra}
\label{sec:restrict}
In the previous section, extension of BF representations of the observable algebra to the field algebra was discussed. Here we investigate the other direction: does every BF representation of the field algebra that commutes with the group action come from such an extension? This is a first step in understanding the category $\bfm^\alg{F}(\spc{C})$. In answering this question, one encounters problems of a cohomological nature in a natural way. 

For convenience of the reader we recall the notion of an $\alpha$-1-cocycle and an $\alpha$-2-cocycle in a von Neumann algebra $\alg{M}$; for the complete definition see~\cite{MR574031II}. A Borel map $v: G \to \mc{U}(\alg{M})$ is an \emph{$\alpha$-1-cocycle} if it satisfies the identity
\[
	v(gh) = \alpha_g(v(h)) v(g);
\]
a map $w: G \times G \to \mc{U}(\alg{M})$ is an \emph{$\alpha$-2-cocycle} if
\[
	\quad w(gh, k) w(g,h) = w(g,hk) \alpha_g(w(h,k)).
\]
It is possible to define a coboundary map $\partial$. For example, a 1-cocycle $v(g)$ is a coboundary if there is a unitary $w \in \alg{M}$ such that $v(g) = \alpha_g(w) w^*$. A 2-cocycle $w(g,h)$ is a coboundary if there is a Borel map $\psi: G \to \mc{U}(\alg{M})$ such that $w(g,h) = \alpha_g(\psi(h)) \psi(g) \psi(gh)^*$.

It turns out that each cocycle taking values in $\alg{F}(\spc{C})$ is in fact a coboundary in a bigger algebra $\alg{F}(\widetilde{\spc{C}})'' \supset \alg{F}(\spc{C})''$. This is essentially due to the field net having full $G$-spectrum, which allows to use the construction of Sutherland to construct a coboundary~\cite{MR574031II}. Before proving this result, we first recall some notions regarding Hilbert spaces in von Neumann algebras~\cite{MR0473859}.

\begin{definition}
  \label{def:hspace}
	Let $\alg{M}$ be a von Neumann algebra. A Hilbert space in $\alg{M}$ is a norm closed linear subspace $H$, such that $a \in H$ implies $a^*a \in \mathbb{C} I$ and $x \in \alg{M}$, $ax = 0$ for all $a \in H$ implies $x=0$. 
\end{definition}
An inner product is then defined by $(a,b)I = a^*b$. One can check that this indeed defines a Hilbert space. If $\{V_i\}_{i \in J}$ is an orthonormal basis for $H$, the operators $V_i V_i^*$ are (mutually orthogonal) projections, hence the $V_i$ are isometries, and $\sum_{i \in J} V_i V_i^* = I$.  Certain operators $x \in \alg{M}$ can be identified with operators in $\alg{B}(H)$. More generally, if $H_1$ and $H_2$ are two Hilbert spaces in $\alg{M}$, write
\[
(H_1, H_2) = \{ x \in \alg{M} : \psi_2^* x \psi_1 \in \mathbb{C} I, \psi_1 \in H_1, \psi_2 \in H_2 \}.
\]
These operators are in 1-1 correspondence with operators in $\alg{B}(H_1, H_2)$, see~\cite[Lemma 2.3]{MR0473859}. For $x \in (H_1, H_2)$, write $L(x)$ for the corresponding linear operator in $\alg{B}(H_1, H_2)$. In this case, $(\psi_1, L(x) \psi_2) I = \psi_1^* x \psi_2$. With these preparations we can prove the triviality of cocycles.
\begin{theorem}
  \label{thm:cobound}
	Assume $G$ is second countable. Let $v(g_1, \dots, g_n)$ be a unitary $\alpha$-$n$-cocycle in $\alg{F}(\spc{C})''$. Then there is a spacelike cone $\widetilde{\spc{C}} \supset \spc{C}$ such that $v$ is a coboundary in $\alg{F}(\widetilde{\spc{C}})''$.
\end{theorem}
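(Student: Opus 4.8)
The plan is to follow Sutherland's cohomology--vanishing construction~\cite{MR574031II}, whose hypothesis---that the $G$-action have \emph{full spectrum}---is precisely what the completeness of the field net supplies, while keeping track of localisation so that the trivialising cochain lands in an enlarged cone algebra. I would reduce the statement to an explicit averaging over the group. To display the mechanism in the case $n=1$, suppose $v\colon G \to \mc{U}(\alg{F}(\spc{C})'')$ satisfies $v(gh) = \alpha_g(v(h))v(g)$, and let $x \in \alg{F}(\mc{O})$ for a double cone $\mc{O}$ to be chosen below. Setting
\[
	w = \int_G v(g)^* \alpha_g(x)\, dg,
\]
the cocycle identity and left-invariance of the Haar measure give, for every $h \in G$,
\[
	\alpha_h(w) = \int_G \alpha_h(v(g))^{-1} \alpha_{hg}(x)\, dg = v(h) \int_G v(hg)^* \alpha_{hg}(x)\, dg = v(h)\, w,
\]
where I used $\alpha_h(v(g)) = v(hg) v(h)^{-1}$. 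Hence $v(h) = \alpha_h(w) w^*$ as soon as $w$ is unitary, which is exactly the coboundary relation. For general $n$ the same averaging is carried out inside Sutherland's framework, the $n$-cocycle relation guaranteeing both that the integral is well defined and that the coboundary of the resulting $(n-1)$-cochain recovers $v$.

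The crux is therefore to arrange that $w$ be unitary, and this is where the full $G$-spectrum of the field net enters. From the relation $\alpha_h(w) = v(h) w$ one reads off that $w^*w$ is $\alpha$-invariant, hence lies in the observable (fixpoint) algebra, while $ww^*$ is invariant under the perturbed action $\beta_h = \Ad(v(h)) \circ \alpha_h$. By completeness (cf.\ Lemma~\ref{lem:irrdecompose} and the discussion preceding it), every irreducible $\xi \in \widehat{G}$ is realised by a Hilbert space $H_\rho \subset \alg{F}(\mc{O})$ transforming according to $\xi$; choosing $x$ from a Hilbert space in $\alg{F}(\mc{O})$ whose $G$-representation contains every irreducible with sufficient multiplicity---in effect a copy of the regular representation---one can make both $w^*w$ and $ww^*$ invertible and then normalise $w$ to a unitary. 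This surplus of spectrum is exactly the content of Sutherland's construction, and it is where the argument would fail without completeness. The hypothesis that $G$ be second countable is used here twice: it renders $\widehat{G}$ countable, so that the regular representation can be realised inside the field net, and it guarantees that $g \mapsto v(g_1,\dots,g_n)$ is Borel and that the Haar integrals converge $\sigma$-weakly to elements of the ambient von Neumann algebra.

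Finally I would track the localisation. Each $v(g)^*$ lies in $\alg{F}(\spc{C})''$, and since $\alpha_g$ is an automorphism of every local algebra $\alg{F}(\mc{O})$ (Definition~\ref{def:fieldnet}(iv)) the factor $\alpha_g(x)$ stays in the local algebra in which $x$ was chosen. Taking $\widetilde{\spc{C}}$ to be a spacelike cone containing both $\spc{C}$ and the double cones used to realise the spectrum, the integrand lies in the $\sigma$-weakly closed algebra $\alg{F}(\widetilde{\spc{C}})''$ for every $g$, hence so does $w$; the same bookkeeping applies to the higher cochains. The main obstacle is precisely the tension between the two requirements: the charge-creating element $x$ must be spectrally \emph{large} enough that the averaged $w$ is genuinely unitary rather than a mere isometry, yet localised \emph{close} enough to $\spc{C}$ that the coboundary still sits in a cone algebra. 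It is exactly the extra spectral room needed for unitarity that forces the passage from $\spc{C}$ to the strictly larger $\widetilde{\spc{C}}$, and checking that a finite enlargement always suffices---uniformly in $n$---is the technical heart of the proof.
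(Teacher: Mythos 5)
Your high-level plan --- use full $G$-spectrum to plant a copy of the regular representation in a local algebra near $\spc{C}$, enlarge to a cone $\widetilde{\spc{C}} \supset \spc{C} \cup \mc{O}$, and invoke Sutherland --- is indeed the paper's, but the mechanism you substitute for Sutherland's, Haar averaging, has a genuine gap exactly at the point you yourself identify as the crux. For $n=1$ your identity $\alpha_h(w) = v(h)w$ is correct, but nothing you say makes $w$ unitary: since $w^*w$ is $\alpha$-invariant, polar decomposition $w = u|w|$ only yields a \emph{partial isometry} $u$ with $\alpha_h(u) = v(h)u$, and for a given $x$ the average $w$ can have deficient supports or even vanish. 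The assertion that choosing $x$ in a regular-representation Hilbert space ``makes both $w^*w$ and $ww^*$ invertible'' is precisely the unproved hard step; producing a \emph{unitary} solution of $\alpha_h(w)=v(h)w$ is equivalent to the statement being proved, and would require a maximality/gluing argument over partial isometries that you do not supply. Worse, for $n \geq 2$ ``the same averaging'' is not a construction, and averaging alone provably cannot suffice: a scalar $2$-cocycle representing a nontrivial class in $H^2(G,\mathbb{T})$ is a unitary $\alpha$-$2$-cocycle in $\alg{F}(\spc{C})''$ (with trivial $\alpha$-dependence), and its trivialising $1$-cochain in $\alg{F}(\widetilde{\spc{C}})''$ must be genuinely operator-valued, built from projective regular representations inside an embedded $\mc{B}(L^2(G))$; no average of the given cocycle against local elements produces it. (Also, Borel measurability of $v$ is part of the definition of an $\alpha$-$n$-cocycle, not a consequence of second countability.)

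The paper's proof makes the stabilisation structural rather than element-by-element, which is how it gets unitarity for free and uniformity in $n$. It uses second countability (so $\widehat{G}$ is countable), the Hilbert spaces $H_\xi \subset \alg{F}(\mc{O})$ for every $\xi \in \widehat{G}$, and a countable family of $G$-invariant isometries in the properly infinite algebra $\alg{A}(\mc{O})$ to assemble, inside $\alg{F}(\mc{O})$ with $\mc{O} \subset \spc{C}'$, an equivariant copy of $\bigl(\mc{B}(L^2(G)), \Ad\lambda\bigr)$ --- note the spacelikeness of $\mc{O}$ from $\spc{C}$, which you never impose, is needed so that $\alg{F}(\spc{C})''$ and $\alg{F}(\mc{O})$ commute. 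Locality then makes $x \otimes y \mapsto x F^{-1}(y)$ a homomorphism $\alg{F}(\spc{C})'' \otimes \mc{B}(L^2(G)) \to \alg{F}(\widetilde{\spc{C}})''$ intertwining $\beta_g = \alpha_g \otimes \Ad\lambda(g)$ with $\alpha_g$, and Sutherland's proposition --- every cocycle becomes a coboundary after tensoring with $\Ad\lambda$, for all $n$ at once, the trivialising cochain being in essence multiplication by the cocycle itself on $L^2(G)$, hence unitary by construction --- finishes the argument. To salvage your route you would have to either prove unitarity of $w$ and exhibit the higher cochains, or reduce, as the paper does, to this stabilisation result rather than to an averaging formula.
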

\begin{proof} 
Pick a double cone $\mathcal{O} \subset \spc{C}'$, such that there is a spacelike cone $\widetilde{\spc{C}} \supset \spc{C} \cup \mathcal{O}$. Note that this is always possible. Since the field net has full spectrum, for each irreducible representation $\xi$ of $G$, there is a Hilbert space in $\alg{F}(\mc{O})$, transforming according to this representation. That is, there are isometries $\psi_i$, $i = 1, \dots, d$ spanning a Hilbert space $H_\xi$ in $\alg{F}(\mc{O})$, such that
\[
	\alpha_g(\psi_i) = \sum_{j=1}^d u^\xi_{ji}(g) \psi_j,
\]
where $u^\xi_{ji}(g)$ are the matrix coefficients of $\xi$.

The left regular action $\lambda(g)$ on $L^2(G)$ decomposes as a direct sum of irreducible representations. By the Peter-Weyl theorem the Hilbert space $L^2(G)$ decomposes as~\cite{MR0262773}
\begin{equation}
  \label{eq:peterweyl}
  L^2(G) = \bigoplus_{\xi \in \widehat{G}} d_\xi H_\xi,
\end{equation}
where $d_\xi$ is the dimension of the representation $\xi$. For each irreducible representation $\xi$, the algebra $\alg{F}(\mc{O})$ contains a Hilbert space $H_\xi$ (as in Definition~\ref{def:hspace}), transforming according to the corresponding representation. The group $G$ is second countable, hence the number of irreducible representations is at most countable~\cite{MR0262773}. Since $\alg{A}(\mc{O})$ is a properly infinite von Neumann algebra acting on a separable Hilbert space, it is possible to find a countable family of isometries $V_i$ such that $V_i^* V_j = \delta_{i,j} I$ and $\sum_i V_i V_i^* = I$. Moreover, they are invariant under the action of $G$. These isometries enable us to construct an image of the direct sum decomposition~\eqref{eq:peterweyl} of $L^2(G)$ in $\alg{F}(\mc{O})$ as follows. First choose an enumeration $\xi_i$ of $\widehat{G}$, counted with multiplicities. For each $i$ choose an orthonormal basis $\psi_j$ of $H_{\xi_i}$ where $j = 1, \dots, d_{\xi_i}$. Then $e_{ij} = V_i \psi_j V_i^*$ forms an orthonormal basis of a Hilbert space in $\alg{F}(\mc{O})$. This Hilbert space will be denoted by $L^2_{\alg{F}}(G)$. If $T: L^2_{\alg{F}}(G) \to L^2(G)$ denotes the corresponding isomorphism of Hilbert spaces, the above remarks imply that $T(\alpha_g(\psi)) = \lambda(g) T(\psi)$ for all $\psi \in L^2_{\alg{F}}(G)$.

Note that the action $\alpha_g$ induces an action on $\alg{B}(L^2_\alg{F}(G))$. To see what effect this has on the corresponding operators in $\alg{B}(L^2(G))$, consider the following calculation, where $\langle -,-\rangle$ is the inner product of $L^2(G)$, $x \in \alg{B}(L^2_\alg{F}(G))$, and $g \in G$:
\begin{align*}
\langle T(\psi_1), L(x) T(\psi_2) \rangle I &= \psi_1^* x \psi_2 \\
	&= \alpha_g(\psi_1^*) \alpha_g(x) \alpha_g(\psi_2) \\
	&= (\alpha_g(\psi_1), L(\alpha_g(x)) \alpha_g(\psi_2)) I  \\ 
	&=\langle \lambda(g) T(\psi_1), L(\alpha_g(x)) \lambda(g) T(\psi_2) \rangle I \\
	&=\langle T(\psi_1), \lambda(g)^* L(\alpha_g(x)) \lambda(g) T(\psi_2) \rangle I. 
\end{align*}
In other words, $L(\alpha_g(x)) = \lambda(g) L(x) \lambda(g)^* = \Ad \lambda(g) L(x)$, since the left regular representation is unitary.

The situation can be summarised as follows: there is a copy of $L^2(G)$ in $\alg{F}(\mc{O})$, as well as a copy of $\mc{B}(L^2(G))$. Moreover, the action $\alpha_g$ of $G$ acts as $\Ad \lambda(g)$ on these operators. We are now in a position to apply Proposition 2.5.1 from~\cite{MR574031II}.

Define an injective representation $\pi: \alg{F}(\spc{C})'' \otimes \mc{B}(L^2(G)) \to \alg{F}(\widetilde{\spc{C}})''$ by $\pi(x \otimes y) = xF^{-1}(y)$. Note that this is indeed a representation, since $\alg{F}(\spc{C})''$ commutes with $\alg{F}(\mc{O})$. Endow the algebra $\alg{F}(\spc{C})'' \otimes \mc{B}(L^2(G))$ with the action $\beta_g$ of $G$ defined by $\beta_g = \alpha_g \otimes \Ad \lambda(g)$. It follows that for each $g \in G$, $\pi(\beta_g(x \otimes y)) = \alpha_g(\pi(x \otimes y))$. By Proposition 2.1.5 of~\cite{MR574031II} $v(g_1, \dots g_n) \otimes I$ is a $\beta$-coboundary. But since $v(g_1, \dots g_n) = \pi(v(g_1, \dots g_n) \otimes I)$ and $\alpha_g \circ \pi = \pi \circ \beta_g$, it follows that $v(g_1, \dots g_n)$ is an $\alpha$-coboundary in $\alg{F}(\widetilde{\spc{C}})''$.
\end{proof}
\begin{remark}The DHR sectors of $\alg{A}$ are in one-to-one correspondence with irreducible representations of the group $G$. Hence under the assumption already made in Theorem~\ref{thm:trivdhr}, it follows that $G$ is indeed second countable.
\end{remark}

With this theorem we are able to prove the main result of this section, namely that every BF representation of $\alg{F}$ that commutes with the $G$-action comes from the extension of a representation of $\alg{A}$.
\begin{corollary}
  \label{cor:restrict}
  Let $\eta \in \bfm^\alg{F}(\spc{C})$, such that $\alpha_g \circ \eta = \eta \circ \alpha_g$ for all $g \in G$. Then $\eta$ restricts to a BF sector $\eta \upharpoonright \alg{A}^{\spc{S}_a}$ of the observable net. Moreover, $\widehat{\eta \upharpoonright \alg{A}^{\spc{S}_a}} = \eta$.
\end{corollary}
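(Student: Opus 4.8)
The plan is to produce the candidate restriction, verify that it is a genuine BF sector of $\alg{A}$, and then recognise $\eta$ as its canonical extension. First I would extend $\eta$ to an endomorphism $\eta^{\spc{S}_a}$ of the field auxiliary algebra $\alg{F}^{\spc{S}_a}$ by the field-net analogue of Lemma~\ref{lem:extend}. Since $\eta$ commutes with $\alpha_g$ on $\alg{F}$ and the extension is weakly continuous on each $\alg{F}((\spc{S}_a+x)')''$, the relation $\alpha_g \circ \eta^{\spc{S}_a} = \eta^{\spc{S}_a}\circ \alpha_g$ passes to the auxiliary algebra. Hence $\eta^{\spc{S}_a}$ preserves the fixpoint algebra, and since $(\alg{F}^{\spc{S}_a})^G = \pi^{\spc{S}_a}(\alg{A}^{\spc{S}_a})$ by Lemma~\ref{lem:fieldfix}(ii), it restricts to an endomorphism $\eta_0 := \eta\upharpoonright\alg{A}^{\spc{S}_a}$ of $\alg{A}^{\spc{S}_a}$. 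Localisation of $\eta$ in $\spc{C}$ immediately yields $\eta_0(A)=A$ for $A \in \alg{A}(\spc{C}')$, so $\eta_0$ is localised in $\spc{C}$.

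The substantive step is to show that $\eta_0$ is transportable with charge transporters lying in $\alg{A}^{\spc{S}_a}$, so that it is a genuine object of $\bfm^{\alg{A}}(\spc{C})$; this is exactly where Theorem~\ref{thm:cobound} enters. Transportability of $\eta$ as a BF representation of $\alg{F}$ produces, for a target spacelike cone, an equivalent $\eta'$ localised there together with a unitary intertwiner $U$, which one may arrange to lie in a cone algebra $\alg{F}(\spc{C}_0)''$ (using an interpolating sequence for far-apart cones). In general $U$ is not $G$-invariant, so its restriction does not live in $\alg{A}^{\spc{S}_a}$. I would therefore consider $v(g) = \alpha_g(U)U^*$, verify that it is a unitary $\alpha$-$1$-cocycle in $\alg{F}(\spc{C}_0)''$, and apply Theorem~\ref{thm:cobound} to write $v(g) = \alpha_g(w)w^*$ for a unitary $w \in \alg{F}(\widetilde{\spc{C}})''$. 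Then $\td{U} := w^* U$ satisfies $\alpha_g(\td{U}) = \td{U}$, hence $\td{U} \in (\alg{F}^{\spc{S}_a})^G = \alg{A}^{\spc{S}_a}$, while $\td{\eta} := \Ad w^* \circ \eta' = \Ad \td{U}\circ\eta$ is a $G$-commuting BF representation of $\alg{F}$, localised in $\widetilde{\spc{C}}$, unitarily equivalent to $\eta$ through the observable-algebra intertwiner $\td{U}$. Restricting $\td{U}$ and $\td{\eta}$ to $\alg{A}^{\spc{S}_a}$ transports $\eta_0$ into $\widetilde{\spc{C}}$; with the interpolating-sequence argument this gives transportability to an arbitrary spacelike cone and establishes the selection criterion~\eqref{eq:selectbf}, so $\eta_0 \in \bfm^{\alg{A}}(\spc{C})$.

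Finally I would identify $\eta$ with $\widehat{\eta_0}$. Because $\eta$ commutes with $G$, Proposition~\ref{prop:extend} applies and recovers $\eta$ from $\eta_0$ and the associated family $W_\rho(\eta_0)$ via~\eqref{eq:extdefine}. By Lemma~\ref{lem:extlocal}, localisation of $\eta$ in $\spc{C}$ forces $W_\rho(\eta_0) = I$ whenever $\rho$ is localised spacelike to $\spc{C}$, while the braiding $\varepsilon_{\rho,\eta_0}$ equals $I$ there by degeneracy of the DHR sectors (Lemma~\ref{lem:trivmon}). Naturality in $\rho$, property~\eqref{eq:extnat}, determines both families from their values on such $\rho$, so $W_\rho(\eta_0) = \varepsilon_{\rho,\eta_0}$ for every $\rho \in \dhr^{\alg{A}}$. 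Since this is precisely the family defining the canonical extension in Theorem~\ref{thm:unique}, the representations $\eta$ and $\widehat{\eta_0}$ agree on $\pi(\alg{F}_0)$ and hence, by uniqueness of the extension to $\alg{F}$, everywhere; that is, $\widehat{\eta\upharpoonright\alg{A}^{\spc{S}_a}} = \eta$.

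The main obstacle is the transportability step, namely converting a field-net charge transporter into a $G$-invariant one that descends to $\alg{A}^{\spc{S}_a}$; once the cocycle triviality of Theorem~\ref{thm:cobound} is in hand, the remainder is essentially bookkeeping with the structure already set up in Proposition~\ref{prop:extend} and Theorem~\ref{thm:unique}.
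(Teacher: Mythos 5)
Your proposal follows the paper's proof essentially step for step: the restriction is obtained from the fixed-point identification of Lemma~\ref{lem:fieldfix}(ii), transportability is obtained by trivialising the cocycle $v(g)=\alpha_g(U)U^*$ with Theorem~\ref{thm:cobound} so as to produce a $G$-invariant charge transporter lying in $\alg{A}^{\spc{S}_a}$, and the identity $\widehat{\eta\upharpoonright\alg{A}^{\spc{S}_a}}=\eta$ follows from uniqueness of $G$-commuting extensions (your last paragraph merely re-derives the uniqueness clause of Theorem~\ref{thm:unique} instead of citing it, which is harmless).

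Two details in your transportability step are off, although both are repaired by exactly what the paper does. First, you cannot in general ``arrange $U$ to lie in a cone algebra $\alg{F}(\spc{C}_0)''$'': for far-separated cones there is no spacelike cone containing both localisation regions, so Haag duality gives you nothing, and an interpolating sequence only exhibits $U$ as a product of unitaries lying in \emph{different} cone algebras, which need not lie in any single one. The correct observation, and the one the paper uses, is that $v(g)=\alpha_g(U)U^*$ belongs to $\Hom_{\alg{F}}(\eta',{}^g\eta')$ where both $\eta'$ and ${}^g\eta'$ are localised in the \emph{same} cone (since $\alpha_g$ leaves the algebra of the spacelike complement globally invariant); Haag duality for the field net (Theorem~\ref{thm:trivdhr}) then places $v(g)$ in the weak closure of that cone algebra no matter where $U$ itself lives. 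One should also record that $g\mapsto v(g)$ is strongly continuous, which is needed to apply Theorem~\ref{thm:cobound}. Second, as you have set it up, the corrected morphism $\Ad w^*\circ\eta'$ is only localised in the \emph{enlarged} cone $\widetilde{\spc{C}}$ produced by Theorem~\ref{thm:cobound}, not in the prescribed target cone, so transportability to an arbitrary cone is not yet established. The paper's fix is to pre-shrink: given the target $\widehat{\spc{C}}$, first transport into a smaller cone $\widetilde{\spc{C}}\subset\widehat{\spc{C}}$ chosen so that a double cone $\mc{O}\subset\widetilde{\spc{C}}'\cap\widehat{\spc{C}}$ exists; then the coboundary can be taken in $\alg{F}(\widehat{\spc{C}})''$ and the corrected morphism lands in $\widehat{\spc{C}}$ as required. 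Finally, as in the paper, the case where the target cone is not spacelike to $\spc{S}_a$ needs the argument of Proposition~\ref{prop:equiv}; your interpolating-sequence remark gestures at this but does not carry it out.
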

\begin{proof}
  Since the representation $\eta$ commutes with the action of $G$, by Lemma~\ref{lem:fieldfix}(ii) it restricts to an endomorphism of $\alg{A}^{\spc{S}_a}$. It is clear that this restriction is localised in $\spc{C}$ as well. To prove transportability, proceed in a similar way as in~\cite[Proposition 3.5]{MR2183964}. Suppose $\widehat{\spc{C}}$ is another spacelike cone. For simplicity we assume it is spacelike to $\spc{S}_a$. In the general case, one has to apply an argument as in the proof of Proposition~\ref{prop:equiv}. Pick a spacelike cone $\widetilde{\spc{C}} \subset \widehat{\spc{C}}$ such that there is a double cone $\widehat{\spc{C}} \supset \mathcal{O} \subset \widetilde{\spc{C}}'$. By Lemma~\ref{lem:chargetr} and transportability, there is a unitary $V \in \alg{F}^{\spc{S}_a}$ such that $\widetilde{\eta} = \Ad V \circ \eta$ is localised in $\widetilde{\spc{C}}$.

  Now consider ${}^g \widetilde{\eta} = \alpha_g \circ \widetilde{\eta} \circ \alpha_{g^{-1}}$. Since $\eta$ is $G$-invariant, $\alpha_g(V) \in \Hom_{\alg{F}} (\eta, {}^g\widetilde{\eta})$. Because $\alpha_g$ leaves $\alg{F}(\widetilde{\spc{C}}')$ globally invariant, ${}^g \widetilde{\eta}$ is also localised in $\widetilde{\spc{C}}$. Define an $\alpha$-1-cocycle $v(g) = \alpha_g(V)V^* \in \Hom_\alg{F}(\widetilde{\eta}, {}^g\widetilde{\eta})$. By Haag duality, $v(g) \in \alg{F}(\widetilde{\spc{C}})''$. Moreover $g \mapsto v(g)$ is strongly continuous. By Theorem~\ref{thm:cobound} there is a unitary $W \in \alg{F}(\widehat{\spc{C}})''$ such that $v(g) = \alpha_g(W) W^*$. Define $\widehat{\eta} = \Ad W^* \circ \widetilde{\eta}$. It is easy to see that $\widehat{\eta}$ is localised in $\widehat{\spc{C}}$ and that $W^*V \in \Hom_\alg{F}(\eta, \widehat{\eta})$. Moreover, by definition $\alpha_g(V) V^* = \alpha_g(W) W^*$, from which it follows that $\alpha_g(W^*V) = W^*V$ for all $g \in G$. Hence $W^*V$ is in $\alg{A}^{\spc{S}_a}$, and is the desired intertwiner from $\eta \upharpoonright \alg{A}^{\spc{S}_a}$ to $\widehat{\eta} \upharpoonright \alg{A}^{\spc{S}_a}$.

  Since extensions commuting with $G$ are unique by~Theorem~\ref{thm:unique}, the last statement is obvious.
\end{proof}

\section{Categorical crossed products}
\label{sec:crossed}
The results in the previous section give a complete understanding of all $G$-invariant BF representations of $\bfm^{\alg{F}}(\spc{C})$. Indeed, these are all of the form $\Phi(\eta)$ for some BF representation $\eta$ of $\alg{A}$. Recall that this extension functor is defined by $\Phi(\eta) = \widehat{\eta}$, and by $\Phi(S) = \pi^{\spc{S}_a}(S)$ for intertwiners $S$ (see Proposition~\ref{prop:extend}). In fact, this extension preserves all relevant properties of the category $\bfm^{\alg{A}}(\spc{C})$.
\begin{prop}
	The functor $\Phi:\bfm^{\alg{A}}(\spc{C}) \to \bfm^{\alg{F}}(\spc{C})$ is a strict braided monoidal functor. It also preserves direct sums: $\Phi(\eta_1 \oplus \eta_2) \cong \Phi(\eta_1) \oplus \Phi(\eta_2)$. Finally, $d(\Phi(\eta)) = d(\eta)$.
\end{prop}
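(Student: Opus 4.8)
The plan is to treat the five assertions in turn, leaning throughout on the explicit description $\Phi(\eta)=\widehat\eta$ (with $W_\rho(\eta)=\varepsilon^{\alg{A}}_{\rho,\eta}$) and $\Phi(S)=\pi^{\spc{S}_a}(S)$, together with the uniqueness of the canonical extension established in Theorem~\ref{thm:unique}.

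First I would settle functoriality and compatibility with morphisms. For an intertwiner $S\in\Hom_{\alg{A}}(\eta,\eta')$, naturality of the braiding in the second variable reads $S\,\varepsilon^{\alg{A}}_{\rho,\eta}=\varepsilon^{\alg{A}}_{\rho,\eta'}\,\rho^{\spc{S}_a}(S)$, which is exactly the hypothesis on $S$ in Proposition~\ref{prop:extend}; hence $\Phi(S)=\pi^{\spc{S}_a}(S)\in\Hom_{\alg{F}}(\widehat\eta,\widehat{\eta'})$ for \emph{every} intertwiner, and functoriality follows because $\pi^{\spc{S}_a}$ is a homomorphism. Since $\pi^{\spc{S}_a}$ is moreover a $*$-homomorphism, $\Phi(S^*)=\Phi(S)^*$. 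Compatibility with the tensor product on morphisms is then a one-line computation: as $\widehat{\eta_1}$ restricts to $\eta_1^{\spc{S}_a}$ on $\alg{A}^{\spc{S}_a}$ (Lemma~\ref{lem:fieldfix}(ii)) and $T\in\alg{A}^{\spc{S}_a}$, one has $\Phi(S)\otimes_{\alg{F}}\Phi(T)=\pi^{\spc{S}_a}(S)\,\widehat{\eta_1}^{\spc{S}_a}(\pi^{\spc{S}_a}(T))=\pi^{\spc{S}_a}\!\big(S\,\eta_1^{\spc{S}_a}(T)\big)=\Phi(S\otimes_{\alg{A}}T)$. For the unit, $W_\rho(\iota)=\varepsilon^{\alg{A}}_{\rho,\iota}=I$ gives $\widehat\iota=\id_{\alg{F}}$, so $\Phi(\iota)=\iota$.

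The core step is strict monoidality on objects, and this is the step I expect to be the main obstacle, because the tensor product in $\bfm^{\alg{F}}(\spc{C})$ is formed using the \emph{field} auxiliary algebra $\alg{F}^{\spc{S}_a}$ and the extension $\widehat{\eta_1}^{\spc{S}_a}$ to it, whereas $\Phi(\eta_1\otimes_{\alg{A}}\eta_2)$ is built from $\eta_1^{\spc{S}_a}\circ\eta_2$ on $\alg{A}^{\spc{S}_a}$. Rather than match the defining families directly, I would argue by uniqueness: the composite $\widehat{\eta_1}^{\spc{S}_a}\circ\widehat{\eta_2}$ is a BF representation of $\alg{F}$ localised in $\spc{C}$ that commutes with the $G$-action (each factor does, and $\alpha_g$ is spatial and weakly continuous, so its weakly continuous extension to $\alg{F}^{\spc{S}_a}$ still commutes with $\alpha_g$); restricted to $\alg{A}^{\spc{S}_a}$ it equals $\eta_1^{\spc{S}_a}\circ\eta_2=\eta_1\otimes_{\alg{A}}\eta_2$, again by Lemma~\ref{lem:fieldfix}(ii). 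Thus $\widehat{\eta_1}\otimes_{\alg{F}}\widehat{\eta_2}$ and $\Phi(\eta_1\otimes_{\alg{A}}\eta_2)$ are both $G$-commuting cone-localised extensions of $\eta_1\otimes_{\alg{A}}\eta_2$, so Theorem~\ref{thm:unique} forces them to coincide, giving strictness on the nose.

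For the braiding I would exploit that $\varepsilon$ is insensitive to the choice of spectator morphisms and charge transporters. Choosing spectator morphisms localised in spacelike-separated cones with transporters $V_i\in\alg{A}^{\spc{S}_a}$ (Lemma~\ref{lem:chargetr}), the images $\pi^{\spc{S}_a}(V_i)$ are charge transporters for $\Phi(\eta_i)$ by Proposition~\ref{prop:extend}, and $\Phi$ of the spectator morphisms is cone-localised in the same cones by Lemma~\ref{lem:extlocal}. Hence $\varepsilon^{\alg{F}}_{\Phi(\eta_1),\Phi(\eta_2)}$ may be computed from exactly this lifted data, and the morphism-level monoidality just established gives $\varepsilon^{\alg{F}}_{\Phi(\eta_1),\Phi(\eta_2)}=\Phi\big((V_2\otimes V_1)^*(V_1\otimes V_2)\big)=\Phi(\varepsilon^{\alg{A}}_{\eta_1,\eta_2})$. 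For direct sums, the defining isometries $V_1,V_2\in\alg{A}(\widehat{\spc{C}})''\subset\alg{A}^{\spc{S}_a}$ with $V_1V_1^*+V_2V_2^*=I$ map under $\Phi$ to isometries $\pi^{\spc{S}_a}(V_i)\in\Hom_{\alg{F}}(\Phi(\eta_i),\Phi(\eta_1\oplus\eta_2))$ again summing to $I$, exhibiting $\Phi(\eta_1\oplus\eta_2)$ as a direct sum of $\Phi(\eta_1)$ and $\Phi(\eta_2)$. Finally, since $\Phi$ is a strict $*$-preserving tensor functor it carries a standard solution $(\overline\eta,R,\overline R)$ of the conjugate equations to a solution $(\Phi(\overline\eta),\Phi(R),\Phi(\overline R))$ for $\Phi(\eta)$; as $\Phi(R)^*\Phi(R)=\pi^{\spc{S}_a}(R^*R)=d(\eta)I$ and likewise for $\overline R$, this solution is again standard, whence $d(\Phi(\eta))=d(\eta)$.
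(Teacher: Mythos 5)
Your proposal is correct, but it takes a genuinely different route from the paper at the central step. Where you prove strict monoidality on objects by an abstract uniqueness argument --- checking that $\widehat{\eta_1}^{\spc{S}_a}\circ\widehat{\eta_2}$ is a cone-localised, $G$-equivariant BF extension of $\eta_1\otimes_{\alg{A}}\eta_2$ and then invoking the uniqueness clause of Theorem~\ref{thm:unique} --- the paper instead verifies $\Phi(\eta_1\otimes\eta_2)=\Phi(\eta_1)\otimes\Phi(\eta_2)$ by a direct computation on the dense subalgebra $\pi(\alg{F}_0)$: it evaluates both sides on a generator $\pi(A,\rho,\psi)$ using the defining formula with $W_\rho(\eta)=\varepsilon_{\rho,\eta}$ and reduces the comparison to the braid equation $\varepsilon_{\rho,\eta_1\otimes\eta_2}=(\varepsilon_{\rho,\eta_1}\otimes I_{\eta_2})(I_{\eta_1}\otimes\varepsilon_{\rho,\eta_2})$ read in the form of conditions~\eqref{eq:extnat}--\eqref{eq:extnateta}. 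Both arguments rely on the same auxiliary identity $\widehat{\eta_1}^{\spc{S}_a}(\pi^{\spc{S}_a}(A))=\pi^{\spc{S}_a}(\eta_1^{\spc{S}_a}(A))$, obtained by weak continuity. Your route is slicker and shows \emph{why} strictness holds (both sides answer the same universal question), but it makes you verify the standing hypotheses of Theorem~\ref{thm:unique} for the composite --- $G$-equivariance of the weakly continuous extension $\widehat{\eta_1}^{\spc{S}_a}$, localisation and transportability of the tensor product in $\bfm^{\alg{F}}(\spc{C})$, and agreement of the two representations on all of $\alg{F}$ rather than merely on the norm-dense image of $\pi(\alg{F}_0)$ (agreement there plus weak continuity on the local algebras is what closes this last gap); the paper's computation avoids these bookkeeping steps at the price of being less conceptual. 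Your treatment of the braiding (via its independence of the choice of spectator morphisms and transporters, lifted through Lemma~\ref{lem:chargetr}, Lemma~\ref{lem:extlocal} and Proposition~\ref{prop:extend}) and of direct sums coincides in substance with the paper's, and your derivation of morphism-functoriality from naturality of $\varepsilon_{\rho,\eta}$ in $\eta$ makes explicit what the paper calls ``immediate''.

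One caveat on the final step: your inference that $(\Phi(\overline\eta),\Phi(R),\Phi(\overline R))$ is standard \emph{because} $\Phi(R)^*\Phi(R)=d(\eta)I$ and $\Phi(\overline R)^*\Phi(\overline R)=d(\eta)I$ is not by itself a proof. Preservation of these norms only yields $d(\Phi(\eta))\leq d(\eta)$, since the dimension is the infimum of $\|R'\|\,\|\overline{R}'\|$ over all solutions of the conjugate equations; standardness is the balancing condition $R^*\overline\eta(S)R=\overline R^*S\overline R$ for \emph{all} $S\in\End_{\alg{F}}(\Phi(\eta))$, and this endomorphism algebra is strictly larger than $\pi^{\spc{S}_a}(\End_{\alg{A}}(\eta))$ precisely because $\Phi$ is not full (Proposition~\ref{prop:homsets}). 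So the reverse inequality needs a separate argument; the paper handles exactly this point by citing~\cite[Proposition~344]{mmappendix}, and your proof should either do the same or supply the missing verification.
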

\begin{proof}
  Functoriality of $\Phi$ is immediate. Note that $\Phi(\iota)$ is just the identity endomorphism of $\alg{F}$, hence it preserves the tensor unit. We verify $\Phi(\eta_1 \otimes \eta_2) = \Phi(\eta_1) \otimes \Phi(\eta_2)$ on a dense subalgebra. Consider $F = (A, \rho, \psi) \in \alg{F}_0$. Then the extension of the tensor product is given by
  \begin{equation}
	  \widehat{\eta_1 \otimes \eta_2}(\pi(F)) = \pi^{\spc{S}_a}(\eta_1^{\spc{S}_a}\eta_2(A) \varepsilon_{\rho,\eta_1 \otimes \eta_2}) \pi(1, \rho, \psi).
	  \label{eq:tensprod}
  \end{equation}
Note that by definition, $\widehat{\eta_1}(\pi(A,\iota, 1)) = \pi^{\spc{S}_a}(\eta_1(A))$ for all $A \in \alg{A}$. Passing to the unique weakly continuous extension, and taking weak limits, it follows that $\widehat{\eta}^{\spc{S}_a}_1(\pi^{\spc{S}_a}(A)) = \pi^{\spc{S}_a}(\eta_1^{\spc{S}_a}(A))$ for all $A \in \alg{A}^{\spc{S}_a}$. We then calculate
\begin{align*}
  (\widehat{\eta_1} \otimes \widehat{\eta_2})(\pi(F)) &= \widehat{\eta_1}^{\spc{S}_a}(\pi^{\spc{S}_a}(\eta_2(A) \varepsilon_{\rho,\eta_2}) \pi(I,\rho,\psi)) \\
  &= \widehat{\eta_1}^{\spc{S}_a}(\pi^{\spc{S}_a}(\eta_2(A) \varepsilon_{\rho,\eta_2})) \pi^{\spc{S}_a}(\varepsilon_{\rho,\eta_1})\pi(I,\rho,\psi) \\
  &= \pi^{\spc{S}_a}(\eta_1^{\spc{S}_a}(\eta_2(A) \varepsilon_{\rho,\eta_2}) \varepsilon_{\rho,\eta_1}) \pi(I,\rho,\psi).
\end{align*} 
By the braid equations (cf. conditions~\eqref{eq:extnat}--\eqref{eq:extnateta}), the last line is equal to equation~\eqref{eq:tensprod}. For $\eta_1, \eta_2 \in \bfm^{\alg{A}}(\spc{C})$, note that $\Phi(\varepsilon_{\eta_1, \eta_2}) = \varepsilon_{\Phi(\eta_1), \Phi(\eta_2)}$. This follows from uniqueness of the braiding of $\bfm^{\alg{F}}(\spc{C})$, and by noticing that the funtor $\Phi$ sends spectator morphisms used in the definition of $\varepsilon_{\eta_1, \eta_2}$ to spectator morphisms for $\Phi(\eta_1)$ and $\Phi(\eta_2)$.

To prove that $\Phi$ preserves direct sums, assume $\eta_1 \oplus \eta_2 = \Ad V_1\circ \eta_1 + \Ad V_2 \circ \eta_2$. It is then not hard to show that for $F \in \alg{F}_0$,
\[
	\Phi(\eta_1 \oplus \eta_2)(\pi(F)) = \Phi(V_1) \Phi(\eta_1)(\pi(F)) \Phi(V_1^*) + \Phi(V_2) \Phi(\eta_1)(\pi(F)) \Phi(V_2^*).
\]
The right hand side is just the direct sum $\Phi(\eta_1) \oplus \Phi(\eta_2)$.

Finally, for the last statement one can show that if $(\overline{\eta}, R, \overline{R})$ is a standard conjugate for $\eta$, then $(\Phi(\overline{\eta}), \Phi(R), \Phi(\overline{R}))$ is a standard conjugate for $\Phi(\eta)$, and this determines the dimension. Details can be found in~\cite[Proposition 344]{mmappendix}.
\end{proof}

Using some harmonic analysis, the intertwiners between two extensions can be described explicitly.
\begin{prop}
	For $\gamma \in \dhr^{\alg{A}}$, write $H_\gamma$ for the Hilbert space in $\alg{F}$ generated by $\pi(I, \gamma, \psi), \psi \in E(\gamma)$. Then for $\eta_1, \eta_2 \in \bfm^{\alg{A}}(\spc{C})$,
	\begin{equation}
	\label{eq:homsets}
	  \Hom_{\alg{F}}(\Phi(\eta_1), \Phi(\eta_2)) = \lspan_{i \in \widehat{G} } \pi^{\spc{S}_a}(\Hom_\alg{A}(\gamma_i \otimes \eta_1, \eta_2))H_{\gamma_i},
	\end{equation}
where $\gamma_i \in \dhr^{\alg{A}}$ corresponds to the irrep $i$. Moreover, we can choose each $\gamma_i$ to be localised in a double cone $\mc{O}_i \subset \spc{C}$.
 \label{prop:homsets}
\end{prop}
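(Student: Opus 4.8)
The plan is to prove the asserted equality by establishing the two inclusions separately. The inclusion ``$\supseteq$'' is a direct computation on the generators of $\pi(\alg{F}_0)$, while ``$\subseteq$'' rests on harmonic analysis over $G$ together with the decomposition of irreducible tensors from Lemma~\ref{lem:irrdecompose}.

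For ``$\supseteq$'', I would fix $b \in \Hom_\alg{A}(\gamma_i \otimes \eta_1, \eta_2)$ and $\Psi = \pi(I, \gamma_i, \psi) \in H_{\gamma_i}$, and verify that $\pi^{\spc{S}_a}(b)\Psi$ intertwines $\Phi(\eta_1)$ and $\Phi(\eta_2)$ by evaluating both sides on an arbitrary generator $\pi(A, \rho, \phi)$. Using the defining formula~\eqref{eq:extdefine} with $W_\rho(\eta) = \varepsilon_{\rho,\eta}$ and repeatedly invoking Lemma~\ref{lem:vecmul} to move the field operator $\Psi$ across elements of $\pi^{\spc{S}_a}(\alg{A}^{\spc{S}_a})$, both sides reduce to expressions of the form $\pi^{\spc{S}_a}(\,\cdot\,)\pi(I, \gamma_i \otimes \rho, \psi \otimes \phi)$. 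Matching them requires rewriting $\pi(I, \rho \otimes \gamma_i, \phi \otimes \psi)$ as $\pi^{\spc{S}_a}(\varepsilon_{\gamma_i,\rho})\pi(I, \gamma_i \otimes \rho, \psi \otimes \phi)$ via the equivalence relation defining $\alg{F}_0$, after which the remaining operator identity
\[
b\,\gamma_i^{\spc{S}_a}(\varepsilon_{\rho,\eta_1}) = \varepsilon_{\rho,\eta_2}\,\rho^{\spc{S}_a}(b)\,\varepsilon_{\gamma_i,\rho}
\]
is exactly what follows from naturality of the braiding in its second argument, the hexagon identity (conditions~\eqref{eq:extnat}--\eqref{eq:extnateta}), the intertwining property of $b$, and the degeneracy of $\gamma_i$ provided by Lemma~\ref{lem:trivmon} (which gives $\varepsilon_{\rho,\gamma_i}^{-1} = \varepsilon_{\gamma_i,\rho}$). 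This shows that the right-hand side is contained in the Hom-set.

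For ``$\subseteq$'', I would first observe that $\Hom_\alg{F}(\Phi(\eta_1), \Phi(\eta_2))$ is $G$-invariant: since both extensions commute with $\alpha_g$ and $\alpha_g$ is an automorphism of $\alg{F}$, applying $\alpha_g$ sends intertwiners to intertwiners. By localisation of the $\Phi(\eta_j)$ and Haag duality for $\alg{F}$ (Theorem~\ref{thm:trivdhr}), any such $T$ lies in $\alg{F}(\spc{C})''$. Decomposing this $G$-module into isotypic components by the Peter--Weyl theorem, it suffices to treat an irreducible tensor $X$, part of a multiplet $X_1, \dots, X_d$ all lying in the Hom-set. Applying Lemma~\ref{lem:irrdecompose} with $\alg{B} = \alg{F}(\spc{C})''$ and a double cone $\mc{O}_i \subset \spc{C}$ (which is the source of the ``moreover'', together with transportability of the DHR charge) yields $X_k = \pi^{\spc{S}_a}(b)\Psi_k$ for a common $G$-invariant $\pi^{\spc{S}_a}(b) = \sum_k X_k \Psi_k^*$ and $\Psi_k \in H_{\gamma_i} \subset \alg{F}(\mc{O}_i)$, with $\gamma_i$ localised in $\mc{O}_i$. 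By Lemma~\ref{lem:fieldfix}(ii), $b \in \alg{A}^{\spc{S}_a}$. Evaluating the intertwining relation on $\pi(A,\iota,1)$ for $A \in \alg{A}$, using Lemma~\ref{lem:vecmul} and then multiplying on the right by $\Psi_k^*$ and summing over $k$ (so that $\sum_k \Psi_k \Psi_k^* = I$, as $H_{\gamma_i}$ has support $I$), faithfulness of $\pi^{\spc{S}_a}$ forces $b\,\gamma_i^{\spc{S}_a}(\eta_1(A)) = \eta_2(A)\,b$, i.e. $b \in \Hom_\alg{A}(\gamma_i \otimes \eta_1, \eta_2)$. Hence $X \in \pi^{\spc{S}_a}(\Hom_\alg{A}(\gamma_i \otimes \eta_1, \eta_2)) H_{\gamma_i}$, closing the argument.

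I expect the main obstacle to be the bookkeeping in the ``$\supseteq$'' computation: getting the orders of $\psi$ and $\phi$ to match through the $\alg{F}_0$ relation and reducing the resulting identity to a clean consequence of naturality, the hexagon, and degeneracy. On the ``$\subseteq$'' side, the delicate points are to justify that an entire $G$-multiplet lies in the Hom-set (needed for the cancellation $\sum_k \Psi_k \Psi_k^* = I$) and to place $T$ in an algebra containing $\alg{F}(\mc{O}_i)$ with $\mc{O}_i \subset \spc{C}$, which relies on Haag duality for the field net and on transportability of the DHR charges into $\spc{C}$.
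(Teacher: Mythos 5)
Your proposal is correct and follows essentially the same route as the paper's own proof: for ``$\supseteq$'' the same computation on generators using Lemma~\ref{lem:vecmul}, the symmetry relation $\pi(I,\rho,\phi)\pi(I,\gamma_i,\psi)=\pi(\varepsilon_{\gamma_i,\rho},\gamma_i\otimes\rho,\psi\otimes\phi)$, the braid equations and degeneracy of the DHR objects (the paper merely packages your displayed identity differently, by first lifting $T$ to an intertwiner from $\widehat{\gamma\otimes\eta_1}$ to $\widehat{\eta_2}$ via the last statement of Proposition~\ref{prop:extend}); and for ``$\subseteq$'' the same harmonic-analysis argument via $G$-stability of the finite-dimensional Hom-set, multiplets, Lemma~\ref{lem:irrdecompose}, Lemma~\ref{lem:fieldfix}(ii), the support property $\sum_k\Psi_k\Psi_k^*=I$ and faithfulness of $\pi^{\spc{S}_a}$. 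Your explicit justifications of the two points the paper leaves implicit --- placing $T$ in $\alg{F}(\spc{C})''$ by Haag duality for $\alg{F}$, and arranging $\gamma_i$ in $\mc{O}_i\subset\spc{C}$ by transportability --- are accurate refinements rather than a different method.
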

\begin{proof}
	Consider $T \in \Hom_{\alg{A}}(\gamma \otimes \eta_1,\eta_2)$ and $\Psi = \pi(I, \gamma, \psi) \in H_\gamma$. By Proposition~\ref{prop:extend}, $T$ lifts to an intertwiner $\pi^{\spc{S}_a}(T)$ from $\widehat{\gamma\otimes\eta_1}$ to $\widehat{\eta_2}$, hence
\[
 \widehat{\eta_2}(\pi(A, \rho, \psi')) \pi^{\spc{S}_a}(T) \Psi = \pi^{\spc{S}_a}(T) \widehat{\gamma\otimes\eta_1}(\pi(A,\rho, \psi')) \Psi.
\]
Since the DHR morphisms form a symmetric category and $E$ is a symmetric $^*$-tensor functor, that is, it maps $\varepsilon_{\gamma,\rho}$ to the canonical symmetry $\Sigma_{E(\gamma), E(\rho)}$, it follows that $\pi(I, \rho, \psi') \pi(I, \gamma, \psi) = \pi(\varepsilon_{\gamma,\rho}, \gamma, \psi) \pi(I, \rho, \psi')$. Using the braid equations, we then have
\begin{align*}
  \pi^{\spc{S}_a}(\gamma^{\spc{S}_a}\eta_1(A) \varepsilon_{\rho,\gamma\otimes\eta_1}) \pi(I, \rho, \psi') \Psi &= \pi^{\spc{S}_a}(\gamma^{\spc{S}_a}\eta_1(A) \varepsilon_{\rho,\gamma \otimes \eta_1} \varepsilon_{\gamma,\rho}) \Psi \pi(I, \rho, \psi') \\
  &= \pi^{\spc{S}_a}(\gamma^{\spc{S}_a}(\eta_1(A) \varepsilon_{\rho,\eta_1})) \Psi \pi(I, \rho, \psi').
\end{align*}
An application of Lemma~\ref{lem:vecmul} then shows that $\pi^{\spc{S}_a}(T) \Psi \in \Hom_{\alg{F}}(\Phi(\eta_1), \Phi(\eta_2))$.

For the other direction, note that since $\Phi(\eta_1)$ and $\Phi(\eta_2)$ are $G$-invariant extensions, it follows that $\Hom_{\alg{F}}(\Phi(\eta_1), \Phi(\eta_2))$ is stable under the action of $G$. Since the Hom-sets are finite-dimensional vector spaces, it is clear that in this case they are generated linearly by irreducible tensors under $G$. So let $T_1, \dots T_n$ be some multiplet in $\Hom_{\alg{F}}(\Phi(\eta_1), \Phi(\eta_2))$ transforming according to the representation $\xi$. By the proof of Lemma~\ref{lem:irrdecompose} there is a $G$-invariant $X$ such that $T_i = X \Psi_i$, where the $\Psi_i \in H_\gamma$ form an orthonormal basis for $E(\gamma)$. Moreover, $\gamma$ is localised in some $\mc{O} \subset \spc{C}$ and transforms according to $\xi$.

Since $T_i \in \Hom_{\alg{F}}(\Phi(\eta_1), \Phi(\eta_2))$, we have, with $F = (A, \iota, 1) \in \alg{F}_0$,
\[
X \Psi_i \widehat{\eta_1}(\pi(F)) = \widehat{\eta_2}(\pi(F)) X \Psi_i = X \pi^{\spc{S}_a}(\gamma^{\spc{S}_a}(\eta_1(A) )) \Psi_i,
\]
where the last identity follows by applying Lemma~\ref{lem:vecmul} to the first term in the equation. Now, multiply on the right by $\Psi_i^*$, and sum over $i$. Since $\sum_{i=1}^d \Psi_i \Psi_i^* = I$ by~\cite[Proposition 270]{halvapp}, this leads to
\begin{equation}
	\label{eq:intertwiner}
	X \pi^{\spc{S}_a}(\gamma^{\spc{S}_a} \eta_1(A)) = \pi^{\spc{S}_a}(\eta_2(A)) X.
\end{equation}
By Lemma~\ref{lem:fieldfix}(ii) there is a $T \in \alg{A}^{\spc{S}_a}$ such that $\pi^{\spc{S}_a}(T) = X$, and by equation~\eqref{eq:intertwiner} and faithfulness of $\pi^{\spc{S}_a}$, we have $T \in \Hom_\alg{A}(\gamma \otimes \eta_1, \eta_2)$.
\end{proof}

\begin{corollary}
The tensor functor $\Phi$ is an embedding (i.e. faithful and injective on objects), but not full.
\end{corollary}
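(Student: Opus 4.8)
The plan is to dispatch the three assertions in turn; each is a short consequence of the explicit descriptions already obtained. For \emph{faithfulness}, recall from Proposition~\ref{prop:extend} that $\Phi$ acts on morphisms by $\Phi(S) = \pi^{\spc{S}_a}(S)$. Since $\pi^{\spc{S}_a}$ is a faithful representation of $\alg{A}^{\spc{S}_a}$, the assignment $S \mapsto \pi^{\spc{S}_a}(S)$ is injective on each space $\Hom_{\alg{A}}(\eta_1,\eta_2)$, so $\Phi$ is faithful. For \emph{injectivity on objects}, I would use that the defining formula~\eqref{eq:extdefine} gives $\widehat{\eta}(\pi(A,\iota,1)) = \pi^{\spc{S}_a}(\eta(A))$ for all $A \in \alg{A}$. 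Hence if $\Phi(\eta_1) = \Phi(\eta_2)$, then $\pi^{\spc{S}_a}\circ\eta_1 = \pi^{\spc{S}_a}\circ\eta_2$ on $\alg{A}$, and faithfulness of $\pi^{\spc{S}_a}$ forces $\eta_1 = \eta_2$. Equivalently, Corollary~\ref{cor:restrict} already exhibits restriction to $\alg{A}^{\spc{S}_a}$ as a left inverse of $\Phi$ on objects, which yields injectivity at once.

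For \emph{non-fullness}, the key observation is that the image of $\Phi$ on a Hom-set is exactly one term of the decomposition in Proposition~\ref{prop:homsets}. Indeed, $\Phi(\Hom_{\alg{A}}(\eta_1,\eta_2)) = \pi^{\spc{S}_a}(\Hom_{\alg{A}}(\eta_1,\eta_2))$, which is the summand indexed by the trivial sector $\gamma = \iota$, since $\iota\otimes\eta_1 = \eta_1$ and $H_\iota = \mathbb{C} I$. Thus $\Phi$ would be full precisely when all remaining summands $\pi^{\spc{S}_a}(\Hom_{\alg{A}}(\gamma_i\otimes\eta_1,\eta_2))H_{\gamma_i}$ with $\gamma_i \not\cong\iota$ vanish, which generically fails. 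To produce an explicit counterexample I would fix a non-trivial irreducible $\gamma \in \dhr^{\alg{A}}$ and take $\eta_1 = \iota$, $\eta_2 = \gamma$. Then the trivial-sector term is $\pi^{\spc{S}_a}(\Hom_{\alg{A}}(\iota,\gamma)) = 0$, because $\iota$ and $\gamma$ are inequivalent irreducibles, whereas the $\gamma$-term contains $\pi^{\spc{S}_a}(\Hom_{\alg{A}}(\gamma,\gamma))H_\gamma \supseteq \pi^{\spc{S}_a}(\mathbb{C} I)H_\gamma = H_\gamma \neq 0$. Hence $\Hom_{\alg{F}}(\Phi(\iota),\Phi(\gamma))$ contains the field operators spanning $H_\gamma$, none of which lie in the image of $\Phi$, so $\Phi$ is not full.

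I do not expect a genuine obstacle here: the substantive work is carried out in Propositions~\ref{prop:extend} and~\ref{prop:homsets}, and the corollary is essentially bookkeeping. The only points requiring care are, first, the matching of the image of $\Phi$ with the $\gamma=\iota$ summand, which relies on $\iota$ being the monoidal unit and on $\dim E(\iota) = 1$ so that $H_\iota = \mathbb{C} I$; and second, the standing requirement that a non-trivial DHR sector exists. If $G = \{e\}$, then $\alg{F} = \alg{A}$ and $\Phi$ is even an equivalence, so the non-fullness assertion is meaningful only in the presence of non-trivial DHR sectors, which is exactly the situation this paper is concerned with. Conceptually, non-fullness is the precise categorical reflection of the fact that $\alg{F}$ carries charged field operators $H_\gamma$ with no counterpart in $\alg{A}$.
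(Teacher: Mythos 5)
Your proof is correct and follows essentially the same route as the paper's: faithfulness from faithfulness of $\pi^{\spc{S}_a}$ via Proposition~\ref{prop:extend}, injectivity on objects via Corollary~\ref{cor:restrict} (restriction as a left inverse), and non-fullness by identifying the image $\pi^{\spc{S}_a}(\Hom_{\alg{A}}(\eta_1,\eta_2))$ with the $\gamma=\iota$ summand in the decomposition of Proposition~\ref{prop:homsets}. Your only departure is to sharpen the paper's remark that the image is ``in general'' a proper subset into an explicit witness --- $\eta_1=\iota$, $\eta_2=\gamma$ with $\gamma$ a non-trivial irreducible DHR sector, so that $H_\gamma \subset \Hom_{\alg{F}}(\Phi(\iota),\Phi(\gamma))$ while $\Hom_{\alg{A}}(\iota,\gamma)=0$ --- together with the correct caveat that non-fullness presupposes a non-trivial DHR sector; these are welcome refinements of the same argument, not a different method.
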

\begin{proof}
  It follows from Corollary~\ref{cor:restrict} that $\Phi$ is injective on objects. Since $\pi^{\spc{S}_a}$ is a faithful representation, Proposition~\ref{prop:extend} implies $\Phi$ is faithful. The preceding proposition implies that it is not full. Indeed, the image of $\Hom_{\alg{A}}(\eta_1, \eta_2)$ under the functor $\Phi$ is $\pi^{\spc{S}_a}(\Hom_{\alg{A}}(\eta_1, \eta_2))$, which in general is a proper subset of $\Hom_{\alg{F}}(\Phi(\eta_1), \Phi(\eta_2))$ as given by equation~\eqref{eq:homsets}.
\end{proof}

Inspired by the results of Doplicher and Roberts, M\"uger formulated a categorical version of the field net construction~\cite{MR1749250}. In a different context, a similar construction is due to Brugi\`eres~\cite{MR1741269}. In both approaches, modular categories are obtained by getting rid of a non-trivial centre. Here we investigate this in the present situation, c.f.~\cite{MR2183964}. We follow the approach of~\cite{MR1749250}, since it also works when the symmetric subcategory has infinitely many isomorphism classes of objects.

Let us recall the basic ideas in this construction. Suppose $\mc{C}$ is a braided tensor $C^*$-category and $\mc{S}$ is a full symmetric subcategory. By the Doplicher-Roberts theorem~\cite{MR1010160}, there is a unique compact group $G$ and an equivalence of categories $E: \mc{S} \to \operatorname{Rep}_f(G)$. In the case at hand, $\mc{C}$ is the category $\bfm^{\alg{A}}(\spc{C})$ and $\mc{S}$ is the symmetric subcategory $\dhr^{\alg{A}}(\spc{C})$.\footnote{Note that in the construction of the field net, the subcategory $\dhr^{\alg{A}}$ was used, without the localisation in $\spc{C}$. Using transportability, however, it is easy to see that one might as well choose $\dhr^{\alg{A}}(\spc{C})$, since this category is equivalent to $\dhr^{\alg{A}}$.} The group $G$ will be the symmetry group, and $E$ is the functor used in Section~\ref{sec:field}.

First a category $\mc{C} \rtimes_0 \mc{S}$ is defined. For each $k \in \widehat{G}$, choose a corresponding $\gamma_k \in \mc{S}$ such that $\mc{H}_k = E(\gamma_k)$ transforms according to $k$. The category $\mc{C} \rtimes_0 \mc{S}$ is the category with the same objects as $\mc{C}$, but with Hom-sets
\[
	\Hom_{\mc{C} \rtimes_0 \mc{S}}(\rho,\sigma) = \oplus_{k \in \widehat{G}} \Hom_{\mc{C}} (\gamma_k \otimes \rho, \sigma) \otimes \mc{H}_k,
\]
where the usual tensor product of vector spaces over $\mathbb{C}$ is used. One can then define a composition of arrows, a $*$-operation, conjugates, direct sums and in the case at hand, where the objects of $\mc{S}$ are degenerate, a braiding. Since the details are quite involved, we refer to the original paper~\cite{MR1749250}.

The category $\mc{C} \rtimes_0 \mc{S}$ already has most of the desired structure. One property, however, is missing: in general it is not closed under subobjects. To remedy this, a \emph{closure} construction is defined. This closure is denoted by $\mc{C} \rtimes \mc{S}$. It is called the \emph{crossed product} of $\mc{C}$ by $\mc{S}$. The basic idea is to add a corresponding (sub)object for each projection in $\Hom_{\mc{C} \rtimes_0 \mc{S}}(\eta,\eta)$. To make this precise: the category $\mc{C} \rtimes \mc{S}$ has pairs $(\eta, P)$ as objects where $\eta \in \spc{C}$ and $P = P^2 = P^* \in \Hom_{\mc{C} \rtimes_0 \mc{S}}(\eta,\eta)$. The morphisms are given by
\[
\Hom_{\mc{C} \rtimes \mc{S}}( (\eta_1,P_1),(\eta_2,P_2) ) = \{ T \in \Hom_{\mc{C} \rtimes_0 \mc{S}}(\eta_1, \eta_2)\, |\, T = T\circ P_1 = P_2 \circ T \},
\]
which is just $P_2 \circ \Hom_{\mc{C} \rtimes_0 \mc{S}}(\eta_1, \eta_2) \circ P_1$. Composition is as in $\mc{C} \rtimes_0 \mc{S}$. Because $P$ is a projection, $\id_{(\eta,P)} = P$. The tensor product can be defined by as $(\eta_1, P_1) \otimes (\eta_2, P_2) = (\eta_1 \otimes \eta_2, P_1 \otimes P_2)$, and the same as in $\mc{C} \rtimes_0 \mc{S}$ on morphisms. One can then show that $\mc{C} \rtimes \mc{S}$ is a braided tensor $C^*$-category with conjugates, direct sums and subobjects. The category $\mc{C}$ is embedded into the crossed product $\mc{C} \rtimes \mc{S}$ by a tensor functor $\iota: \mc{C} \to \mc{C} \rtimes \mc{S}$, defined by $\eta \mapsto (\eta, \id_\eta)$ and $\Hom_\mc{C}(\eta_1, \eta_2) \ni T \mapsto T \otimes \Omega$. Here $\Omega$ is a unit vector in the Hilbert space transforming according to the trivial representation of $G$. Like the functor $\Phi$, $\iota$ is a embedding functor that is not full.

The following proposition clarifies the relation between the crossed product $\bfm^{\alg{A}}(\spc{C}) \rtimes \dhr^{\alg{A}}(\spc{C})$ and the BF representations of the field net $\alg{F}$.
\begin{prop}
  The extension functor $\Phi:\bfm^{\alg{A}}(\spc{C}) \to \bfm^{\alg{F}}(\spc{C})$ factors through the canonical inclusion functor $\iota: \bfm^{\alg{A}}(\spc{C}) \to \bfm^{\alg{A}}(\spc{C}) \rtimes \dhr^{\alg{A}}(\spc{C})$. That is, there is a braided tensor functor $H: \bfm^{\alg{A}}(\spc{C}) \rtimes \dhr^{\alg{A}}(\spc{C}) \to \bfm^{\alg{F}}(\spc{C})$ such that the diagram
\[
\begin{diagram}
  \node{\bfm^{\alg{A}}(\spc{C})} \arrow{e,t}{\iota}\arrow{se,b}{\Phi} \node{\bfm^{\alg{A}}(\spc{C}) \rtimes \dhr^{\alg{A}}(\spc{C})} \arrow{s,r}{H} \\
  \node[2]{\bfm^{\alg{F}}(\spc{C})}
\end{diagram}
\]
commutes. Moreover, $H$ is full and faithful.
\end{prop}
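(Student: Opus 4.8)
The plan is to build $H$ in two stages, mirroring the two-stage construction of the crossed product itself: first I would define a functor $H_0 \colon \mc{C} \rtimes_0 \mc{S} \to \bfm^{\alg{F}}(\spc{C})$ on the pre-closure category, and then extend it to the closure using that $\bfm^{\alg{F}}(\spc{C})$ has subobjects. Here I abbreviate $\mc{C} = \bfm^{\alg{A}}(\spc{C})$ and $\mc{S} = \dhr^{\alg{A}}(\spc{C})$. On objects I set $H_0(\eta) = \Phi(\eta)$, and on a generator $T \otimes h$ of $\Hom_{\mc{C} \rtimes_0 \mc{S}}(\eta_1, \eta_2) = \bigoplus_{k \in \widehat{G}} \Hom_{\mc{C}}(\gamma_k \otimes \eta_1, \eta_2) \otimes \mc{H}_k$, with $T \in \Hom_{\mc{C}}(\gamma_k \otimes \eta_1, \eta_2)$ and $h \in \mc{H}_k = E(\gamma_k)$, I set
\[
  H_0(T \otimes h) = \pi^{\spc{S}_a}(T)\, \pi(I, \gamma_k, h),
\]
extended linearly. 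This formula is forced by Proposition~\ref{prop:homsets}, whose right-hand side is exactly the span making up $\Hom_{\alg{F}}(\Phi(\eta_1), \Phi(\eta_2))$.

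The technical heart is verifying that $H_0$ respects composition. Given $S \otimes g \in \Hom_{\mc{C}}(\gamma_l \otimes \eta_2, \eta_3) \otimes \mc{H}_l$ and $T \otimes h$ as above, one computes the field-net product $\pi^{\spc{S}_a}(S)\pi(I,\gamma_l,g)\pi^{\spc{S}_a}(T)\pi(I,\gamma_k,h)$. Moving $\pi(I,\gamma_l,g)$ past $\pi^{\spc{S}_a}(T)$ via Lemma~\ref{lem:vecmul} converts it into $\pi^{\spc{S}_a}(\gamma_l^{\spc{S}_a}(T))$, while the two field operators multiply to $\pi(I, \gamma_l \otimes \gamma_k, g \otimes h)$; the outcome is $\pi^{\spc{S}_a}\bigl(S \circ (\id_{\gamma_l} \otimes T)\bigr)\pi(I, \gamma_l \otimes \gamma_k, g \otimes h)$. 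On the other hand, the composition law of~\cite{MR1749250} decomposes $\gamma_l \otimes \gamma_k$ into irreducibles through isometries $W_m \in \Hom_{\mc{C}}(\gamma_m, \gamma_l \otimes \gamma_k)$ with $\sum_m W_m W_m^* = I$; applying $H_0$ to the resulting sum and using the defining relation $(W_m, \gamma_m, \phi) \equiv (I, \gamma_l \otimes \gamma_k, E(W_m)\phi)$ on $\alg{F}_0$ together with $\sum_m E(W_m)E(W_m)^* = I$ collapses it to precisely the same expression. Thus $H_0$ preserves composition. Preservation of the $*$-operation and of the tensor product follows along the same lines, the latter using that $\Phi$ is a strict monoidal functor; that $H_0$ is braided I would check on generators, using that $\Phi$ is braided and that the crossed-product braiding restricts via $\iota$ to that of $\mc{C}$.

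Because $\bfm^{\alg{F}}(\spc{C})$ has subobjects it is idempotent complete, and $\mc{C} \rtimes \mc{S}$ is by construction the idempotent completion of $\mc{C} \rtimes_0 \mc{S}$; hence $H_0$ extends to a braided tensor functor $H$, sending $(\eta, P)$ to the subobject of $\Phi(\eta)$ cut out by the projection $H_0(P)$. Commutativity of the triangle is then immediate: since $\iota(\eta) = (\eta, \id_\eta)$ with $H_0(\id_\eta) = \id_{\Phi(\eta)}$, and $\iota(T) = T \otimes \Omega$ with $H_0(T \otimes \Omega) = \pi^{\spc{S}_a}(T) = \Phi(T)$, we obtain $H \circ \iota = \Phi$. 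Finally, $H$ is full and faithful: fullness at the pre-closure level is exactly the statement of Proposition~\ref{prop:homsets} that $H_0$ maps $\Hom_{\mc{C} \rtimes_0 \mc{S}}(\eta_1, \eta_2)$ onto $\Hom_{\alg{F}}(\Phi(\eta_1), \Phi(\eta_2))$, and faithfulness follows because a relation $\sum_i \pi^{\spc{S}_a}(T_i)\pi(I, \gamma_k, \psi_i) = 0$ (for an orthonormal basis $\psi_i$ of $\mc{H}_k$) yields $\pi^{\spc{S}_a}(T_j) = 0$ after right-multiplication by $\pi(I,\gamma_k,\psi_j)^*$ and use of $\pi(I,\gamma_k,\psi_i)^*\pi(I,\gamma_k,\psi_j) = \delta_{ij} I$, whence $T_j = 0$ by faithfulness of $\pi^{\spc{S}_a}$; distinct $k$ contribute $G$-isotypically independent operators. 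Since a fully faithful functor into an idempotent-complete category extends to a fully faithful functor on its Karoubi envelope, $H$ is full and faithful.

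The main obstacle I anticipate is the bookkeeping in the composition and braiding compatibilities: one must match the Clebsch--Gordan machinery built into the crossed-product structure maps against the field-net multiplication, the only non-formal inputs being Lemma~\ref{lem:vecmul} and the equivalence relation defining $\alg{F}_0$. Once the Hom-set identification of Proposition~\ref{prop:homsets} is in hand, fullness and faithfulness are essentially automatic; the genuine work lies in verifying that the algebraic structure maps are intertwined.
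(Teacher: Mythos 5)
Your construction is essentially the paper's own proof: the same assignment $H_0(\eta)=\Phi(\eta)$, $H_0(T\otimes\psi)=\pi^{\spc{S}_a}(T)\,\pi(I,\gamma_k,\psi)$ on the pre-closure, fullness read off from Proposition~\ref{prop:homsets}, the same verification of $H\circ\iota=\Phi$, and extension to the closure using that $\bfm^{\alg{F}}(\spc{C})$ has subobjects. Your Karoubi-envelope phrasing of the last step is a clean repackaging of what the paper does explicitly: it chooses, via Property~B, an isometry $W$ with $WW^*=H(P)$, sets $H(\eta,P)=W^*\widehat{\eta}(\cdot)W$, and transports back into $\spc{C}$; the arbitrariness of $W$ is exactly why the paper remarks that $H$ on the closure is only a strong, not strict, tensor functor --- a feature your abstract extension shares, though you do not mention it. Your composition check via Lemma~\ref{lem:vecmul} and the decomposition of $\gamma_l\otimes\gamma_k$ fills in the verification the paper dismisses as ``tedious'', and is correct in outline.

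One micro-step in your faithfulness argument fails as written. Writing $\Psi_i=\pi(I,\gamma_k,\psi_i)$, right-multiplying $\sum_i \pi^{\spc{S}_a}(T_i)\Psi_i=0$ by $\Psi_j^*$ produces the operators $\Psi_i\Psi_j^*$, to which the orthogonality relation $\Psi_j^*\Psi_i=\delta_{ij}I$ does not apply: the $\Psi_i\Psi_j^*$ are matrix units, not scalars, so the cross terms do not drop out. The standard repair is to apply the conditional expectation $\mathcal{E}=\int_G\alpha_g(\cdot)\,dg$ after the right-multiplication: the $\pi^{\spc{S}_a}(T_i)$ are $G$-invariant, and Schur orthogonality gives $\mathcal{E}(\Psi_i\Psi_j^*)=d_k^{-1}\delta_{ij}I$, as well as $\mathcal{E}\bigl(\Psi^{(k)}_i(\Psi^{(l)}_j)^*\bigr)=0$ for inequivalent irreps $k\neq l$ (which is also what substantiates your ``isotypically independent'' claim across different $k$). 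One then obtains $d_k^{-1}\pi^{\spc{S}_a}(T_j)=0$, hence $T_j=0$ by faithfulness of $\pi^{\spc{S}_a}$. With this fix your argument --- which the paper compresses to ``it is clear that $H$ is faithful'' --- is complete, and the rest of the proposal stands.
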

\begin{proof}
  First define $H$ on the category $\bfm^{\alg{A}}(\spc{C}) \rtimes_0 \dhr^{\alg{A}}(\spc{C})$. Clearly, for objects $\eta$ we must set $H(\eta) = \Phi(\eta)$. In view of Proposition~\ref{prop:homsets}, it is natural to set for the morphisms $H(T\otimes\psi_k) = \pi^{\spc{S}_a}(T) \pi(I, \gamma_k, \psi_k)$, where $T \in \Hom_{\alg{A}}(\gamma_k \otimes \rho, \sigma)$, $\psi_k \in E(\gamma_k)$, and $k \in \widehat{G}$, and extend by linearity. It is not very difficult, although quite tedious, to verify that $H$ defines a strict braided monoidal functor from $\bfm^{\alg{A}}(\spc{C}) \rtimes_0 \dhr^{\alg{A}}(\spc{C})$ to $\bfm^{\alg{F}}(\spc{C})$. It is clear that $H$ is faithful, and by Proposition~\ref{prop:homsets} it is full.

  To define $H$ on the closure $\bfm^{\alg{A}}(\spc{C}) \rtimes \dhr^{\alg{A}}(\spc{C})$, consider one of its objects $(\eta, P)$.  By definition, $P^2 = P = P^* \in \Hom_{\bfm^{\alg{A}}(\spc{C}) \rtimes_0 \dhr^{\alg{A}}(\spc{C})}(\eta, \eta)$. It follows that $H(P)$ as defined above is a projection in $\Hom_{\alg{F}}(\Phi(\eta), \Phi(\eta))$. By localisation of $H(\eta)$ and Haag duality it follows that $H(P) \in \alg{F}(\spc{C})''$. Consider a spacelike cone $\widehat{\spc{C}}$ such that $\overline{\spc{C}} \subset \widehat{\spc{C}}$. Then by Property B there is an isometry $W \in \alg{F}(\widehat{\spc{C}}')'$ such that $WW^* = H(P)$. Now define $H(\eta, P)(\cdot) = W^* \widehat{\eta}(\cdot) W$. This defines a $*$-representation of $\alg{F}$ that is localised in $\widehat{\spc{C}}$, due to localisation properties of $E$. Using transportability, an equivalent representation localised in $\spc{C}$ can be obtained, in a similar way as done in Section~\ref{sec:bf}. Again it can be verified that $H$ is a braided monoidal functor. It is clearly faithful, and by Proposition~\ref{prop:homsets} and the definition of the Hom-sets in the crossed product, it is also full. Note that $H$ is not a \emph{strict} tensor functor, but only a strong one. This is due to the arbitrary choices one has to make in finding the isometry $W$, which is merely unique up to unitary equivalence.

  Finally, $\bfm^{\alg{A}}(\spc{C})$ is embedded in $\bfm^{\alg{A}}(\spc{C}) \rtimes \dhr^{\alg{A}}(\spc{C})$ by $\eta \mapsto (\eta, I)$. Hence $H \circ \iota(\eta) = H( (\eta, I) ) = \widehat{\eta}$, thus $H \circ \iota = \Phi$.
\end{proof}

\section{Essential surjectivity of $H$}
\label{sec:esssur}
One of our goals is to understand the category $\bfm^{\alg{F}}(\spc{C})$ in terms of the original AQFT $\mc{O} \mapsto \alg{A}(\mc{O})$. The functor $\Phi$ is not full, so it cannot provide a complete answer to this question. The functor $H$, however, \emph{is} full and faithful. Moreover, we have an explicit description of the crossed product in terms of our original net of observables $\alg{A}(\mc{O})$. Since a tensor functor is an equivalence of tensor categories if and only if it is an equivalence of categories~\cite{MR0338002}, it is enough to show that $H$ is an equivalence of categories. By the previous section $H$ is full and faithful, hence only essential surjectivity has to be shown. In this section this question is investigated. The first observation is that this is related to a property of the extension functor $\Phi$.
\begin{prop}
  \label{prop:esssur}
  The functor $H$ is essentially surjective if and only if $\Phi$ is \emph{dominant}. That is, for each irreducible $\eta \in \bfm^{\alg{F}}(\spc{C})$, $\eta \prec \Phi(\widetilde{\eta})$ for some $\widetilde{\eta} \in \bfm^{\alg{A}}(\spc{C})$.
\end{prop}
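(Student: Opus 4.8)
The plan is to prove the biconditional by unpacking what $H$ essentially surjective means and translating it through the explicit structure of the crossed product $\bfm^{\alg{A}}(\spc{C}) \rtimes \dhr^{\alg{A}}(\spc{C})$. The key observation is that the objects of the crossed product are pairs $(\eta, P)$ with $\eta \in \bfm^{\alg{A}}(\spc{C})$ and $P$ a projection in $\Hom_{\bfm^{\alg{A}}(\spc{C}) \rtimes_0 \dhr^{\alg{A}}(\spc{C})}(\eta, \eta)$. Since every object in a category with subobjects decomposes, and since $\iota(\eta) = (\eta, \id_\eta)$, each object $(\eta, P)$ is a subobject of $\iota(\eta)$ in the crossed product. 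Applying the functor $H$, and using $H \circ \iota = \Phi$, we see that $H(\eta, P)$ is always a subobject of $\Phi(\eta)$.

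\emph{For the forward direction}, suppose $H$ is essentially surjective. Given an irreducible $\eta \in \bfm^{\alg{F}}(\spc{C})$, there is an object $(\widetilde{\eta}, P)$ of the crossed product with $H(\widetilde{\eta}, P) \cong \eta$. By the preceding remark, $H(\widetilde{\eta}, P) \prec H(\widetilde{\eta}, \id_{\widetilde{\eta}}) = \Phi(\widetilde{\eta})$, whence $\eta \prec \Phi(\widetilde{\eta})$, establishing dominance. \emph{For the converse}, suppose $\Phi$ is dominant. Any object of $\bfm^{\alg{F}}(\spc{C})$ is, by semi-simplicity, a direct sum of irreducibles, and since both $H$ and the crossed product respect direct sums it suffices to treat an irreducible $\eta \in \bfm^{\alg{F}}(\spc{C})$. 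By dominance, $\eta \prec \Phi(\widetilde{\eta}) = H(\iota(\widetilde{\eta}))$ for some $\widetilde{\eta}$, so there is an isometry $T \in \Hom_{\alg{F}}(\eta, \Phi(\widetilde{\eta}))$ exhibiting $\eta$ as a subobject. The projection $P_0 = T T^* \in \Hom_{\alg{F}}(\Phi(\widetilde{\eta}), \Phi(\widetilde{\eta}))$ then satisfies $\eta \cong (\text{range of } P_0)$ as objects of $\bfm^{\alg{F}}(\spc{C})$.

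The crux is then to realise $P_0$ as coming from the crossed product. Because $H$ is full and faithful (by the previous section), the isomorphism $\Hom_{\alg{F}}(\Phi(\widetilde{\eta}), \Phi(\widetilde{\eta})) \cong \Hom_{\bfm^{\alg{A}}(\spc{C}) \rtimes \dhr^{\alg{A}}(\spc{C})}(\iota(\widetilde{\eta}), \iota(\widetilde{\eta}))$ lets us pull back $P_0$ to a projection $P = H^{-1}(P_0)$ in the endomorphism algebra of $\iota(\widetilde{\eta})$ in the crossed product (fullness and faithfulness ensure $P$ is again a self-adjoint idempotent). Since the crossed product is closed under subobjects, $(\widetilde{\eta}, P)$ is a genuine object, and $H(\widetilde{\eta}, P) \cong \eta$ by construction of $H$ on subobjects via Property B. This produces the required preimage, establishing essential surjectivity.

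\emph{The main obstacle} I anticipate is bookkeeping around the non-strictness of $H$ on the closure: the definition $H(\eta, P)(\cdot) = W^* \widehat{\eta}(\cdot) W$ involves an isometry $W$ chosen only up to unitary equivalence, so one must verify that the isomorphism class of $H(\widetilde{\eta}, P)$ is well-defined and genuinely matches $\eta$ rather than merely being abstractly isomorphic to the wrong object. This requires carefully tracking how the concrete projection $P_0 = TT^*$ in $\alg{F}(\spc{C})''$ corresponds, under the full-and-faithful identification, to the abstract projection $P$ in the crossed product, and confirming that the Property~B isometry used to cut down $\Phi(\widetilde{\eta})$ reproduces the original subobject $\eta$ up to the expected unitary equivalence. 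The categorical statements are routine once one trusts fullness and faithfulness of $H$; the work is in making the correspondence of projections transparent.
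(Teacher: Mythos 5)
Your proposal is correct and follows essentially the same route as the paper: the forward direction via $H(\widetilde{\eta},P)\prec\Phi(\widetilde{\eta})$ and $H\circ\iota=\Phi$, and the converse by forming the projection $TT^*\in\End_{\alg{F}}(\Phi(\widetilde{\eta}))$, pulling it back to the crossed product, and invoking semi-simplicity. The only cosmetic difference is that you pull the projection back through fullness and faithfulness of $H$, whereas the paper appeals directly to Proposition~\ref{prop:homsets} --- but since fullness of $H$ is itself derived from that proposition, the arguments coincide; your closing check that the Property~B isometry $W$ with $WW^*=H(P)$ reproduces $\eta$ up to the unitary $T^*W$ is the same bookkeeping the paper leaves implicit.
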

\begin{proof}
  Suppose first that $H$ is essentially surjective. Then for an irreducible object $\eta \in \bfm^{\alg{F}}(\spc{C})$, there is some $(\eta',P)$ such that $\eta \cong H(\eta',P)$. But by construction of $H$, evidently $H(\eta', P)$ is a subobject of $\Phi(\eta')$. Since $\eta \cong H(\eta',P)$, also $\eta \prec \Phi(\eta')$.

  Conversely, suppose $\Phi$ is dominant. Let $\eta \in \bfm^{\alg{F}}(\spc{C})$ be irreducible, and suppose $\eta'$ is such that $\eta \prec \Phi(\eta')$. Then there is a corresponding isometry $W \in \Hom_{\alg{F}}(\eta, \Phi(\eta'))$. Hence $W W^*$ is a projection in $\End_\alg{F}(\Phi(\eta'), \Phi(\eta'))$. Proposition~\ref{prop:homsets} shows that this projection comes from a corresponding projection $\widehat{P}$ in $\Hom_{\bfm^{\alg{A}}(\spc{C}) \rtimes_0 \dhr^{\alg{A}}(\spc{C})}(\eta', \eta')$, and we see that $\eta \cong H(\eta', \widehat{P})$. The result follows because $\bfm^{\alg{F}}(\spc{C})$ is semi-simple.
\end{proof}
In the remainder of this section, we comment on the question of finding conditions such that $\Phi$ is dominant. In the case of finite $G$ this problem has been solved in~\cite{MR1721563}. Given an irreducible sector of the field net, one can use the full $G$-spectrum of the field net to construct a direct sum that is $G$-invariant and contains $\eta$. This construction works in the present case of BF sectors as well. By Corollary~\ref{cor:restrict} it follows that this direct sum comes from extending a representation of the observable net.

A straightforward attempt to generalise this to arbitrary compact groups would be to replace the (finite) direct sum by a countable direct sum or even a direct integral. However, apart from convergence problems one might encounter, there is another issue: since the dimension $d(\eta)$ is strictly positive, and is additive under taking direct sums, this leads to a sector with infinite dimension. Hence it is not an element of our category $\bfm^{\alg{F}}(\spc{C})$.

Let us first recall how the group $G$ acts on the sectors, or more precisely, on equivalence classes of localised representations.
\begin{lemma}
  \label{lem:gaction}
  Let $\eta \in \bfm^{\alg{F}}(\spc{C})$. Then $G$ acts on equivalence classes $[\eta]$ by ${}^g[\eta] = [{}^g\eta] = [\alpha_g \circ \eta \circ \alpha_{g^{-1}}]$.
\end{lemma}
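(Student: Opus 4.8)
The plan is to verify in turn that (i) ${}^g\eta := \alpha_g \circ \eta \circ \alpha_{g^{-1}}$ is again an object of $\bfm^{\alg{F}}(\spc{C})$, (ii) this assignment descends to equivalence classes, and (iii) $g \mapsto {}^g(\cdot)$ is a genuine left action of $G$. Throughout I would use that each $\alpha_g$ is implemented by the unitary $U(g)$ on $\mc{H}$ and, crucially, that by Definition~\ref{def:fieldnet}(iv) each $\alpha_g$ restricts to an automorphism of every local algebra $\alg{F}(\mc{O})$, so that $\alpha_g$ leaves $\alg{F}(\spc{C}')$ globally invariant.

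For (i), localisation is the first thing to check. For $F \in \alg{F}(\spc{C}')$ one has $\alpha_{g^{-1}}(F) \in \alg{F}(\spc{C}')$ by the invariance just noted, whence ${}^g\eta(F) = \alpha_g(\eta(\alpha_{g^{-1}}(F))) = \alpha_g(\alpha_{g^{-1}}(F)) = F$ by localisation of $\eta$; thus ${}^g\eta$ acts trivially on $\alg{F}(\spc{C}')$, and in particular it satisfies the selection criterion with the identity as intertwiner. For transportability, given any spacelike cone $\widehat{\spc{C}}$ choose, using transportability of $\eta$, an equivalent $\widehat{\eta}$ localised in $\widehat{\spc{C}}$ with unitary $V \in \Hom_\alg{F}(\eta, \widehat{\eta})$. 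Since $\alpha_g$ is a $*$-automorphism, a short computation shows that $\alpha_g(V)$ is a unitary in $\Hom_\alg{F}({}^g\eta, {}^g\widehat{\eta})$, and ${}^g\widehat{\eta}$ is localised in $\widehat{\spc{C}}$ by the same argument as above. Hence ${}^g\eta$ is transportable, so ${}^g\eta \in \bfm^{\alg{F}}(\spc{C})$.

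For (ii), the same computation handles well-definedness: if $V \in \Hom_\alg{F}(\eta, \eta')$ is a unitary implementing $\eta \cong \eta'$, then $\alpha_g(V)$ is a unitary in $\Hom_\alg{F}({}^g\eta, {}^g\eta')$, so $[{}^g\eta]$ depends only on $[\eta]$. For (iii), since $g \mapsto \alpha_g$ is a group homomorphism one computes directly that ${}^e\eta = \eta$ (as $\alpha_e = \id$) and that ${}^g({}^h\eta) = \alpha_g\alpha_h \circ \eta \circ \alpha_{h^{-1}}\alpha_{g^{-1}} = \alpha_{gh}\circ\eta\circ\alpha_{(gh)^{-1}} = {}^{gh}\eta$, giving a left action on equivalence classes.

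The argument is entirely routine, and I expect no real obstacle. The only point that requires any care is the preservation of localisation and transportability in step (i), which rests squarely on $\alpha_g$ being an automorphism of the \emph{net} $\mc{O} \mapsto \alg{F}(\mc{O})$ rather than merely of the quasi-local algebra $\alg{F}$; the same observation was already used in the proof of Corollary~\ref{cor:restrict}.
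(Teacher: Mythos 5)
Your proposal is correct and its core step --- showing well-definedness on equivalence classes by observing that a unitary $V \in \Hom_{\alg{F}}(\eta,\eta')$ yields the unitary $\alpha_g(V) \in \Hom_{\alg{F}}({}^g\eta, {}^g\eta')$ --- is exactly the paper's proof, which dismisses the action axioms as obvious and verifies only this point. Your additional checks (that ${}^g\eta$ remains localised in $\spc{C}$ and transportable, using that $\alpha_g$ is an automorphism of the net and hence leaves $\alg{F}(\spc{C}')$ globally invariant) merely fill in routine details the paper leaves implicit, consistently with how the same facts are used in its proof of Corollary~\ref{cor:restrict}.
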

\begin{proof}
  This obviously defines an action. This action is well-defined: suppose $\eta_1(-) = V \eta_2(-) V^*$ for some unitary $V$. Then ${}^g\eta_2(-) = \alpha_g \circ \eta_2 \circ \alpha_{g^{-1}}(-) = \alpha_g(V \eta_1 \circ \alpha_{g^{-1}}(-) V^*) = \alpha_g(V) \alpha_g \circ \eta_1 \alpha_{g^{-1}}(-) \alpha_g(V^*)$, hence ${}^g\eta_1 \cong {}^g\eta_2$.
\end{proof}

The previous observations suggest that if there is any hope to construct a $G$-invariant direct sum of a sector of the field net, the action of $G$ on this sector should not be too ``wild'', in the sense that there should only be a finite number of mutually inequivalent sectors under the action of $G$. This is indeed a necessary condition, as will be shown below. This behaviour is described by the stabiliser subgroup.
\begin{definition}
	Suppose $\eta \in \bfm^{\alg{F}}(\spc{C})$. The \emph{stabiliser subgroup} $G_\eta$ is defined by $G_\eta = \{ g \in G\,|\, {}^g\eta \cong \eta \}$.
\end{definition}
By Lemma~\ref{lem:gaction} this is well-defined. Moreover, the index $[G:G_\eta]$ is finite if and only if there are only finitely many equivalence classes under the action of $G$. Note that $G_\eta$ is a closed subgroup of $G$, hence compact. The condition that the index be finite is necessary for finding a $G$-invariant dominating representation.
\begin{lemma}
	Suppose $\eta \prec \widehat{\eta}$ for $\eta \in \bfm^{\alg{F}}(\spc{C})$, where $\widehat{\eta}$ commutes with the action of $G$. Then $[G:G_\eta] < \infty$.
	\label{eq:finindex}
\end{lemma}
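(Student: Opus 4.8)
The plan is to exploit the $G$-invariance of $\widehat{\eta}$ to show that \emph{every} translate ${}^g\eta$ is a subobject of the single object $\widehat{\eta}$, and then to use that $\widehat{\eta} \in \bfm^{\alg{F}}(\spc{C})$ has finite statistics, so that it admits only finitely many isomorphism classes of subobjects.

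First I would produce intertwiners for all the translates. Since $\eta \prec \widehat{\eta}$, there is an isometry $W \in \Hom_{\alg{F}}(\eta, \widehat{\eta})$, so that $W \eta(F) = \widehat{\eta}(F) W$ for all $F \in \alg{F}$ and $W^* W = I$. Applying the automorphism $\alpha_g$ to this identity, substituting $\alpha_g(F)$ for $F$, and using that $\widehat{\eta}$ commutes with the $G$-action (so that ${}^g\widehat{\eta} = \widehat{\eta}$), one obtains
\[
\alpha_g(W)\, {}^g\eta(F) = \widehat{\eta}(F)\, \alpha_g(W), \qquad F \in \alg{F}.
\]
As $\alpha_g(W)$ is again an isometry, because $\alpha_g(W)^* \alpha_g(W) = \alpha_g(W^* W) = I$, this shows $\alpha_g(W) \in \Hom_{\alg{F}}({}^g\eta, \widehat{\eta})$, i.e.\ ${}^g\eta \prec \widehat{\eta}$ for every $g \in G$. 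Here I also use that $\alpha_g$ is an automorphism of each local algebra $\alg{F}(\mc{O})$, so that ${}^g\eta = \alpha_g \circ \eta \circ \alpha_{g^{-1}}$ is again localised in $\spc{C}$ and hence an object of $\bfm^{\alg{F}}(\spc{C})$, consistently with Lemma~\ref{lem:gaction}.

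Next I would count subobjects. Because $\widehat{\eta} \in \bfm^{\alg{F}}(\spc{C})$ has finite statistics, $d(\widehat{\eta}) < \infty$, and by semi-simplicity $\widehat{\eta} \cong \bigoplus_{i=1}^{n} m_i\, \sigma_i$ for finitely many mutually inequivalent irreducibles $\sigma_i$; the sum is finite since $d(\widehat{\eta}) = \sum_i m_i d(\sigma_i)$ with each $d(\sigma_i) \geq 1$. Every subobject of $\widehat{\eta}$ is then isomorphic to $\bigoplus_i n_i\, \sigma_i$ with $0 \leq n_i \leq m_i$, so $\widehat{\eta}$ has only finitely many isomorphism classes of subobjects. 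In particular the set $\{\, [{}^g\eta] : g \in G \,\}$ is finite. By the observation recorded just before the statement, finiteness of this set of $G$-translates of $[\eta]$ is equivalent to $[G : G_\eta] < \infty$ (via the orbit--stabiliser correspondence), which is the desired conclusion.

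I do not expect a serious obstacle. The only point requiring care is the first step: verifying that $\alpha_g(W)$ genuinely intertwines ${}^g\eta$ and $\widehat{\eta}$, which means keeping track of where $\alpha_g$ and $\alpha_{g^{-1}}$ act and invoking the $G$-invariance of $\widehat{\eta}$ at the right moment, together with the remark that $\alpha_g$ respects the localisation region so that the translates remain objects of $\bfm^{\alg{F}}(\spc{C})$. Everything else follows directly from finite statistics and the orbit--stabiliser dictionary already stated before the lemma.
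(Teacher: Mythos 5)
Your proof is correct and takes essentially the same route as the paper: both exploit ${}^g\widehat{\eta} = \widehat{\eta}$ to place every translate ${}^g\eta$ inside $\widehat{\eta}$ and then bound the orbit of $[\eta]$ by the finitely many irreducible summands of $\widehat{\eta}$ (finiteness coming from finite statistical dimension and semi-simplicity), concluding via the orbit--stabiliser correspondence noted before the lemma. Your explicit intertwiner $\alpha_g(W)$ and subobject count merely make concrete what the paper phrases as $G$ permuting the summands $\eta_i$, with the minor bonus that you avoid the paper's reduction to irreducible $\eta$.
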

\begin{proof}
	Assume for simplicity that $\eta$ is irreducible; the general case readily follows. Decompose $\widehat{\eta} = \oplus_{i \in I} \eta_i$ where $I$ is some finite set. Then there is an $i \in I$ such that $\eta_i \cong \eta$, since $\eta \prec \widehat{\eta}$. Because ${}^g \widehat{\eta} = \widehat{\eta}$ for all $g \in G$, it follows that for every $g \in G$ there is some $j \in I$ such that ${}^g \eta_i \cong \eta_j$. As $g$ runs over $G$, $[{}^g \eta_i]$ runs over all equivalence classes ${}^g[\eta]$. It follows that there are at most $|I|$ such equivalence classes, or by the remark above: $[G:G_\eta] \leq |I|$.
\end{proof}

Our next goal is to construct a BF representation $\widehat{\eta}$ that commutes with the action of $G$, such that $\eta \prec \widehat{\eta}$. In other words: $\eta$ is a direct summand of $\widehat{\eta}$. Observe that it is enough to consider only summands $\eta_i \cong {}^{g_i}\eta$ for some $g_i \in G$. Now assume that $[G:G_\eta]$ is finite. Then there is a finite dimensional representation of $G$, permuting a basis of the space spanned by the left cosets $G\slash G_\eta$. Write $[g]$ for the coset of $g \in G$. Pick a representative $g_i$ of each coset. Since the field net has full $G$-spectrum, it is possible to find isometries $V_{[g_i]}$ such that $\alpha_g(V_{[g_i]}) = V_{[gg_i]}$ and the following relations hold:
\[
	V_{[g_i]}^* V_{[g_j]} = \delta_{i,j} I, \quad \sum_{[g_i] \in G\slash G_\eta} V_{[g_i]} V_{[g_i]}^* = I.
\]
Now if $g \in G$, there is a $g_j$ and a $h_j \in G_\eta$ such that $g g_i = g_j h_j$. Moreover, multiplication on the left induces a permutation on the cosets, hence also of the representatives $g_i$. Let $\widetilde{\eta}$ be such that $\eta \prec \widetilde{\eta}$. Consider $\widehat{\eta}(-) = \sum_{[g_i] \in G\slash G_\eta} V_{[g_i]} {}^{g_i}\widetilde{\eta}(-) V_{[g_i]}^*$. Then for $g \in G$,
\[
{}^g \widehat{\eta}(-) = \sum_{[g_i] \in G\slash G_\eta} \alpha_g(V_{[g_i]}) {}^{gg_i}\widetilde{\eta}(-) \alpha_g(V_{[g_i]})^* = \sum_{[g_i] \in G\slash G_\eta} V_{[g_i]} {}^{g_i}({}^{h_i}\widetilde{\eta}(-)) V_{[g_i]}^*,
\]
where $h_i$ is as above. So for $\widehat{\eta}$ to commute with the $G$-action, it is sufficient that ${}^h \widetilde{\eta} = \widetilde{\eta}$ for all $h \in G_\eta$. The existence of such a $\widetilde{\eta}$ is also necessary.

To find such an $\widetilde{\eta}$, by semi-simplicity of $\bfm^{\alg{F}}(\spc{C})$ it is enough to consider an irreducible $\eta$. We will do this in the rest of this section. By definition, for each $g \in G_\eta$ there is a unitary $v(g)$ such that ${}^g\eta(-) = v(g) \eta(-) v(g)^*$. By considering ${}^{gh}\eta = {}^g({}^h\eta)$ and using that $\eta$ is irreducible, it follows that
\[
	v(gh) = c(g,h) \alpha_g( v(h)) v(g), \quad g,h \in G_\eta,
\]
where $c(g,h)$ is a complex number of modulus one. In fact, it is not difficult to show that $c(g,h)$ is a 2-cocycle, with equivalence class $[c] \in H^2(G_\eta, \mathbb{T})$. The cohomology class does not depend on the specific choice of unitaries $v(g)$ and is the same for each $\eta' \cong \eta$. Hence $(G_\eta, [c])$ can be seen as an invariant of the sector. If $[c]$ is the trivial cohomology class, $v(g)$ is in fact an $\alpha$-one-cocycle and we can construct an $\eta' \cong \eta$ that commutes with the action of $G_\eta$, just as in the proof of Corollary~\ref{cor:restrict}.

The following observation, which amounts to the fact that the direct sum is independent of the chosen basis, turns out to be convenient.
\begin{lemma}
  Let $\eta \in \bfm^{\alg{F}}(\spc{C})$ be irreducible. Consider two direct sums of copies of $\eta$, $\widehat{\eta} = \sum_{i=1}^n V_i \eta(-) V_i^*$ and $\widehat{\eta}' = \sum_{i=1}^n W_i \eta(-) W_i^*$. Then $\widehat{\eta} = \widehat{\eta}'$ if and only if there is a unitary $n \times n$ matrix $\lambda$ such that $W_i = \sum_{i=1}^n \lambda_{ji} V_j$.
\end{lemma}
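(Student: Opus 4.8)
The plan is to prove both implications directly, with the irreducibility of $\eta$ supplying the one genuinely structural input via Schur's lemma. For the easy direction I would assume $W_i = \sum_j \lambda_{ji} V_j$ with $\lambda$ unitary and simply substitute into the definition of $\widehat{\eta}'$. Using the orthogonality relations $V_j^* V_k = \delta_{jk} I$, this gives
\[
\widehat{\eta}'(A) = \sum_i W_i \eta(A) W_i^* = \sum_{j,k}\Bigl(\sum_i \lambda_{ji}\overline{\lambda_{ki}}\Bigr) V_j \eta(A) V_k^*,
\]
and since $\sum_i \lambda_{ji}\overline{\lambda_{ki}} = (\lambda\lambda^*)_{jk} = \delta_{jk}$, the off-diagonal terms drop out and the right-hand side collapses to $\sum_j V_j \eta(A) V_j^* = \widehat{\eta}(A)$. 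No property of $\eta$ beyond the isometry relations is needed here.

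For the converse I would first record that each $V_j$ is an isometric intertwiner in $\Hom_{\alg{F}}(\eta, \widehat{\eta})$: indeed $\widehat{\eta}(A) V_j = \sum_i V_i \eta(A) V_i^* V_j = V_j \eta(A)$, and the same computation shows $W_i \in \Hom_{\alg{F}}(\eta, \widehat{\eta}')$. Taking adjoints of the relation for $V_j$ (and using that all maps are $^*$-homomorphisms) yields the companion relation $V_j^* \widehat{\eta}(A) = \eta(A) V_j^*$. Assuming now $\widehat{\eta} = \widehat{\eta}'$, I would form the products $V_j^* W_i$ and compute, for every $A$,
\[
(V_j^* W_i)\,\eta(A) = V_j^* \widehat{\eta}'(A) W_i = V_j^* \widehat{\eta}(A) W_i = \eta(A)\,(V_j^* W_i),
\]
so that $V_j^* W_i \in \End_{\alg{F}}(\eta)$. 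Since $\eta$ is irreducible, $\End_{\alg{F}}(\eta) = \mathbb{C} I$, and hence $V_j^* W_i = \lambda_{ji} I$ for scalars $\lambda_{ji}$.

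It then remains to extract the two claimed facts. Multiplying $W_i$ on the left by the resolution of the identity $\sum_j V_j V_j^* = I$ gives $W_i = \sum_j V_j (V_j^* W_i) = \sum_j \lambda_{ji} V_j$, which is the desired formula. To see that $\lambda = (\lambda_{ji})$ is unitary I would use the orthonormality of the $W_i$: since $W_i^* W_k = \delta_{ik} I$, expanding in terms of the $V_j$ yields $\delta_{ik} I = \sum_j \overline{\lambda_{ji}} \lambda_{jk}\, I = (\lambda^*\lambda)_{ik} I$, so $\lambda^*\lambda = I$, and a square matrix satisfying this is unitary.

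I do not expect a serious obstacle: the argument is essentially bookkeeping organised around Schur's lemma. The only point requiring a little care is to verify that $V_j^* W_i$ genuinely lies in the morphism space $\End_{\alg{F}}(\eta)$ so that irreducibility applies — this rests on handling the adjoint intertwining relation correctly and on the fact that the $V_j, W_i$ live in the (auxiliary) algebra in which the morphisms of $\bfm^{\alg{F}}(\spc{C})$ are defined, so that their products are again admissible intertwiners.
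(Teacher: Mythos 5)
Your proof is correct and follows essentially the same route as the paper's: define $\lambda_{ji} = V_j^* W_i$, invoke irreducibility ($\End_{\alg{F}}(\eta) \cong \mathbb{C}$) to see these are scalars, and verify unitarity and the expansion $W_i = \sum_j \lambda_{ji} V_j$ via the isometry relations. The paper leaves these verifications as ``straightforward'' or ``easy'' calculations; you have simply written them out, including the adjoint intertwining relation $V_j^*\widehat{\eta}(A) = \eta(A)V_j^*$ needed to place $V_j^* W_i$ in $\End_{\alg{F}}(\eta)$.
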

\begin{proof}
  ($\Rightarrow$) Define $\lambda_{ij}  = V_i^* W_j$, then $\lambda_{ij} \in \End_\alg{F}(\eta) \cong \mathbb{C}$, by irreducibility of $\eta$. By a straightforward calculation one easily verifies that $\lambda$ is indeed a unitary matrix, and $W_i = \sum_{i=n}^n \lambda_{ji} V_j$.

($\Leftarrow$) Easy calculation.
\end{proof}

Now suppose we have a direct sum $\widehat{\eta}(A) = \sum_{i=1}^n V_i \eta(A) V_i^*$. An easy calculation then shows that for $g \in G_\eta$:
\[
{}^g \widehat{\eta}(-) = \sum_{i=1}^n \alpha_g(V_i) v(g) \eta(-) v(g)^* \alpha_g(V_i^*),
\]
where the $v(g)$ are unitaries as above. Because $v(g)$ is unitary, it follows that $\alpha_g(V_i) v(g)$ is a basis of $\Hom_{\alg{F}}(\eta, {}^g\widehat{\eta})$. This space has a Hilbert space structure, defining an inner product by $\langle V,W \rangle I = W^*V$ for $V,W \in \Hom(\eta, {}^{g}\widehat{\eta})$. Combining this with the previous observations, we find the following necessary and sufficient criterion.
\begin{prop}
	\label{prop:ginv}
	There is a $G$-equivariant (i.e., commuting with the action of $G$) dominating sector $\widehat{\eta} \succ \eta$ if and only if the following conditions hold:
	\begin{enumerate}
		\item the stabiliser group $G_\eta$ has finite index in $G$, i.e. $[G:G_\eta] < \infty$,
		\item there is a finite-dimensional non-trivial Hilbert space $\mathcal{H}$ in $\alg{F}$ such that $\alpha_g(V) v(g) \in \mathcal{H}$ for all $V \in \mc{H}$ and $g \in G_\eta$.
	\end{enumerate}
\end{prop}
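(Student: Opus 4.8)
The plan is to prove the two implications separately, working throughout with an irreducible $\eta$, since the general case follows by semi-simplicity as already noted. The organising device is the assignment $U(g)V = \alpha_g(V)v(g)$, which I would first study as a map between spaces of intertwiners out of $\eta$. Using $v(gh) = c(g,h)\alpha_g(v(h))v(g)$ and $|c(g,h)| = 1$ one finds $U(g)U(h) = \overline{c(g,h)}\,U(gh)$, so $U$ is a projective representation of $G_\eta$ with cocycle $\overline{c}$; this explains why one cannot in general keep $\eta$ itself fixed but must pass to a multiplicity space. Moreover, since $W^*V \in \mathbb{C}I$ for intertwiners out of the irreducible $\eta$, a one-line computation gives $(U(g)W)^*(U(g)V) = v(g)^*\alpha_g(W^*V)v(g) = W^*V$, so each $U(g)$ is isometric and hence unitary on any finite-dimensional invariant subspace. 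This unitarity is what later makes the lemma on independence of the basis applicable.

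For necessity, assume $\widehat{\eta} \succ \eta$ is $G$-equivariant. Condition (i) is precisely Lemma~\ref{eq:finindex}. For condition (ii) I would take $\mathcal{H} = \Hom_{\alg{F}}(\eta, \widehat{\eta})$, a Hilbert space in the sense used above, which is finite-dimensional (the Hom-sets are finite-dimensional) and non-zero (because $\eta \prec \widehat{\eta}$). For $V \in \mathcal{H}$ and $g \in G_\eta$ one has $\alpha_g(V) \in \Hom({}^g\eta, {}^g\widehat{\eta}) = \Hom({}^g\eta, \widehat{\eta})$ by $G$-invariance, while $v(g) \in \Hom(\eta, {}^g\eta)$; hence $U(g)V = \alpha_g(V)v(g) \in \Hom(\eta, \widehat{\eta}) = \mathcal{H}$, which is exactly condition (ii).

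For sufficiency, start from $\mathcal{H}$ as in (ii), pick an orthonormal basis $V_1, \dots, V_n$ and set $\widetilde{\eta}(-) = \sum_{i} V_i \eta(-) V_i^*$, so that $\eta \prec \widetilde{\eta}$. For $h \in G_\eta$ the identity recorded before the proposition gives ${}^h\widetilde{\eta}(-) = \sum_i (U(h)V_i)\eta(-)(U(h)V_i)^*$; since $U(h)$ is unitary on $\mathcal{H}$, the family $\{U(h)V_i\}$ is again an orthonormal basis of $\mathcal{H}$, and the lemma on independence of the basis yields ${}^h\widetilde{\eta} = \widetilde{\eta}$. Thus $\widetilde{\eta}$ is a $G_\eta$-fixed sector dominating $\eta$. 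Invoking condition (i), the finite coset space $G/G_\eta$ carries a finite-dimensional permutation representation of $G$, so the full $G$-spectrum of the field net provides isometries $V_{[g_i]}$ with $\alpha_g(V_{[g_i]}) = V_{[gg_i]}$, $V_{[g_i]}^* V_{[g_j]} = \delta_{ij}I$ and $\sum V_{[g_i]} V_{[g_i]}^* = I$. Setting $\widehat{\eta}(-) = \sum_{[g_i]} V_{[g_i]}\, {}^{g_i}\widetilde{\eta}(-)\, V_{[g_i]}^*$ and using ${}^h\widetilde{\eta} = \widetilde{\eta}$ for $h \in G_\eta$ gives, by the computation preceding the proposition, a $G$-equivariant $\widehat{\eta} \succ \eta$.

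The step I expect to require the most care is the last one: turning the algebraic data of (ii) into a genuine cone-localised, transportable object of $\bfm^{\alg{F}}(\spc{C})$ while retaining \emph{strict} $G$-invariance. The basis vectors supplied by (ii) need not be localised, so one must transport $\widetilde{\eta}$ into $\spc{C}$ and realise the coset isometries inside suitable field algebras, exactly as in the direct-sum construction of Section~\ref{sec:bf}, and then check that $G$-equivariance is not disturbed by these choices. The finiteness of $[G:G_\eta]$ is precisely what keeps the induced representation finite-dimensional, hence a legitimate object of the category, so conditions (i) and (ii) enter the construction in complementary and essentially independent ways.
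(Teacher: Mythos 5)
Your argument follows the paper's own route almost step for step: necessity of (i) is Lemma~\ref{eq:finindex}; the passage between condition (ii) and the existence of a $G_\eta$-fixed dominating $\widetilde{\eta}$ goes through the unitarity of $V \mapsto \alpha_h(V)v(h)$ on the intertwiner space together with the basis-independence lemma; and full $G$-equivariance is recovered by the coset induction with the isometries $V_{[g_i]}$. Your explicit verifications that $U(g)U(h) = \overline{c(g,h)}\,U(gh)$ and that each $U(g)$ is isometric are correct and make precise what the paper only alludes to in its closing remark on the projective representation carried by $\mc{H}$. The sufficiency direction, as you present it, is sound.

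There is, however, a genuine gap in your necessity argument for (ii), precisely at the point where the paper is tersest (``The existence of such a $\widetilde{\eta}$ is also necessary''). You take $\mc{H} = \Hom_{\alg{F}}(\eta, \widehat{\eta})$. This space indeed satisfies $W^*V \in \mathbb{C}I$ and is $U(G_\eta)$-stable, but it is in general \emph{not} a Hilbert space in $\alg{F}$ in the sense of Definition~\ref{def:hspace}: for an orthonormal basis $V_i$ of a Hilbert space in $\alg{F}$ one has $\sum_i V_i V_i^* = I$, whereas for your $\mc{H}$ the sum $\sum_i V_i V_i^*$ is the projection onto the $\eta$-isotypic part of $\widehat{\eta}$. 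This projection is proper whenever $\widehat{\eta}$ contains an irreducible summand inequivalent to $\eta$ --- which is \emph{forced} whenever $[G:G_\eta] > 1$, since $G$-equivariance of $\widehat{\eta}$ makes every translate ${}^g\eta$ a subobject of $\widehat{\eta}$. The support condition is not cosmetic: your own sufficiency direction needs $\sum_i V_iV_i^* = I$ to make $\widetilde{\eta}(-) = \sum_i V_i \eta(-) V_i^*$ unital, so as written the two implications of your proof use two different readings of (ii). A repair is possible but requires real input: set $E = \sum_j W_j W_j^*$ for an orthonormal basis $W_j$ of $\Hom_{\alg{F}}(\eta,\widehat{\eta})$ and note $\alpha_h(E) = E$ for $h \in G_\eta$ (by unitarity of $U(h)$); choose by Property B an isometry $W$ with $WW^* = E$; then $h \mapsto \alpha_h(W)^* W$ is an $\alpha$-1-cocycle for $G_\eta$, and Theorem~\ref{thm:cobound} (which applies to the closed subgroup $G_\eta$ as well, the field net having full $G_\eta$-spectrum by restriction, and granted measurability of $h \mapsto v(h)$) yields $\alpha_h(W)^*W = \alpha_h(w)w^*$ for a unitary $w$ in a larger cone algebra. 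The isometry $W' = Ww$ is then $G_\eta$-invariant with $W'W'^* = E$, and $\mc{H} = W'^*\Hom_{\alg{F}}(\eta,\widehat{\eta})$ has support $I$ and satisfies $\alpha_h(W'^*W_j)v(h) = W'^*\bigl(U(h)W_j\bigr) \in \mc{H}$, which is condition (ii) in the correct sense. So the gap is repairable, but the repair genuinely invokes the coboundary theorem, which your proposal never uses.
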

We end this section with a few remarks. First of all, the author unfortunately does not know of any physical interpretation of the conditions in the proposition. Furthermore it seems to be difficult to verify these conditions. However, the proposition generalises the situation where $G$ is finite. In this case, the conditions are trivially satisfied. If one can show that the cocycle $c(g,h)$ is trivial (as a cocycle in $H^2(G_\eta, \mathbb{T})$), it follows by Theorem~\ref{thm:cobound} that there is a unitary $w$ such that $v(g) = \alpha_g(w) w^*$. Condition (ii) is then satisfied by taking the one-dimensional Hilbert space spanned by $w$. Using Theorem~\ref{thm:cobound} one can show that $c(g,h)$ is trivial as a cocycle in the field net, which, however, is not sufficient here.

As a final remark, suppose that condition (ii) is satisfied. It follows that there is Hilbert space in $\alg{F}$ carrying a \emph{projective} unitary representation. Indeed, choose an orthonormal basis $V_i$ of $\mc{H}$. Then for $g \in G_\eta$, $\alpha_g(V_i) v(g)$ is a new basis for $\mc{H}$. Write $\lambda(g)$ for the unitary transformation that implements the basis change. It follows that $\lambda(gh) = c(g,h) \lambda(g) \lambda(h)$.

\section{Conclusions and open problems}
\label{sec:conclusions}
It would be desirable to arrive at a modular category starting from an AQFT in three dimensions, for example because of their relevance to topological quantum computing. In this paper some steps in this direction are taken. In particular, the category of stringlike localised or BF representations has many of the properties of a modular category. The existence of DHR sectors, which cannot be ruled out a priori, is shown to be an obstruction for modularity. To remove this obstruction, the original theory $\alg{A}$ is extended to the field net $\alg{F}$, which can be seen as a new AQFT without DHR sectors. The relation between those theories is partially made clear, in particular by the crossed product construction of Section~\ref{sec:crossed}. There is, however, one point that is not fully understood, namely the question whether the sectors in the new theory $\alg{F}$ can be completely described by the sectors of the theory $\alg{A}$. This is the case if for example $G$ is finite, or the conditions of Proposition~\ref{prop:ginv} hold for each BF sector of $\alg{F}$. In this case, the sectors of $\alg{F}$ are completely determined by the crossed product $\bfm^{\alg{A}} \rtimes \dhr^{\alg{A}}(\spc{C})$.

Although one major obstruction for modularity has now been removed, this is not enough to conclude that $\bfm^{\alg{F}}(\spc{C})$ is modular. In particular, there may be degenerate BF (but not DHR) sectors of $\alg{F}$. The other condition is that there should be only finitely many equivalence classes of BF representations of $\alg{F}$. In case the functor $H$ of Section~\ref{sec:esssur} is indeed an equivalence, both properties are determined by the crossed product, and hence ultimately by $\bfm^{\alg{A}}(\spc{C})$. In particular, in this situation, absence of degenerate sectors in $\bfm^{\alg{F}}(\spc{C})$ is equivalent to the absence of degenerate objects in  $\bfm^{\alg{A}}(\spc{C}) \rtimes \dhr^{\alg{A}}(\spc{C})$. This is essentially because $H$ is a \emph{braided} functor, which makes it possible to transfer the degeneracy condition of the braiding from one category to the other. The absence of degenerate objects of $\bfm^{\alg{A}}(\spc{C}) \rtimes \dhr^{\alg{A}}(\spc{C})$ is equivalent to the absence of degenerate \emph{BF} sectors (that are not DHR) of $\alg{A}$, since by~\cite{MR1749250} the crossed product has trivial centre if and only if $\dhr^{\alg{A}}(\spc{C})$ is equal to the centre of $\bfm^{\alg{A}}(\spc{C})$. The finiteness condition would follow by counting arguments from finiteness of $\bfm^{\alg{A}}(\spc{C})$. 

We give a list of some open problems and questions.
\begin{enumerate}
	\item In view of the remarks above, it would be interesting to understand the set of BF (that are not DHR) sectors of $\alg{A}$. In particular, are there conditions that imply that this set is finite, or does not contain any degenerate sectors? As for the latter: in the DHR case a condition for this was given in~\cite{MR1721563}. Perhaps this condition might be adapted to the case of BF sectors. It should be noted that both conditions (i.e. non-degeneracy and finiteness) are completely understood in the case of conformal field theory on the circle, in terms of an index of certain subfactors~\cite{MR1838752}. That method, however, cannot obviously be adapted to the case we are interested in, among other reasons because we have no condition for factoriality of the relevant algebras of observables. However, it would be interesting to know if there is an analogue of the condition of ``complete rationality'' that ensures modularity.
	\item It would be desirable to have a physical interpretation for the conditions given in Section~\ref{sec:esssur}. This might give some hints on how to prove these conditions in concrete theories.
	\item One of our assumptions was the absence of fermionic DHR sectors of $\alg{A}$. It would be interesting to see what can still be done if this assumption is dropped. In this case, the field net does not satisfy locality, but only \emph{twisted} locality. Thus one would lose the interpretation of $\alg{F}$ as an AQFT in the sense that it should only consist of observables commuting at spacelike distances.
	\item Can the techniques be useful in describing quantum spin systems? Such systems are more appropriate for topological quantum computing than relativistic quantum field theories, see e.g.~\cite{MR1951039}. There is some evidence that points into this direction~\cite{toricendo}. In particular, it can be shown that in Kitaev's $\mathbb{Z}_2$ model on the plane, single excitations can be described by automorphisms of an observable algebra. These automorphisms fulfill a selection criterion similar to the BF criterion. Moreover, they are localised and transportable, and using the methods here, one can explicitly calculate the statistics of these excitations. The results are consistent with Kitaev's results~\cite{MR1951039}. Although this simple model is by no means sufficient for quantum computing, it might be possible to extend the methods to more interesting models.
\end{enumerate}

\appendix
\section{}
In this appendix we collect some of the terminology regarding (tensor) categories and notions of superselection theory that will be used throughout the article. Due to lack of space, we restrict to the essentials. In particular, the categorical concepts can be defined much more generally than necessary for our purposes. For the essentials of category theory, details can be found in the book by Mac Lane~\cite{MR1712872}. For the structure of categories appearing in algebraic quantum field theory, see~\cite{mmappendix}. Modular categories are described in~\cite{MR1797619}. An overview of superselection theory can be found in the book of Haag~\cite{MR1405610}.

\subsection{Superselection theory}
A \emph{sector} is an unitary equivalence class of representations (satisfying some selection criterion such as the DHR or BF criterion) of the observable algebra. Representations satisfying the BF or DHR criterion can be described by localised and transportable endomorphisms of the observable algebra. Sometimes we will identify such an endomorphism $\rho$ with its sector, i.e., all unitary equivalent localised endomorphisms. These endomorphisms are the objects of a category, with intertwiners as morphisms. An intertwiner from $\eta_1$ to $\eta_2$ (and hence a morphism in $\Hom(\eta_1, \eta_2))$ is an operator $T$ such that $T \eta_1(A) = \eta_2(A) T$ for all observables $A$. There is a natural tensor product $\otimes$ (defined by composition of endomorphisms) on this category.

Another important concept is that of a \emph{conjugate sector}. A conjugate of a DHR or BF representation can be interpreted as an ``anti-charge''. Formally, a conjugate for a BF (or DHR) representation $\rho$ is a triple $(\overline{\rho}, R, \overline{R})$, where $\overline{\rho}$ is a BF (resp. DHR) representation. The operators $R, \overline{R}$ are intertwiners satisfying $R \in \Hom(\iota, \overline{\rho} \otimes \rho)$, with $\iota$ the trivial endomorphism, and $\overline{R} \in \Hom(\iota, \rho \otimes \overline{\rho})$ such that
\[
\overline{R}^*\rho(R) = I, \quad R^* \overline{\rho}(\overline{R}) = I, 
\] 
where $I$ is the unit of the observable algebra. If a conjugate exists, one can always choose a \emph{standard} conjugate. A conjugate $(\overline{\rho}, R, \overline{R})$ is called standard if $R^* \overline{\rho}(S) R = \overline{R}^* S \overline{R}$ for all $S \in \Hom(\rho,\rho)$. The conditions for a conjugate imply that $\overline{\rho} \otimes \rho$ (and $\rho \otimes \overline{\rho})$ contain a copy of the vacuum sector. A conjugate exists if and only if the sector has finite (statistics) dimension. The latter is then given by $d(\rho) I = R^*R$ with $R$ standard. Conjugates are intimately related to the statistics of a sector. It should be noted that conjugates can be defined in a much more general categorical setting, e.g.~\cite{MR1444286}.

In the category of BF representations, a braiding $\varepsilon_{\rho,\eta} \in \Hom(\rho \otimes \eta, \eta \otimes \rho)$ for every pair of objects $\rho, \eta$ can be defined. A sector is called \emph{degenerate} if, roughly speaking, it has trivial braiding with all objects. More precisely, $\rho$ is degenerate if and only if $\varepsilon_{\rho,\eta} \circ \varepsilon_{\eta, \rho} = I$ for all objects $\eta$. If this holds for a particular representative of a sector, it holds for all representatives of the sector. An object of the form $\iota \oplus \dots \oplus \iota$ is always degenerate. Degenerate sectors of this form are called \emph{trivial}.

\subsection{Category theory}
Let $F: \mc{C} \to \mc{D}$ be a functor. Then, for every pair of objects $\rho,\sigma$ in $\mc{C}$, there is a map $F_{\rho,\sigma} : \Hom(\rho,\sigma) \to \Hom(F(\rho), F(\sigma))$ defined by $S \mapsto F(S)$ for $S \in \Hom(\rho,\sigma)$. The functor $F$ is called \emph{faithful}, if $F_{\rho,\sigma}$ is injective for each pair of objects $\rho,\sigma$. Likewise, if it is surjective for all pairs, it is called \emph{full}. Note that a faithful functor is not necessarily injective on objects, that is, it might happen that $F(\rho) = F(\sigma)$ for distinct objects $\rho$ and $\sigma$ of $\mc{C}$. A faithful functor that is also injective on objects, is called an \emph{embedding}.\footnote{Note, however, that for some authors an embedding functor is only a faithful functor.} In particular, subcategories give rise to embedding functors. A subcategory of a category $\mc{C}$ is a category that contains a collection of the objects and morphisms of $\mc{C}$. A subcategory is called \emph{full} if it has the same morphisms as the bigger category, hence in that case it is completely determined by specifying its objects. Finally, a functor $F: \mc{C} \to \mc{D}$ is called an \emph{equivalence of categories} if it is full, faithful and essentially surjective, which means that for each object $D$ of $\mc{D}$, there is an object $C$ of $\mc{C}$ such that $F(C)$ is isomorphic to $D$. From a categorical perspective, equivalent categories are ``essentially the same''.

Certain categories admit a tensor (or monoidal) product $\otimes$. That is, one can form tensor products of objects and morphisms. In a tensor category there is a \emph{tensor unit} $\iota$, such that $\rho \cong \iota \otimes \rho \cong \rho \otimes \iota$, where $\cong$ means isomorphic in the category. Associativity is described by natural isomorphisms $\alpha_{\rho,\sigma,\tau} : \rho \otimes (\sigma \otimes \tau) \to  (\rho \otimes \sigma) \otimes \tau$ satisfying certain coherence conditions. A tensor category is called \emph{strict} if the associativity morphisms reduce to the identity, and $\rho \otimes \iota = \iota \otimes \rho = \rho$ for all objects $\rho$. The categories encountered in this paper are all strict. 

Every tensor category is monoidally equivalent to a strict tensor category. That is, there is a tensor functor between the two categories, that is also an equivalence of categories. A tensor functor is a functor $F$ together with natural isomorphisms $F(\rho \otimes \sigma) \to F(\rho) \otimes F(\sigma)$, and similarly for the tensor unit. Again, the functor is called strict if the isomorphisms are all identities. Even between \emph{strict} tensor categories, however, it might be necessary to consider non-strict tensor functors. In case both categories have a braiding, a \emph{braided} tensor functor $F: \mc{C} \to \mc{D}$ is a functor such that $F(\varepsilon^\mc{C}_{\rho,\sigma}) = \varepsilon_{F(\rho), F(\sigma)}^\mc{D}$ (or a suitably modified condition if the categories are not strict), where $\varepsilon^{\mc{C}}_{\rho,\sigma}$ is the braiding of $\mc{C}$.

The category of stringlike localised representations is called \emph{modular}, if it has only finitely many equivalence classes of representations and the centre (with respect to the braiding) is trivial. The latter condition is the statement that if $\varepsilon_{\rho, \eta} \circ \varepsilon_{\eta, \rho} = I$ for each object $\rho$, then $\eta = \iota \oplus \cdots \oplus \iota$, i.e., it is a direct sum of trivial endomorphisms. A modular category satisfies additional axioms (for example the existence of duals or conjugates), but these are automatically satisfied by the category of BF representations. The non-degeneracy condition is equivalent to Turaev's condition on a modular category~\cite{MR1292673}, which is stated in terms of invertibility of a certain matrix $S$, by a result of Rehren~\cite{MR1147467}.
\\
\\
{\bf Acknowledgements} This research is funded by NWO grant no. 613.000.608, which is gratefully acknowledged. I would also like to thank Michael M\"uger for valuable discussions and suggestions, and Klaas Landsman for a critical reading of the manuscript.
\bibliographystyle{abbrv}
\bibliography{refs}
\end{document}